\title{Approximating Highly Inapproximable Problems on Graphs of Bounded Twin-Width}
\titlerunning{Approximating Highly Inapproximable Problems on Graphs of Bounded Twin-Width}
\author{Pierre Bergé}{Univ Lyon, CNRS, ENS de Lyon, Université Claude Bernard Lyon 1, LIP UMR5668, France}{pierre.berge@ens-lyon.fr}{}{}
\author{\'{E}douard Bonnet}{Univ Lyon, CNRS, ENS de Lyon, Université Claude Bernard Lyon 1, LIP UMR5668, France \and \url{http://perso.ens-lyon.fr/edouard.bonnet/}}{edouard.bonnet@ens-lyon.fr}{https://orcid.org/0000-0002-1653-5822}{}
\author{Hugues Déprés}{Univ Lyon, CNRS, ENS de Lyon, Université Claude Bernard Lyon 1, LIP UMR5668, France \and \url{http://perso.ens-lyon.fr/hugues.depres/}}{hugues.depres@ens-lyon.fr}{}{}
\author{Rémi Watrigant}{Univ Lyon, CNRS, ENS de Lyon, Université Claude Bernard Lyon 1, LIP UMR5668, France}{remi.watrigant@ens-lyon.fr}{}{}
\authorrunning{P. Bergé, \'E. Bonnet, H. Déprés, R. Watrigant}
\keywords{Approximation algorithms, bounded twin-width}
\newtheorem*{rep@theorem}{\rep@title}
\newcommand{\newreptheorem}[2]{%
\newenvironment{rep#1}[1]{%
 \def\rep@title{#2 \ref{##1}}%
 \begin{rep@theorem}}%
 {\end{rep@theorem}}}
\crefname{observation}{Observation}{Observations}
\tikzset{draw half paths/.style 2 args={%
  decoration={show path construction,
    lineto code={
      \draw [#1] (\tikzinputsegmentfirst) -- 
         ($(\tikzinputsegmentfirst)!0.5!(\tikzinputsegmentlast)$);
      \draw [#2] ($(\tikzinputsegmentfirst)!0.5!(\tikzinputsegmentlast)$)
        -- (\tikzinputsegmentlast);
    }
  }, decorate
}}
\renewcommand{\leq}{\leqslant}
\renewcommand{\le}{\leq}
\newcommand{\mds}{\textsc{Min Dominating Set}\xspace}
\newcommand{\mis}{\textsc{Max Independent Set}\xspace}
\newcommand{\wis}{\textsc{Weighted Max Independent Set}\xspace}
\newcommand{\smis}{\textsc{MIS}\xspace}
\newcommand{\swis}{\textsc{WMIS}\xspace}
\newcommand{\msim}{\textsc{Max Subset Induced Matching}\xspace}
\newcommand{\mim}{\textsc{Max Induced Matching}\xspace}
\newcommand{\mids}{\textsc{Min Independent Dominating Set}\xspace}
\newcommand{\lpath}{\textsc{Longest Path}\xspace}
\newcommand{\ipath}{\textsc{Longest Induced Path}\xspace}
\newcommand{\mif}{\textsc{Max Edge Induced Forest}\xspace}
\newcommand{\misf}{\textsc{Max Edge Induced Star Forest}\xspace}
\newcommand{\mlisf}{\textsc{Max Leaves Induced Star Forest}\xspace}
\newcommand{\mihp}[1]{\textsc{Mutually Induced $#1$-packing}\xspace}
\newcommand{\aihp}[1]{\textsc{Annotated Mutually Induced $#1$-packing}\xspace}
\newcommand{\mihph}{\textsc{Mutually Induced $H$-packing}\xspace}
\newcommand{\coloring}{\textsc{Coloring}\xspace}
\newcommand{\setcoloring}{\textsc{Set Coloring}\xspace}
\newcommand{\regu}{cleanup\xspace}
\newcommand{\fullregu}{full cleanup\xspace}
\newcommand{\fullregus}{full cleanups\xspace}
\newcommand{\defoptproblem}[3]{
 \vspace{1mm}
\noindent\fbox{
 \begin{minipage}{0.96\textwidth}
 #1 \\
 {\bf{Input:}} #2 \\
 {\bf{Output:}} #3
 \end{minipage}
 }
 \vspace{1mm}
}
\theoremstyle{definition}
\newcommand{\tww}{\text{tww}}
\renewcommand{\P}{{\mathcal P}}
\renewcommand{\R}{\mathbb R}
\newcommand{\Q}{\mathbb Q}
\begin{document}

\maketitle

\begin{abstract}
 For any $\varepsilon > 0$, we give a polynomial-time $n^\varepsilon$-approximation algorithm for \textsc{Max Independent Set} in graphs of bounded twin-width given with an $O(1)$-sequence.
  This result is derived from the following time-approximation trade-off: We establish an $O(1)^{2^q-1}$-approximation algorithm running in time $\exp(O_q(n^{2^{-q}}))$, for every integer $q \geqslant 0$.
  Guided by the same framework, we obtain similar approximation algorithms for \textsc{Min Coloring} and \textsc{Max Induced Matching}.
  In general graphs, all these problems are known to be highly inapproximable: for any $\varepsilon > 0$, a~polynomial-time $n^{1-\varepsilon}$-approximation for any of them would imply that P$=$NP [H{\aa}stad, FOCS '96; Zuckerman, ToC '07; Chalermsook et al., SODA '13].
  We generalize the algorithms for \textsc{Max Independent Set} and \textsc{Max Induced Matching} to the independent (induced) packing of any fixed connected graph $H$.
  
  In contrast, we show that such approximation guarantees on graphs of bounded twin-width given with an $O(1)$-sequence are very unlikely for \textsc{Min Independent Dominating Set}, and somewhat unlikely for \textsc{Longest Path} and \textsc{Longest Induced Path}.
  Regarding the existence of better approximation algorithms, there is a (very) light evidence that the obtained approximation factor of $n^\varepsilon$ for \textsc{Max Independent Set} may be best possible.
  This is the first in-depth study of the approximability of problems in graphs of bounded twin-width.
  Prior to this paper, essentially the only such result was a~polynomial-time $O(1)$-approximation algorithm for \textsc{Min Dominating Set} [Bonnet et al., ICALP '21].
\end{abstract}

\section{Introduction}

Twin-width is a graph parameter introduced by Bonnet, Kim, Thomassé, and Watrigant~\cite{twin-width1}.
Its definition involves the notions of \emph{trigraphs} and of \emph{contraction sequences}.
A~\emph{trigraph} is a graph with two types of edges: black (regular) edges and red (error) edges.
A~(vertex) \emph{contraction} consists of merging two (non-necessarily adjacent) vertices, say, $u, v$ into a~vertex~$w$, and keeping every edge $wz$ black if and only if $uz$ and $vz$ were previously black edges.
The other edges incident to $w$ become red (if not already), and the rest of the trigraph remains the same.
A~\emph{contraction sequence} of an $n$-vertex\footnote{In this introduction, we might implicitly use $n$ to denote the number of vertices, and $m$, the number of edges of the graph at hand.} graph $G$ is a sequence of trigraphs $G=G_n,$ $\ldots, G_1=K_1$ such that $G_i$ is obtained from $G_{i+1}$ by performing one contraction.
A~\mbox{\emph{$d$-sequence}} is a contraction sequence in which every vertex of every trigraph has at most $d$ red edges incident to it.
The~\emph{twin-width} of $G$, denoted by $\tww(G)$, is then the minimum integer~$d$ such that $G$ admits a $d$-sequence.
\Cref{fig:contraction-sequence} gives an example of a graph with a 2-sequence, i.e., of twin-width at most~2.
Twin-width can be naturally extended to matrices (with unordered~\cite{twin-width1} or ordered~\cite{twin-width4} row and column sets) over a finite alphabet, and thus to binary structures.

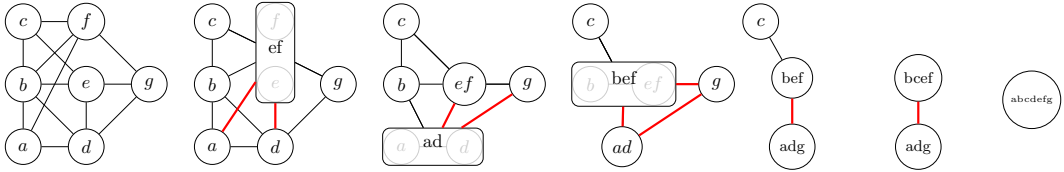
\begin{figure}[h!]
  \centering
  \resizebox{400pt}{!}{
  \begin{tikzpicture}[
      vertex/.style={circle, draw, minimum size=0.68cm}
    ]
    \def\s{1.2}
    \foreach \i/\j/\l in {0/0/a,0/1/b,0/2/c,1/0/d,1/1/e,1/2/f,2/1/g}{
      \node[vertex] (\l) at (\i * \s,\j * \s) {$\l$} ;
    }
    \foreach \i/\j in {a/b,a/d,a/f,b/c,b/d,b/e,b/f,c/e,c/f,d/e,d/g,e/g,f/g}{
      \draw (\i) -- (\j) ;
    }

    \begin{scope}[xshift=3 * \s cm]
    \foreach \i/\j/\l in {0/0/a,0/1/b,0/2/c,1/0/d,2/1/g}{
      \node[vertex] (\l) at (\i * \s,\j * \s) {$\l$} ;
    }
    \foreach \i/\j/\l in {1/1/e,1/2/f}{
      \node[vertex,opacity=0.2] (\l) at (\i * \s,\j * \s) {$\l$} ;
    }
    \node[draw,rounded corners,inner sep=0.01cm,fit=(e) (f)] (ef) {ef} ;
    \foreach \i/\j in {a/b,a/d,b/c,b/d,b/ef,c/ef,c/ef,d/g,ef/g,ef/g}{
      \draw (\i) -- (\j) ;
    }
    \foreach \i/\j in {a/ef,d/ef}{
      \draw[red, very thick] (\i) -- (\j) ;
    }
    \end{scope}

    \begin{scope}[xshift=6 * \s cm]
    \foreach \i/\j/\l in {0/1/b,0/2/c,2/1/g,1/1/ef}{
      \node[vertex] (\l) at (\i * \s,\j * \s) {$\l$} ;
    }
    \foreach \i/\j/\l in {0/0/a,1/0/d}{
      \node[vertex,opacity=0.2] (\l) at (\i * \s,\j * \s) {$\l$} ;
    }
    \draw[opacity=0.2] (a) -- (d) ;
    \node[draw,rounded corners,inner sep=0.01cm,fit=(a) (d)] (ad) {ad} ;
    \foreach \i/\j in {ad/b,b/c,b/ad,b/ef,c/ef,c/ef,ef/g,ef/g}{
      \draw (\i) -- (\j) ;
    }
    \foreach \i/\j in {ad/ef,ad/g}{
      \draw[red, very thick] (\i) -- (\j) ;
    }
    \end{scope}

    \begin{scope}[xshift=9 * \s cm]
    \foreach \i/\j/\l in {0/2/c,2/1/g,0.5/0/ad}{
      \node[vertex] (\l) at (\i * \s,\j * \s) {$\l$} ;
    }
    \foreach \i/\j/\l in {0/1/b,1/1/ef}{
      \node[vertex,opacity=0.2] (\l) at (\i * \s,\j * \s) {$\l$} ;
    }
    \draw[opacity=0.2] (b) -- (ef) ;
    \node[draw,rounded corners,inner sep=0.01cm,fit=(b) (ef)] (bef) {bef} ;
    \foreach \i/\j in {ad/bef,bef/c,bef/ad,c/bef,c/bef,bef/g}{
      \draw (\i) -- (\j) ;
    }
    \foreach \i/\j in {ad/bef,ad/g,bef/g}{
      \draw[red, very thick] (\i) -- (\j) ;
    }
    \end{scope}

    \begin{scope}[xshift=11.7 * \s cm]
    \foreach \i/\j/\l in {0/2/c}{
      \node[vertex] (\l) at (\i * \s,\j * \s) {$\l$} ;
    }
     \foreach \i/\j/\l in {0.5/0/adg,0.5/1.1/bef}{
      \node[vertex] (\l) at (\i * \s,\j * \s) {\footnotesize{\l}} ;
    }
    \foreach \i/\j in {c/bef}{
      \draw (\i) -- (\j) ;
    }
    \foreach \i/\j in {adg/bef}{
      \draw[red, very thick] (\i) -- (\j) ;
    }
    \end{scope}

    \begin{scope}[xshift=13.7 * \s cm]
    \foreach \i/\j/\l in {0.5/0/adg,0.5/1.1/bcef}{
      \node[vertex] (\l) at (\i * \s,\j * \s) {\footnotesize{\l}} ;
    }
    \foreach \i/\j in {adg/bcef}{
      \draw[red, very thick] (\i) -- (\j) ;
    }
    \end{scope}

    \begin{scope}[xshift=15 * \s cm]
    \foreach \i/\j/\l in {1/0.75/abcdefg}{
      \node[vertex] (\l) at (\i * \s,\j * \s) {\tiny{\l}} ;
    }
    \end{scope}
    
  \end{tikzpicture}
  }
  \caption{A 2-sequence witnessing that the initial graph has twin-width at most~2.}
  \label{fig:contraction-sequence}
\end{figure}

An equivalent viewpoint that will be somewhat more convenient is to consider a $d$-sequence as a sequence of partitions $\P_n := \{\{v\}~:~v \in V(G)\}, \P_{n-1}, \ldots, \P_1 := \{V(G)\}$ of $V(G)$, such that for every integer $1 \leqslant i \leqslant n-1$, $\P_i$ has $i$ parts and is obtained by merging two parts of $\P_{i+1}$ into one.
Now the \emph{red degree} of a part $P \in \P_i$ is the number of other parts $Q \in \P_i$ such that there is in $G$ at least one edge and at least one non-edge between $P$ and~$Q$.
A~$d$-sequence is such that no part of no partition of the sequence has red degree more than~$d$.
In that case the \emph{maximum red degree} of each partition is at most~$d$.  
And we similarly get the twin-width of $G$ as the minimum integer $d$ such that $G$ admits a~(partition) $d$-sequence.
The~\emph{quotient trigraph $G/\P_i$} is the trigraph $G_i$, if the (contraction) $d$-sequence $G_n, \ldots, G_1$ and the (partition) $d$-sequence $\P_n, \ldots, \P_1$ correspond.

Classes of binary structures with bounded twin-width include graph classes with bounded treewidth, and more generally bounded clique-width, proper minor-closed classes, posets with antichains of bounded size, strict subclasses of permutation graphs, as well as $\Omega(\log n)$-subdivisions of $n$-vertex graphs~\cite{twin-width1}, and some classes of (bounded-degree) expanders~\cite{twin-width2}.
A~notable variety of geometrically defined graph classes have bounded twin-width such as map graphs, bounded-degree string graphs~\cite{twin-width1}, classes with bounded queue number or bounded stack number~\cite{twin-width2}, segment graphs with no $K_{t,t}$ subgraph, visibility graphs of 1.5D terrains without large half-graphs, visibility graphs of simple polygons without large independent sets~\cite{twin-width8}.

For every class $\mathcal C$ mentioned so far, $O(1)$-sequences can be computed in polynomial time\footnote{Admittedly, for the geometric classes, a representation is (at least partially) needed.} on members of~$\mathcal C$.
For classes of binary structures including a binary relation interpreted as a linear order on the domain (called \emph{ordered binary structures}), there is a fixed-parameter approximation algorithm for twin-width~\cite{twin-width4}.
More precisely, given a graph~$G$ and an integer~$k$, there are computable functions $f$ and $g$ such that one can output an~$f(k)$-sequence of $G$ or correctly report that $\tww(G)>k$ in time $g(k)n^{O(1)}$.   
Such an approximation algorithm is currently missing for classes of general (not necessarily ordered) binary structures, and in particular for the class of all graphs.
We also observe that deciding if the twin-width of a~graph is at most~4 is an NP-complete task~\cite{Berge21}.

We will therefore assume that the input graph is given with a~$d$-sequence, and treat~$d$ as a constant (or that the input comes from any of the above-mentioned classes).
Thus far, this is the adopted setting when designing faster algorithms on bounded twin-width graphs~\cite{twin-width1,twin-width3,PilipczukSZ22,Kratsch22,Gajarsky22}.
From the inception of twin-width~\cite{twin-width1} --actually already from the seminal work of Guillemot and Marx~\cite{Guillemot14}-- it was clear that structures wherein this invariant is bounded may \emph{often} allow the design of parameterized algorithms.
More concretely, it was shown~\cite{twin-width1} that, on graphs $G$ given with a~$d$-sequence, model checking a~first-order sentence $\varphi$ is fixed-parameter tractable --it can be solved in time $f(d,\varphi) \cdot n$--, the special cases of, say, \textsc{$k$-Independent Set} or \textsc{$k$-Dominating Set} admit single-exponential parameterized algorithms~\cite{twin-width3}, an effective data structure almost linear in $n$ can support constant-time edge queries~\cite{PilipczukSZ22}, the triangles of $G$ can be counted in time $O(d^2n+m)$~\cite{Kratsch22}.

So far, however, the connection between \emph{having bounded twin-width} and \emph{enjoying enhanced approximation factors} was tenuous.
The only such result concerned \textsc{Min Dominating Set}, known to be inapproximable in polynomial-time within factor $(1-o(1))\ln n$ unless P$=$NP~\cite{Dinur14}, but yet admits a constant-approximation on graphs of bounded twin-width given with an $O(1)$-sequence~\cite{twin-width3}.
We start filling this gap by designing approximation algorithms on graphs of bounded twin-width given with an $O(1)$-sequence for notably \mis (\smis, for short), \mim, and \coloring.
Getting better approximation algorithms for \smis and \coloring in that particular scenario was raised as an open problem~\cite{twin-width3}.
Before we describe our results and elaborate on the developed techniques, let us briefly present the notorious inapproximability of these problems in general graphs.

\smis and \coloring are NP-hard~\cite{GJ79}, and very inapproximable: for every $\varepsilon > 0$, it is NP-hard to approximate these problems within ratio $n^{1-\varepsilon}$ \cite{Hastad96,Zuckerman07}.
The same was shown to hold for \mim~\cite{Chalermsook13b}.
Besides, there is only little room to improve over the brute-force algorithm in $2^{O(n)}$: Unless the Exponential Time Hypothesis\footnote{That is, the assumption that there is a $\delta > 0$ such that $n$-variable \textsc{3-SAT} cannot be solved in time $\delta^n$.}~\cite{Impagliazzo01} (ETH) fails, no algorithm can solve \smis in time $2^{o(n)}$~\cite{sparsification} (nor the other two problems).
For any~$r$ (possibly a function of $n$) \swis can be $r$-approximated in time $2^{O(n/r)}$~\cite{Cygan08,Bourgeois11}.
Bansal et al.~\cite{Bansal19} essentially shaved a $\log^2 r$ factor to the latter exponent.
It is known though that polynomial shavings are unlikely.
Chalermsook et al.~\cite{Chalermsook13} showed that, for any $\varepsilon > 0$ and sufficiently large $r$ (again $r$ can be function of $n$), an $r$-approximation for \smis and \mim cannot take time $2^{O(n^{1-\varepsilon}/r^{1+\varepsilon})}$, unless the ETH fails.
For instance, investing time $2^{O(\sqrt n)}$, one cannot hope for significantly better than a $\sqrt n$-approximation.

\paragraph*{Contributions and techniques}

Our starting point is a constant-approximation algorithm for \smis running in time $2^{O(\sqrt n)}$ when presented with an $O(1)$-sequence, which is very unlikely to hold in general graphs by the result of~Chalermsook et al.~\cite{Chalermsook13}.

\begin{theorem}\label{thm:intro-mis}
  On $n$-vertex graphs given with a $d$-sequence \mis can be $O_d(1)$-approximated in time $2^{O_d(\sqrt n)}$.
\end{theorem}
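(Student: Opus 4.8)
The plan is to exploit the balanced-partition structure that a $d$-sequence induces. Recall that in a (partition) $d$-sequence $\P_n, \ldots, \P_1$ of $G$, every partition $\P_i$ has maximum red degree at most $d$. I would stop the sequence at the partition $\P_i$ with $i = \lceil \sqrt n \rceil$ parts: this is a partition of $V(G)$ into $\lceil \sqrt n \rceil$ parts, each of red degree at most $d$ in the quotient trigraph $G/\P_i$. The key structural observation is that the ``black'' structure between parts is very constrained: if $P, Q \in \P_i$ are joined by a \emph{black} edge of $G/\P_i$, then $G$ contains \emph{all} edges between $P$ and $Q$ (a complete bipartite graph), and if they are non-adjacent in $G/\P_i$, there are \emph{no} edges between them; only the at-most-$d$ red neighbours of each part carry a mixed (arbitrary) bipartite pattern.

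The algorithm then is a natural ``divide at the middle of the sequence'' scheme. First I would guess, for the optimal independent set $I^\star$, the set $S \subseteq \P_i$ of parts that $I^\star$ meets — but this is too expensive, so instead I would reason about the quotient trigraph directly. Consider $G/\P_i$ as an auxiliary structure: pick from it a maximal ``black-independent'' collection of parts, i.e.\ parts pairwise non-adjacent via black edges; between any two such parts the only edges present are the red ones, so this collection of parts, together with the red edges among them, forms a trigraph on $\lceil\sqrt n\rceil$ vertices of maximum degree $d$. A graph of maximum degree $d$ has an independent set of size at least a $\frac{1}{d+1}$ fraction; applying this to the red graph on the chosen parts, I obtain $\Omega(\sqrt n / d)$ parts that are pairwise \emph{completely non-adjacent} in $G$ (no black edges by choice, no red edges by independence in the red graph). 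Now I have two regimes to combine: either some single part $P \in \P_i$ contains $\geq \sqrt n$ vertices of $I^\star$ — handled by recursion on $G[P]$, which has $\leq$ some bounded number of vertices relative to $|P|$, but more simply: each part $P$ induces a graph on which we recurse, and a $d$-sequence of $G[P]$ is inherited — or $I^\star$ is spread out, in which case restricting to the $\Omega(\sqrt n/d)$ mutually non-adjacent parts and taking, inside each, the best independent set (found by brute force in time $2^{O(|P|)}$, or recursively) loses only an $O(d)$ factor against the ``spread-out'' part of $I^\star$. Balancing the part sizes: if all parts have $\leq \sqrt n$ vertices, brute force inside each costs $2^{O(\sqrt n)}$ total; if a part is larger we recurse, but the recursion depth is $O(1)$ since sizes drop multiplicatively, and the approximation factors multiply to $O_d(1)$.

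More carefully, here is the clean recursion I would set up. Let $A(n)$ be the approximation ratio and $T(n)$ the running time. Take $\P_i$ with $i=\lceil n^{1/2}\rceil$ parts. For each part $P$, recursively compute an $A(|P|)$-approximate independent set $I_P$ in $G[P]$ (using the induced sub-$d$-sequence). Separately, compute as above $\Omega(n^{1/2}/d)$ parts that are mutually completely non-adjacent in $G$, and for each take a single vertex; call this set $J$. Output the larger of $\max_P I_P$ and $J$. For correctness: write $I^\star = \bigcup_P (I^\star \cap P)$; either some term has size $\geq |I^\star|/n^{1/2}$, and then recursion gives $\geq |I^\star|/(n^{1/2} A(n^{1/2}))$; or every term is $< |I^\star|/n^{1/2}$, meaning $I^\star$ meets $\geq n^{1/2}$ parts, whence the maximal black-independent sub-collection still meets a constant fraction (since a part and its black-neighbourhood together can contain... — this needs the bound that taking one part per black-clique loses only a factor, using that black-adjacency is transitive-ish via the complete bipartite structure; the honest statement is that the ``black graph'' on $\P_i$ is a disjoint union of cliques? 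No — but one can greedily pick parts) so that $|J| = \Omega(n^{1/2}/d) = \Omega(|I^\star| \cdot n^{-1/2}/d)$ in this case too, after checking $I^\star$ places $\leq$ a bounded number of its vertices outside the chosen parts. This yields $A(n) \le O(d) \cdot A(n^{1/2})$, which unrolls to $A(n) = O(d)^{O(\log\log n)}$ — \emph{not} quite $O_d(1)$, so I would need to be smarter: instead of balancing at $n^{1/2}$ parts I should balance so the recursion has \emph{constant} depth, e.g.\ stop recursing once $|P| \le n^{1/2}$ and brute-force there in $2^{O(n^{1/2})}$; then the depth is $1$, $A(n) = O(d)$, and $T(n) \le n^{1/2}\cdot 2^{O(n^{1/2})} + \mathrm{poly}(n) = 2^{O_d(n^{1/2})}$.

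\textbf{Main obstacle.} The delicate point is the structural lemma controlling the ``black graph'' on the partition $\P_i$: one must argue that the black-adjacency relation between parts is rigid enough that a maximal collection of pairwise black-non-adjacent parts still captures a constant fraction of any independent set's parts. This uses crucially that a black edge of the quotient trigraph corresponds to a \emph{complete} bipartite graph in $G$: if $P$ and $Q$ are black-adjacent and a large independent set meets both, we get an immediate contradiction — so in fact an independent set meets at most one part of each black connected component only after further thought, and the clean way is to observe that $I^\star$ restricted to a single part is handled by the ``$|P|\le n^{1/2}$'' case while the cross-part behaviour is governed entirely by the bounded-red-degree trigraph on $\P_i$, to which a greedy/degeneracy argument applies. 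Assembling these two regimes so that the $O_d(1)$ factor genuinely does not blow up with $n$ (i.e.\ keeping the recursion depth constant) is where the care is needed; the rest is routine.
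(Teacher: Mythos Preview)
Your proposal has two genuine gaps that prevent it from yielding an $O_d(1)$-approximation.

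\textbf{Gap 1: part sizes are uncontrolled.} You take the partition $\P_i$ with $i \approx \sqrt n$ parts from an arbitrary $d$-sequence and later write ``stop recursing once $|P|\le n^{1/2}$ and brute-force there; then the depth is~1''. But in a generic $d$-sequence there is no bound on the part sizes of $\P_{\sqrt n}$: a single part may contain $n-\sqrt n+1$ vertices (think of a sequence that contracts twins inside one huge module first). So your brute force inside parts does not run in $2^{O_d(\sqrt n)}$, and the recursion depth is not~1. The paper's first key ingredient, which you are missing, is the passage from twin-width to \emph{versatile} twin-width (equivalently, to a \emph{balanced} $d$-sequence): from a $d'$-sequence one computes in polynomial time a $d$-sequence with $d=h(d')$ in which every part of $\P_i$ has at most $d\cdot n/i$ vertices. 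Only with this can one guarantee that all parts of $\P_{\lfloor\sqrt n\rfloor}$ have size $O_d(\sqrt n)$.

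\textbf{Gap 2: the ``$\max_P I_P$ versus $J$'' scheme is quantitatively too weak.} Even granting balanced parts, your output is $\max\bigl(\max_P |I_P|,\ |J|\bigr)$ where $|J|=\Theta(\sqrt n/d)$ and $\max_P |I_P|\ge \max_P |I^\star\cap P|\ge |I^\star|/\sqrt n$. If $|I^\star|=\Theta(n)$ both quantities are $\Theta_d(\sqrt n)$, so you only get a $\Theta_d(\sqrt n)$-approximation, not $O_d(1)$. Taking \emph{one} vertex per selected part throws away almost everything. (Separately, the existence of $\Omega(\sqrt n/d)$ pairwise totally non-adjacent parts is not guaranteed either: the black graph of $G/\P$ can be dense, e.g.\ a clique, and then no two parts are non-adjacent.) The paper's second key ingredient is different: compute an \emph{optimal} independent set $S_i$ inside every part $P_i$ (exponential in $|P_i|=O_d(\sqrt n)$, so fine), then properly $(d{+}1)$-colour the red graph $\mathcal R(G/\P)$. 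Each colour class $C_j$ induces a subtrigraph with \emph{no} red edges, so the $P_i\subseteq C_j$ form a module partition of $G[C_j]$; hence an optimal independent set of $G[C_j]$ is exactly an optimal independent set of the $\sqrt n$-vertex quotient graph $(G/\P)[C_j]$ with vertex weights $|S_i|$, again computable in $2^{O(\sqrt n)}$. Some colour class carries at least a $1/(d{+}1)$ fraction of $I^\star$, which yields the claimed $(d{+}1)$-approximation. The point you were circling around---that parts touched by $I^\star$ are pairwise black-non-adjacent---is exactly what makes this quotient computation correct, but it must be combined with the \emph{full} $S_i$ inside each part, not a single vertex.
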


Our algorithm builds upon the functional equivalence between twin-width and the so-called \emph{versatile twin-width}~\cite{twin-width2}.
We defer the reader to~\cref{sec:prelim} for a formal definition of versatile twin-width.
For our purpose, one only needs to know the following useful consequence of that equivalence.
From a $d'$-sequence of $G$, we can compute in polynomial time another partition sequence $\P_n, \ldots, \P_1$ of $G$ of width $d := f(d')$, for some computable function $f$, such that for every integer $1 \leqslant i \leqslant n$, all the $i$ parts of $\P_i$ have size at most $d \cdot \frac{n}{i}$.
Even if some parts of $P_i$ can be very small, this partition is balanced in the sense that no part can be larger than $d$ times the part size in a perfectly balanced partition.
Of importance to us is $\P_{\lfloor \sqrt n \rfloor}$ when the number of parts ($\lfloor \sqrt n \rfloor$) and the size of a larger part in the partition (at most $d \frac{n}{\lfloor \sqrt n \rfloor} \approx d \sqrt n$) are somewhat level.

We can then properly color the red graph (made by the red edges on the vertex set $\P_{\lfloor \sqrt n \rfloor}$) with $d+1$ colors.
Any color class $X$ is a subset of parts of $\P_{\lfloor \sqrt n \rfloor}$ such that between two parts there are either all edges (black edge) or no edge at all (non-edge).
In graph-theoretic terms, the subgraph $G_X$ of $G$ induced by all the vertices of all the parts of $X$ have a simple modular decomposition: a partition of at most $\sqrt n$ modules each of size at most $d \sqrt n$.
It is thus routine to compute a largest independent set of $G_X$ essentially in time exponential in the maximum between the number of modules and the maximum size of a module, that is, in at most $d \sqrt n$.
As one color class $X^*$ contains more than a $\frac{1}{d+1}$ fraction of the optimum, we get our $d+1$-approximation when computing a largest independent set of $G_{X^*}$.
\Cref{fig:alg-mis} serves as a visual summary of what we described so far. 

The next step is to substitute recursive calls of our approximation algorithm to exact exponential algorithms on induced subgraphs of size $O_d(\sqrt n)$.
Following this inductive process at depth $q=2, 3, 4, \ldots$, we degrade the approximation ratio to $(d+1)^3, (d+1)^7, (d+1)^{15}$, etc. but meanwhile we boost the running time to $2^{O_d(n^{1/4})}, 2^{O_d(n^{1/8})}, 2^{O_d(n^{1/16})}$, etc. 
In effect we show by induction that:
\begin{theorem}\label{thm:intro-mis2}
  On $n$-vertex graphs given with a $d$-sequence \mis has an $O_d(1)^{2^q-1}$-approximation algorithm running in time $2^{O_{d,q}(n^{2^{-q}})}$, for every integer $q \geqslant 0$.
\end{theorem}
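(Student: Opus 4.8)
The plan is to construct, by induction on $q$, a family of algorithms $\mathcal A_0,\mathcal A_1,\mathcal A_2,\dots$, where $\mathcal A_q$ takes an $n$-vertex graph $G$ with nonnegative vertex weights and a $d$-sequence, and outputs an independent set of $G$ of weight at least $\WMIS(G)/c_q$ with $c_q:=(d+1)^{2^q-1}$; here $\WMIS(\cdot)$ denotes the maximum weight of an independent set, so the unit-weight case gives exactly the statement of the theorem. Working with weights is what lets the recursion close. The base case $\mathcal A_0$ is brute force: it runs in time $2^{O(n)}$ and returns an optimum, so $c_0=1$. In the inductive step, $\mathcal A_q$ will call $\mathcal A_{q-1}$ only, and only on instances with $O_d(\sqrt n)$ vertices; this is what delivers the time bound.

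For the step, given $G$ with a $d$-sequence, I would first use the functional equivalence with versatile twin-width recalled just before \cref{thm:intro-mis} to compute in polynomial time a balanced partition sequence of width $f(d)$ — from now on I rename this constant $d$ — and keep the partition $\P:=\P_{\lceil\sqrt n\rceil}$, which has at most $\sqrt n+1$ parts, each of size at most $d\sqrt n$. Properly colour the red graph on $\P$ with $d+1$ colours (greedily, since its maximum degree is at most $d$). Fix a colour class $X$; between any two of its parts there are all edges or no edge, so $X$ is a modular partition of the induced subgraph $G_X$, whose quotient $H_X$ is a plain graph on $|X|\le\sqrt n+1$ vertices. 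Picking one vertex per part of $X$ realises $H_X$ as an induced subgraph of $G$ up to isomorphism, hence $\tww(H_X)\le d$ and a $d$-sequence of $H_X$ — as of each $G[M]$, $M\in X$ — is obtained in polynomial time by restricting the sequence. Then: run $\mathcal A_{q-1}$ on each $G[M]$ to obtain an independent set of weight $a_M$; run $\mathcal A_{q-1}$ on $H_X$ with the vertex weights $(a_M)_{M\in X}$ to obtain an independent set $J_X$; and take the union over $M\in J_X$ of the independent sets returned for the $G[M]$'s. This union is independent in $G_X$ (distinct parts of $J_X$ are pairwise non-adjacent) of weight $w_X:=\sum_{M\in J_X}a_M$. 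Finally $\mathcal A_q$ returns the heaviest such set over all colour classes $X$.

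For correctness, since the colour classes partition $V(G)$, the optimal independent set $S$ of $G$ has $w(S\cap V(G_X))\ge\WMIS(G)/(d+1)$ for some $X$, so $\WMIS(G_X)\ge\WMIS(G)/(d+1)$. For that $X$, the identity $\WMIS(G_X)=\max\{\sum_{M\in J}\WMIS(G[M]):J\text{ independent in }H_X\}$ combined with $a_M\ge\WMIS(G[M])/c_{q-1}$, applied to an optimal $J$, gives that the maximum weight of an independent set of $H_X$ under weights $(a_M)$ is at least $\WMIS(G_X)/c_{q-1}$; the guarantee of $\mathcal A_{q-1}$ on $H_X$ then yields $w_X\ge\WMIS(G_X)/c_{q-1}^2\ge\WMIS(G)/((d+1)c_{q-1}^2)$. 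So the output weight is at least $\WMIS(G)/c_q$ with $c_q=(d+1)c_{q-1}^2$, and unrolling from $c_0=1$ gives $c_q=(d+1)^{2^q-1}=O_d(1)^{2^q-1}$. For the running time, $\mathcal A_q$ makes at most $(d+1)(\sqrt n+2)$ recursive calls, each to $\mathcal A_{q-1}$ on an instance with at most $\lceil d\sqrt n\rceil$ vertices; writing $T_q(n)$ for its running time and assuming $T_{q-1}(m)=2^{O_{d,q-1}(m^{2^{-(q-1)}})}$, we get $T_q(n)\le n^{O(1)}+(d+1)(\sqrt n+2)\cdot 2^{O_{d,q-1}((d\sqrt n)^{2^{-(q-1)}})}=2^{O_{d,q}(n^{2^{-q}})}$, since $(d\sqrt n)^{2^{-(q-1)}}=O_{d,q}(n^{2^{-q}})$ and the polynomial factors are absorbed into the exponent.

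The point to get right is the recursion on the quotient graph $H_X$: realising that $H_X$ again has twin-width at most $d$ — and so is a bona fide smaller instance — is what both forces the algorithm to carry vertex weights (the $\WMIS$ values of the parts) and creates the squaring $c_q=(d+1)c_{q-1}^2$, which is exactly what turns ``depth $q$'' into ratio $O_d(1)^{2^q-1}$ at running time $2^{O_{d,q}(n^{2^{-q}})}$. The rest is routine: checking that restricting a given $d$-sequence to an induced subgraph yields a $d$-sequence in polynomial time, and that the recursion tree has only $n^{O_q(1)}$ nodes so the per-leaf cost $2^{O_{d,q}(n^{2^{-q}})}$ dominates.
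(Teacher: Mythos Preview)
Your approach is the paper's: induction on $q$, recursing on the $O_d(\sqrt n)$-vertex parts and on the (weighted) quotient $H_X$, with the squaring recurrence $c_q=(d+1)c_{q-1}^2$. The one real gap is the parameter bookkeeping across the recursion. You apply the versatile-twin-width conversion to turn the input $d$-sequence into a balanced $f(d)$-sequence, rename $f(d)$ to $d$, and then pass ``a $d$-sequence obtained by restricting the sequence'' to $\mathcal A_{q-1}$. Read literally, you are restricting the \emph{balanced} sequence, so the recursive call receives an $f(d)$-sequence and will itself convert to a balanced $f(f(d))$-sequence; after $q$ levels the width has grown to $f^{(q)}(d)$ and the ratio to something like $\prod_i (f^{(i)}(d)+1)^{2^{q-i}}$, which is not $O_d(1)^{2^q-1}$. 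This is precisely the divergence the paper flags (``we need to keep more information on those subinstances than the mere fact that they have twin-width at most $d$, otherwise the twin-width bound could quickly diverge'') and fixes via \cref{lem:seq-to-partial-seq-induction}, which maintains the neatly-divided-matrix invariant so that the same constant $d'$ governs every level.

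For \smis specifically there is a lighter fix that salvages your write-up: since every $G[M]$ is an induced subgraph of $G$, and since (as you correctly observe) $H_X$ is realised as an induced subgraph of $G$ by picking one representative per module, you can always restrict the \emph{original} $d_0$-sequence of $G$ to the subinstance and pass \emph{that} to $\mathcal A_{q-1}$. Then every level starts from a $d_0$-sequence, converts once to a balanced $f(d_0)$-sequence, and uses $f(d_0)+1$ colours; the base $(f(d_0)+1)$ is fixed and you genuinely get $(f(d_0)+1)^{2^q-1}=O_d(1)^{2^q-1}$. Just make this explicit and drop the in-place renaming, which is what creates the ambiguity.
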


The following polynomial-time algorithm is a corollary of~\cref{thm:intro-mis2} choosing $q = O_{d,\varepsilon}(\log \log n)$.

\begin{theorem}\label{thm:intro-mis3}
  For every $\varepsilon > 0$, \mis can be $n^\varepsilon$-approximated in polynomial-time $O_{d,\varepsilon}(1) \cdot \log^{O_d(1)}n \cdot n^{O(1)}$ on $n$-vertex graphs given with a $d$-sequence.
\end{theorem}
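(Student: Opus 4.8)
The plan is to derive \cref{thm:intro-mis3} from \cref{thm:intro-mis2} by running the latter with a value of $q$ that is allowed to grow slowly with $n$. Let $c = O_d(1)$ denote the base of the approximation ratio produced by \cref{thm:intro-mis2}, and assume without loss of generality that $c \ge 2$ and $0 < \varepsilon < 1$ (shrinking $\varepsilon$ only makes the target harder). If $n < c^{1/\varepsilon} =: n_0(d,\varepsilon)$, we solve \mis exactly by brute force in time $2^{O(n_0)} = O_{d,\varepsilon}(1)$, which trivially meets the claimed bound. Otherwise we set
\[
 q \;:=\; \left\lfloor \log_2\!\left(\tfrac{\varepsilon \log_2 n}{\log_2 c}\right)\right\rfloor \;\in\; \{0,1,\dots\},
\]
invoke the algorithm of \cref{thm:intro-mis2} with this $q$, and return its output. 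Note already that $q \le \log_2\log_2 n = O_{d,\varepsilon}(\log\log n)$, matching the informal description.

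The approximation guarantee is immediate: since $2^q \le \varepsilon \log_2 n / \log_2 c$ and $c^{1/\log_2 c} = 2$, the ratio is $c^{2^q-1} \le c^{2^q} \le c^{\,\varepsilon \log_2 n/\log_2 c} = n^{\varepsilon}$.

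The running time is where care is needed, and it is also the main (indeed, essentially the only) obstacle: because $q$ now varies with $n$, the factor $2^{O_{d,q}(n^{2^{-q}})}$ appearing in \cref{thm:intro-mis2} may not be treated as a constant, so one must reopen that proof and make the dependence on $q$ explicit. Unwinding the depth-$q$ recursion there, the running time has the shape $O_d(1)^{q}\cdot n^{O(1)}\cdot 2^{O_d(n^{2^{-q}})}$: the first factor bounds the branching of the recursion tree (at each level one branches over the $O_d(1)$ colour classes of the red graph and over the modules, fewer than the square root of the current instance size), the $n^{O(1)}$ factor collects the polynomial-time preprocessing done at every node (computing the balanced partition sequence, colouring the red graph, assembling the modular decomposition), and the last factor is the cost of the exact computation at the bottom of the recursion, on instances of size $O_d(n^{2^{-q}})$. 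Our choice of $q$ satisfies $2^{-q}\log_2 n < 2\log_2 c/\varepsilon$, so $n^{2^{-q}} < c^{2/\varepsilon} = O_{d,\varepsilon}(1)$ and the last factor is $O_{d,\varepsilon}(1)$; and $q \le \log_2\log_2 n$, so $O_d(1)^{q} \le \log^{O_d(1)} n$. Multiplying, the running time is $O_{d,\varepsilon}(1)\cdot \log^{O_d(1)} n\cdot n^{O(1)}$, as claimed. The residual work is purely bookkeeping: verifying the advertised form of \cref{thm:intro-mis2}'s running time with $q$ exposed, and checking that the regime $n \ge c^{1/\varepsilon}$ excludes only finitely many (depending on $d,\varepsilon$) instances, handled above by brute force.
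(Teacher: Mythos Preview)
Your proposal is correct and follows essentially the same approach as the paper: choose $q = \Theta_{d,\varepsilon}(\log\log n)$, reopen the recursion of \cref{thm:intro-mis2} to expose the explicit $q$-dependence of the running time, and verify that with this choice the exponential factor collapses to $O_{d,\varepsilon}(1)$ while the remaining overhead is $\log^{O_d(1)} n \cdot n^{O(1)}$. One small inaccuracy in your narrative: the $O_d(1)^q$ factor does not come from the branching of the recursion tree (the branching is actually $n^{c_1}$ per level, which telescopes into the $n^{O(1)}$ term since instance sizes shrink by square roots); rather, it arises because each recursive call is on an instance of size up to $f(d)\sqrt{n}$, so after $q$ levels the accumulated $f(d)$ factors contribute $f(d)^{O(q)}$—this is exactly the $(f(d)^q)^{(2-2^{-q})(c_1+c_2)}$ term in the paper's explicit bound. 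Your final arithmetic is unaffected.
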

Note that the exponent of the polynomial factor is an absolute constant (not depending on $d$ nor on $\varepsilon$).

\medskip

We then apply our framework to \coloring and \mim.
\begin{theorem}\label{thm:intro-col-mim}
   For every $\varepsilon > 0$, \coloring and \mim admit polynomial-time $n^\varepsilon$-approximation algorithms on $n$-vertex graphs of bounded twin-width given with an $O(1)$-sequence.
\end{theorem}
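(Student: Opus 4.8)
The plan is to treat \coloring and \mim separately, both times building on \cref{thm:intro-mis3} together with the routine fact that an induced subgraph $G[S]$ of a graph given with a $d$-sequence inherits a $d$-sequence, computable in polynomial time (restrict each partition of the sequence to $S$ and drop consecutive repetitions; red degrees can only drop).

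\textbf{Min Coloring.} The connection to \mis is black-box. Fix $\varepsilon>0$, set $\rho:=n^{\varepsilon/2}$, and iterate: while the current induced subgraph $G'\subseteq G$ is nonempty, use \cref{thm:intro-mis3} (with parameter $\varepsilon/2$) on $G'$ to find an independent set $I$ with $|I|\ge\alpha(G')/\rho$, give $I$ a new color, and delete it. Since always $\alpha(G')\ge |V(G')|/\chi(G')\ge |V(G')|/\chi(G)$, each step deletes at least a $1/(\rho\,\chi(G))$ fraction of the remaining vertices, so the process halts within $\rho\,\chi(G)\ln n+1$ steps and uses $O(\rho\ln n)\cdot\chi(G)$ colors. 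As soon as $n$ is large enough that $\rho\ln n\le n^{\varepsilon}$ this is an $n^{\varepsilon}$-approximation; the finitely many smaller instances are solved exactly. With at most $n$ iterations, each polynomial, the algorithm is polynomial.

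\textbf{Max Induced Matching.} Here we re-run the \smis framework (a reduction to \mis seems unavailable, as line graphs do not preserve bounded twin-width). From the $d$-sequence, versatile twin-width produces a balanced partition $\P:=\P_{\lfloor\sqrt n\rfloor}$ into $\lfloor\sqrt n\rfloor$ parts of size $O_d(\sqrt n)$; we properly color its red graph with $d+1$ colors and let $X_1,\dots,X_{d+1}$ be the color classes. Inside $G[X_j]$ the parts are $O(\sqrt n)$ modules of size $O_d(\sqrt n)$ over a \emph{graph} quotient $Q_j$, and one checks that a maximum induced matching of $G[X_j]$ is, up to an additive $O(\sqrt n)$ (exclusive edges across black pairs of $Q_j$), a disjoint union of induced matchings of the modules taken over an independent set of $Q_j$ — a weighted-independent-set-flavored quantity handled as in \smis: solved exactly by the base case (time $\exp(O_d(\sqrt n))$, by brute force on each module and on $Q_j$), and approximated by recursion at depth $q$. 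The new difficulty is that a maximum induced matching $M$ of $G$, of size $\text{im}(G)$, may spend many edges \emph{between} distinct parts, invisible to every single $G[X_j]$. These are tamed by two facts. (i) If a crossing $M$-edge lies across a black pair of the quotient of $\P$, no other $M$-edge touches those two parts; hence only $O(\sqrt n)$ such edges exist. (ii) After setting those aside, two distinct parts both hosting a crossing $M$-edge are never black-adjacent in the quotient (else a chord arises), so the remaining crossing $M$-edges — across red pairs — sit on a bounded-degree structure on pairs of parts (each part has red degree $\le d$): properly coloring a suitable $O_d(1)$-degree conflict graph on these pairs of parts yields $O_d(1)$ induced subgraphs, each a \emph{disjoint union} of $O(\sqrt n)$ pieces of size $O_d(\sqrt n)$, on which $\text{im}$ is computed (or recursively approximated) piecewise. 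Together with the $G[X_j]$'s, these $O_d(1)$ subgraphs recover a $1/O_d(1)$ fraction of $\text{im}(G)$ whenever $\text{im}(G)=\Omega(\sqrt n)$; the regime $\text{im}(G)=O(\sqrt n)$ is cheap to deal with (free inside the top-level budget $\exp(O_d(\sqrt n))$, and for the polynomial-time statement it is enough to output one edge when $\text{im}(G)\le n^{\varepsilon}$ and otherwise to call the above at a shallower recursion depth). Running the recursion to depth $q=O_{d,\varepsilon}(\log\log n)$ — losing an $O_d(1)$ factor per level (from the $d+1$ color classes and the $O_d(1)$ crossing subgraphs) while scaling the running time as in \cref{thm:intro-mis2} — gives the polynomial-time $n^{\varepsilon}$-approximation for \mim.

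The step I expect to be the real obstacle is (ii) for \mim: choosing the ``crossing-edge'' subgraphs so as to be at once structured enough to re-enter the recursion and rich enough to collectively capture a constant fraction of the optimum, and in particular preventing the additive $O_d(\sqrt n)$ errors from snowballing across the $\log\log n$ recursion levels. By contrast \coloring is essentially bookkeeping once \cref{thm:intro-mis3} is available.
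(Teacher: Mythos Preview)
Your \coloring argument is correct and takes a genuinely different route from the paper. The paper does \emph{not} reduce to \smis; instead it observes that \coloring is not self-reducible under the partition scheme (coloring each $G[P_i]$ and the quotient $(G/\P)[C_j]$ separately does not compose), so it passes to the generalization \setcoloring (each vertex demands $b(v)$ colors) and shows \emph{that} problem satisfies the recursive hypotheses of \cref{lem:induction-approx}. Your greedy extraction of approximate independent sets is more elementary and perfectly sufficient for the $n^\varepsilon$ statement; the paper's approach additionally yields the full time--approximation trade-off $O_d(1)^{2^q-1}$ in time $2^{O_{d,q}(n^{2^{-q}}\log n)}$, which your $\ln n$ overhead would spoil by a logarithmic factor.

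For \mim, however, the gap you flag is fatal as stated. Writing the depth-$q$ guarantee on an $m$-vertex instance as $\text{output}\ge \text{im}/A_q - B_q(m)$, your recursion gives $B_{q+1}(m)\gtrsim \sqrt m\cdot B_q(\sqrt m)$ (the additive errors on the $\Theta(\sqrt m)$ subinstances are \emph{summed} when they become weights in the independent-set call), so $B_q(m)\approx m^{1-2^{-q}}$, which tends to $m$ as $q\to\Theta(\log\log n)$. Thus for any $\varepsilon<1/2$ there is a regime $n^\varepsilon<\text{im}(G)\ll\sqrt n$ where neither ``output one edge'' nor your recursion yields anything. There is also a second, unflagged issue in your $M_r$ step: the ``$O_d(1)$-degree conflict graph on pairs of parts'' only has bounded degree with respect to \emph{red} conflicts; two red pairs can be linked by arbitrarily many black edges in $G/\P$, so a color class of your coloring need not induce a disjoint union in $G$, and the piecewise induced matchings need not combine.

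The paper fixes both issues by keeping the analysis purely multiplicative. It splits an optimum $M$ into $M_v$ (within parts), $M_r$ (across red pairs), $M_b$ (across black pairs) and handles $M_b$ \emph{directly} rather than as error: one recursive \mim call on the \emph{black graph} of $G/\P$ (a \fullregu, hence covered by \cref{lem:seq-to-partial-seq-induction}) with edge weights $\max\{w(f):f\in Y,\,f\text{ crosses }P_iP_j\}$, followed by a $(2d{+}1)$-coloring of an auxiliary red-conflict graph, gives a $(2d{+}1)r$-approximation of $M_b$. For $M_r$, after the distance-2 edge coloring and the per-pair computations on $G[P_i\cup P_j]$, the paper makes a \emph{further} recursive \msim call on the quotient restricted to each color class $E_h$ (with weights the per-pair optima) to select which red pairs to keep; this is precisely what filters out black conflicts between pairs. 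Both steps fit the template of \cref{lem:induction-approx-fast}, yielding an $O(d^2)r^2$-approximation with no additive term.
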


The main additional difficulty for \coloring is that one cannot satisfactorily solve/approximate that problem on a modular decomposition by simply coloring its modules and its quotient graph.
One needs to tackle a more general problem called \setcoloring.
Fortunately this generalization is the fixed point we are looking for: approximating \setcoloring can be done in our framework by mere recursive calls (to itself).

For \mim, we face a new kind of obstacle.
It can be the case that no decent solution is contained in any color class $X$ --in the chosen $d+1$-coloring of the red graph $G/\P_{\lfloor \sqrt n \rfloor}$.
For instance, it is possible that any such color class $X$ induces in $G$ an edgeless graph, while very large induced matchings exist with endpoints in two distinct color classes.
We thus need to also find large induced matchings within the black edges and within the red edges of~$G/\P_{\lfloor \sqrt n \rfloor}$.
This leads to a more intricate strategy intertwining the coloring of bounded-degree graphs (specifically the red graph and the square of its line graph) and recursive calls to induced subgraphs of $G$, and to special induced subgraphs of the total graph (i.e., made by both the red and black edges) of~$G/\P_{\lfloor \sqrt n \rfloor}$.
Although this is not necessary, one can observe that the latter graphs are also induced subgraphs of $G$ itself.

\medskip

We then explore the limits of our results and framework in terms of amenable problems. 
We give the following technical generalization to the approximation algorithms for \smis and \mim.
\begin{theorem}\label{thm:intro-mihp}
   For every connected graph $H$ and $\varepsilon > 0$, \mihp{H} admits a~polynomial-time $n^\varepsilon$-approximation algorithms on $n$-vertex graphs of bounded twin-width given with an $O(1)$-sequence.
\end{theorem}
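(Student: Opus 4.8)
The plan is to prove, in the spirit of \cref{thm:intro-mis2}, a time--approximation trade-off: for every connected $H$ with $h := |V(H)|$ and every integer $q \geqslant 0$, an $O_{d,H}(1)^{2^q-1}$-approximation for \mihp{H} on $n$-vertex graphs given with a $d$-sequence, running in time $2^{O_{d,H,q}(n^{2^{-q}})}$; the polynomial-time $n^\varepsilon$-approximation then follows by setting $q = \Theta_{d,\varepsilon}(\log\log n)$, exactly as \cref{thm:intro-mis3} follows from \cref{thm:intro-mis2}. As in \cref{thm:intro-mis}, we start from the balanced partition $\P := \P_{\lfloor \sqrt n \rfloor}$ obtained through the equivalence with versatile twin-width: its $\approx \sqrt n$ parts each have size at most $d\sqrt n$, and its red graph $R := R(G/\P)$ has maximum degree at most $d$, hence a proper $(d{+}1)$-colouring.

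The structural heart is a classification of the induced copies of $H$ in an optimal packing $\mathcal H^\star$. Since $H$ is connected, each copy either lies inside a single part, or its footprint is a set of at most $h$ parts that is connected in the \emph{total} quotient graph $T(G/\P)$ (red together with black edges), and in the latter case the pair (footprint trigraph, placement of $V(H)$ on the parts) realises one of at most $f(h)$ abstract patterns. The single-part copies are handled exactly as for \mis: restricting $\mathcal H^\star$ to a best colour class of $R$ loses only a factor $d{+}1$ and leaves an \mihp{H} instance on a blow-up $G_X$ of a graph $Q_X$ on at most $\sqrt n$ vertices in which every module has size at most $d\sqrt n$; this is solved by recursing into each module $G[P]$ and into $Q_X$, passing --- as the \mis recursion does --- through the weighted variant \wis on the quotient. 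For the multi-part copies, averaging over the $f(h)$ patterns costs another $O_H(1)$ factor and fixes a single pattern $\pi$; contracting, inside $\pi$, the connected components of its red edges yields a strictly smaller but still \emph{connected} graph $H'$ whose edges are the black edges of $\pi$, and selecting pairwise compatible copies of pattern $\pi$ becomes an \mihp{H'} instance on $G/\P$ --- itself recursed upon, with each part again recursed into. It is precisely this ``$H$ may shrink but stays connected'' phenomenon (with $H' = K_1$ or $K_2$ already arising for \mim) that forces the general connected-$H$ statement rather than only the cases $K_1$ and $K_2$.

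To make these recursions compose, we phrase everything through the annotated variant \aihp{H}, in which parts may be forbidden and some roles of a pattern pre-assigned, so that ``pattern $\pi$ on the quotient'' is a bona fide instance of the same annotated problem on a smaller ground set --- the fixed point that \setcoloring is for \coloring. The base case $q = 0$ is brute force in $2^{O(n)}$; the inductive step is as in \cref{thm:intro-mis2}: one level of recursion turns exact calls on graphs of size $O_{d,H}(\sqrt n)$ (the modules and the quotient) into calls to the $(q{-}1)$-level algorithm, halving the exponent of $n$ in the running-time bound (from $2^{-(q-1)}$ to $2^{-q}$) and, since the ``within a module'' and ``on the quotient'' approximations are composed and then multiplied by the $O_{d,H}(1)$ colouring losses, replacing a ratio $\rho$ by $O_{d,H}(1)\cdot\rho^2$, which yields the claimed $O_{d,H}(1)^{2^q-1}$. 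As noted for \mim, the auxiliary induced subgraphs of $T(G/\P)$ arising along the way are themselves induced subgraphs of $G$, so no genuinely new kind of object has to be processed.

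The main obstacle is the interaction between the ``mutually induced'' requirement (no edges at all between distinct copies) and the fact that, unlike for \mis, a copy of $H$ need not live inside one colour class of $R$: two copies can be vertex-disjoint and arbitrarily far apart in $R$ yet be joined --- and thereby forbidden to coexist --- by a single black edge of the quotient, which moreover renders both complete-bipartite-joined parts unusable for any further copy. Connectivity of $H$ together with the bounded red degree controls the red-side conflicts (a fixed part meets $O_{d,h}(1)$ footprints through red edges, so the pattern-conflict graph restricted to red interactions has bounded degree and is $O(1)$-colourable), but the black-side conflicts are unbounded in degree and must be routed through the \emph{single} recursive \mihp{H'} call on an annotated quotient, without inflating the per-level ratio beyond $O_{d,H}(1)$ or the instance size beyond $O_{d,H}(\sqrt n)$. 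Carrying out this accounting \emph{uniformly over all $f(h)$ patterns}, and verifying that the annotations stay within the bounded repertoire the recursion can handle, is the crux; the case $H = K_2$ is \mim, and already there it forces the intertwining of recursion into subgraphs of $G$, recursion into subgraphs of the total graph of $G/\P$, and proper colouring of the red graph and of (a power of) its line graph discussed around \cref{thm:intro-col-mim}.
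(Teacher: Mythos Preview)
Your high-level plan --- annotated generalisation \aihp{H}, classification of copies by their footprint pattern in $G/\P$, recursion both into footprints and into the quotient, and an outer induction driven by a smaller connected $H'$ --- is indeed the paper's approach. However, one claim you make is factually wrong and was singled out in the paper as the chief new difficulty of general $H$ over \mim: you write that ``the auxiliary induced subgraphs of $T(G/\P)$ arising along the way are themselves induced subgraphs of $G$, so no genuinely new kind of object has to be processed.'' This holds for \mim by accident, but for general $H$ it fails: the recursive call on the quotient is to \aihp{\mathcal T(H')} on the \emph{total} graph $\mathcal T(G/\P)$, and $\mathcal T(G/\P)$ is typically not an induced subgraph of $G$ (a red edge $P_iP_j$ becomes an edge of $\mathcal T(G/\P)$ even though representatives of $P_i,P_j$ may be non-adjacent in $G$). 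The paper absorbs this by proving that \fullregus of induced subtrigraphs of $G/\P$ still come with a conform neatly divided matrix in $\mathcal M_{\cdot,d}$ (\cref{lem:seq-to-partial-seq-induction}), which is precisely what keeps the recursion well-typed; your justification bypasses this and is incorrect.

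A second, more structural divergence: your case split is ``single-part vs.\ multi-part,'' while the paper's is ``(A) the pattern $H'$ has a black edge'' vs.\ ``(B) it has none.'' This distinction is what makes the weight function manageable. In case~(A), the black edge forces at most one induced copy of $H$ per footprint, so $w'$ on $h'$-tuples of $\mathcal T(G/\P)$ is computed \emph{exactly} in $n^{O(h)}$ time (no inner recursion, no extra $r$ factor). In case~(B), many copies may fit in one footprint, so one must recurse into each footprint $G_\iota$ (size $\le h\cdot d\sqrt n$) to get an $r$-approximate weight, then colour the footprints with $p(h,d)$ colours to kill red conflicts before the quotient call. Your unified ``always recurse into each part'' treatment blurs this; in particular, your $H'$ obtained by contracting red components of $\pi$ and keeping only black edges is not the object the paper recurses on (it recurses on \aihp{\mathcal T(H')} for the compatible \emph{trigraph} $H'$), and you do not specify on which graph structure of $G/\P$ your \mihp{H'} instance lives --- the black graph alone would not enforce red-independence between copies, and the total graph would let red edges of $G/\P$ masquerade as black edges of $\pi$. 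The overall shape of your argument is right, but these two points (total-graph calls are genuinely new objects; the black-edge/no-black-edge dichotomy governs how $w'$ is built) are where the actual work lies.
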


In this problem, one seeks for a largest induced subgraph that consists of a disjoint union of copies of $H$.
All the previous technical issues are here combined.
We try all the possibilities of batching the vertices of $H$ into at most $|V(H)|$ parts of~$G/\P_{\lfloor \sqrt n \rfloor}$, based on the trigraph that these parts define.
For instance with $H=K_2$ (an edge), i.e., the case of \mim, the three possible trigraphs are the 1-vertex trigraph, two vertices linked by a red edge, and two vertices linked by a black edge.
In the general case, the problem generalization is quite delicate to find.
We have to keep some partitions of $V(G)$ and $V(H)$ to enforce that the copies of $H$ in $G$ follow a pattern that the algorithm committed to higher up in the recursion tree, and a weight function on $|V(H)|$-tuples of vertices of $G$, not to forget how many mutually induced copies of $H$ can be packed \emph{within} these vertices.
The other novelty is that some recursive calls are on induced subgraphs of the total graph of~$G/\P_{\lfloor \sqrt n \rfloor}$ that are \emph{not} induced subgraphs of $G$.
Fortunately, these graphs keep the same bound of versatile twin-width, and thus our framework allows it. 

Defining, for a family of graphs $\mathcal H$, \mihp{\mathcal H} as the same problem where the connected components of the induced subgraph should all be in $\mathcal H$, we get a similar approximation factor when $\mathcal H$ is a finite set of connected graphs.
(Note that \mihp{H} is sometimes called \textsc{Independent Induced $H$-Packing}.)
In particular, we can similarly approximate \textsc{Independent $H$-Packing}, which is the same problem but the copies of $H$ need not be induced.
(Our approximation algorithms could extend to other $H$-packing variants without the independence requirement, but these problems can straightforwardly be $O(1)$-approximated in general graphs.)

We can handle some cases when $\mathcal H$ is infinite, too.
For instance, by slightly adapting the case of \smis, we can get an $n^\varepsilon$-approximation when $\mathcal H$ is the set of all cliques.
We show this more involved example, also expressible as $\mihp{\mathcal H}$ for $\mathcal H$ the set of all trees or the set all stars.
\begin{theorem}\label{thm:intro-star-forest}
   For every $\varepsilon > 0$, finding the induced (star) forest with the most edges admits a~polynomial-time $n^\varepsilon$-approximation algorithms on $n$-vertex graphs of bounded twin-width given with an $O(1)$-sequence. 
\end{theorem}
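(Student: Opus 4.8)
The plan is to fold \mif and \misf into the recursive machinery behind \Cref{thm:intro-mihp}. It suffices to exhibit a single \emph{contraction step} that, at the cost of a multiplicative $O_d(1)$ loss in the objective, reduces an $n$-vertex instance given with a $d$-sequence to $O_d(1)$ instances on $O_d(\sqrt n)$ vertices of a suitable annotated generalization of the same problem; iterating this step to depth $q$ then yields an $O_d(1)^{2^q-1}$-approximation in time $2^{O_{d,q}(n^{2^{-q}})}$ exactly as in \Cref{thm:intro-mis2}, and choosing $q = \Theta_{d,\varepsilon}(\log\log n)$ gives the claimed polynomial-time $n^\varepsilon$-approximation. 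The statement for \mlisf then follows since its objective differs from that of \misf by at most a factor $2$, and the variant where the induced forest is replaced by an induced disjoint union of cliques is handled analogously (see below).

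For the contraction step, first compute, via the functional equivalence with versatile twin-width as in \Cref{thm:intro-mihp}, the balanced partition $\P := \P_{\lfloor \sqrt n \rfloor}$: at most $\sqrt n$ parts, each of size at most $d\sqrt n$, and maximum red degree $d$. Fix an optimum induced forest $F^*$ with $\mathrm{OPT} := |E(F^*)|$ edges, and properly $(d+1)$-color the red graph $R = G/\P$, with color classes $\mathcal C_1, \dots, \mathcal C_{d+1}$. Call an edge of $F^*$ \emph{intra} if both its endpoints lie in parts of the same color class (in particular if both lie in the same part), and \emph{crossing} otherwise. Since $F^*$ is a forest it contains no induced $C_4$, so across any pair of parts joined by a black edge of $G/\P$ the forest behaves like a near-star (one side contributes at most one vertex), while across red pairs it is controlled by the bound $d$ on the red degree. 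One of the two families carries at least $\mathrm{OPT}/2$ edges, and we treat each.

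For the intra family, inside a color class $\mathcal C_j$ the graph $G[\bigcup \mathcal C_j]$ has an explicit modular decomposition into at most $\sqrt n$ modules of size at most $d\sqrt n$ whose quotient is a black/non-edge trigraph. This is the main obstacle, and the reason a plain recursion on \mif does not close: a forest on $v$ vertices with $c$ components has exactly $v-c$ edges, so maximizing the number of edges of an induced forest on a modular decomposition is \emph{not} simply a matter of choosing a forest in the quotient and forests in the modules — when a black quotient edge links two chosen module-subsets, both the number of edges it contributes and whether it merges two trees into one depend on how those subsets are wired together. We therefore pass, in the spirit of \setcoloring for \coloring and of the weighted-with-partitions generalization behind \Cref{thm:intro-mihp}, to an \emph{annotated} version of the problem: an instance carries a bounded partition of its vertex set, recording the meta-part each vertex was committed to higher up in the recursion, together with, for each of the $O_d(1)$ admissible patterns of a bounded tuple of parts, a weight equal to the maximum number of edges of a pattern-respecting induced forest drawn entirely within those vertices, and the number of its components. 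With this annotation, recombining the module solutions is itself an instance of the annotated problem on the $O(\sqrt n)$-vertex quotient trigraph equipped with the per-module weights, so the recursion closes; the annotation has bounded-size state, hence the base case $q = 0$ is a brute force in time $2^{O_d(n)}$ and each contraction step multiplies the number of subinstances only by $O_d(1)$.

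For the crossing family, as for \mim we additionally properly color a suitable bounded-degree auxiliary graph built from the red graph of $G/\P$ and a power of the line graph of its total graph, splitting the crossing edges of $F^*$ into $O_d(1)$ pieces, each of which — up to the same $O_d(1)$ loss — is an induced forest living in an induced subgraph either of $G$ or of the total graph of $G/\P$; the latter retains the same bound on versatile twin-width (as already exploited in \Cref{thm:intro-mihp}), so both are legitimate recursive subinstances of the annotated problem, completing the step. The case of \misf is a strict simplification: a star forest is exactly a $P_4$-free forest, so across a black pair of parts the chosen vertices collapse into a single star and the annotation need only record, per near-complete tuple, whether a center has already been chosen. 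Finally, for $\mathcal H$ the set of all cliques (cluster graphs, i.e.\ $P_3$-free graphs) the step is even closer to \Cref{thm:intro-mis2}: across a black pair, two chosen nonempty subsets must each induce a clique and merge into one, so it suffices to adapt the \smis step by precomputing, per module, its profile of maximum clique-packings and combining these in the quotient.
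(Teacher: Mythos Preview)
Your overall architecture (iterate a contraction step to depth $q$, then pick $q=\Theta(\log\log n)$) is the right one, but the contraction step you sketch for \mif is not a proof. You correctly identify the obstacle---forests do not compose through a modular decomposition because the edge count depends on which components merge---yet your fix, an ``annotated version'' carrying ``for each of the $O_d(1)$ admissible patterns \ldots\ a weight \ldots\ and the number of its components,'' is not well-defined. The number of components of an induced forest inside a module is unbounded, and knowing that number is anyway insufficient to recombine: whether two subsolutions merge across a black quotient edge depends on \emph{which vertices} are selected on each side, not on a bounded-size summary. You assert the state is bounded but never say what it is; as written, the recursion does not close. (A smaller issue: you claim the step produces $O_d(1)$ subinstances, but one must recurse on each of the $\lfloor\sqrt n\rfloor$ parts and on each red edge of $G/\P$, i.e., $n^{O(1)}$ subinstances, which the framework permits but your statement does not.)

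The paper avoids this entirely by a reduction you skip: any induced forest can be edge-partitioned into at most three induced star forests, so an approximation for \misf yields one for \mif at the cost of a constant factor. It is then \misf that is generalized, and the generalization is concrete and simple: \mlisf, where vertices carry nonnegative weights and the objective is the total weight of the \emph{leaves}. The contraction step for \mlisf splits the edges of an optimum into three types---both endpoints in one part ($A_v$), endpoints across a red edge of $G/\P$ ($A_r$), endpoints across a black edge ($A_b$)---and handles each with explicit subroutines: recursive \mlisf on parts followed by \wis on the black quotient for $A_v$; recursive \mlisf on red-edge pairs followed by \msim on the quotient for $A_r$; and \wis on parts followed by recursive \mlisf on the black quotient (plus a $(2d{+}1)$-coloring to kill red conflicts) for $A_b$. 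Each call is on an induced subgraph of $G$ of size $O_d(\sqrt n)$ or on a \fullregu of an induced subtrigraph of $G/\P$, so the induction of \cref{lem:induction-approx-fast} applies with no need for any pattern annotation.
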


As we already mentioned, our framework is exclusively useful for problems that are very inapproximable in general graphs; at least for which an $n^{\varepsilon}$-approximation algorithm is not known for every $\varepsilon > 0$.
Are there natural such problems that cannot be approximated better in graphs of bounded twin-width?
We answer this question positively with the example of \mids.

\begin{theorem}\label{thm:intro-inapprox-mids}
For every $\varepsilon > 0$, \mids does not admit an $n^{1-\varepsilon}$-approximation algorithm in $n$-vertex graphs given with an $O(1)$-sequence, unless P$=$NP.   
\end{theorem}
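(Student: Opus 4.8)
The plan is to give a polynomial-time reduction from $3$-SAT to \mids that outputs graphs of bounded twin-width, together with a polynomial-time algorithm computing an $O(1)$-sequence of the output; a gap in the independent domination number between satisfiable and unsatisfiable instances then yields the claim. Write $\iota(G)$ for the minimum size of a maximal independent set of $G$ (equivalently, of an independent dominating set). I would start from the known $n^{1-\varepsilon}$-inapproximability of \mids in general graphs and, more importantly, from the \emph{shape} of a reduction witnessing it: given $\phi$, one builds $G_\phi$ on $N = \lvert \phi\rvert^{O(1)}$ vertices so that $\iota(G_\phi)$ is ``small'' (at most $N^{o(1)}$, say) when $\phi$ is satisfiable and ``large'' (at least $N^{1-o(1)}$) otherwise --- a \emph{polynomial} gap, obtained for instance by attaching to the usual variable and clause gadgets a very large penalty structure (a big pendant independent set) per clause, sized as a large polynomial in $\lvert\phi\rvert$; crucially, this makes the part of $G_\phi$ carrying the gap polynomially bigger than the only ``unstructured'' part of $G_\phi$, namely the bipartite clause--literal incidence, which has only $O(\lvert\phi\rvert)$ edges.

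The obstruction to $G_\phi$ having bounded twin-width is exactly that incidence bipartite graph: every other piece (variable gadgets, penalty gadgets, the linking vertices) is a bounded-size graph glued locally at vertices, a pendant independent set, or a bounded number of apex-like vertices, and all of these keep twin-width bounded, constructively. Let $F$ be the set of clause--literal incidence edges, $\lvert F\rvert = O(\lvert\phi\rvert)$. I would form $G'_\phi$ from $G_\phi$ by replacing each edge of $F$ by a path with $\ell = \Theta(\log N)$ internal vertices, the value $\ell$ chosen in a fixed residue class modulo $3$ (pinned down in the analysis). Since $G'_\phi$ arises from an $N^{O(1)}$-vertex graph by subdividing each edge $\Omega(\log N)$ times and then doing only the above ``tame'' operations, the fact that $\Omega(\log n)$-subdivisions of $n$-vertex graphs have bounded twin-width, combined with the (constructive) stability of bounded twin-width under attaching constant-size gadgets, pendant vertices, pendant independent sets, and a bounded number of arbitrary vertices, gives $\tww(G'_\phi) = O(1)$ together with a polynomial-time algorithm for an $O(1)$-sequence of $G'_\phi$.

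What remains is to show the surgery keeps the gap. The key lemma I would prove has the form $\iota(G'_\phi) = \iota^{\ast}(G_\phi) + \beta\lvert F\rvert$, where $\beta = \beta(\ell) = \Theta(\log N)$ is a fixed ``per subdivided path'' cost and $\iota^{\ast}$ is a perturbation of $\iota$ that differs from $\iota(G_\phi)$ by only $O(1)$ per clause. Granting it, when $\phi$ is satisfiable $\iota(G'_\phi) \leqslant N^{o(1)} + O(\lvert\phi\rvert\log N) = N^{o(1)}$, while when $\phi$ is unsatisfiable $\iota(G'_\phi) \geqslant N^{1-o(1)}$ --- the additive subdivision cost is negligible next to the polynomial gap. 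Since $\lvert V(G'_\phi)\rvert = N^{O(1)}$, this is a gap of at least $\lvert V(G'_\phi)\rvert^{1-\varepsilon}$ (tuning the penalty polynomial to $\varepsilon$), so an $n^{1-\varepsilon}$-approximation algorithm for \mids on inputs given with an $O(1)$-sequence would decide $3$-SAT in polynomial time, forcing P$=$NP.

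The main obstacle is the key lemma: edge subdivision is \emph{not} transparent for independent domination the way it is for \mis or \coloring, since each subdivided path must itself be dominated (hence the additive $\beta\lvert F\rvert$) and a maximal independent set could a priori ``cheat'' by exploiting a path in an unintended configuration --- for example an unsatisfied clause dodging its penalty by dominating its penalty gadget along a subdivision path rather than through a true literal. The residue of $\ell$ modulo $3$ must be picked so that, for every maximal independent set $S$ of $G'_\phi$, the number of vertices $S$ places on each subdivided path is determined, up to an additive constant, by which of the two path-endpoints lies in $S$, and so that no such rerouting ever undercuts the intended solution; establishing this by a case analysis over the paths, and verifying that each twin-width closure step cited above is genuinely algorithmic, is where the work concentrates. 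None of the machinery of the positive results (the balanced partition sequences, the colouring of red graphs, the recursive time--approximation trade-off) is needed here.
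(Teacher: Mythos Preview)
Your approach via subdivision is genuinely different from the paper's, but it has a real gap that the residue-mod-$3$ trick cannot close. The paper sidesteps subdivision entirely: it reduces from \textsc{Planar 3-SAT} with each literal occurring at most twice, applies Halld\'orsson's construction verbatim (a triangle $d_i,t_i,f_i$ per variable; per clause $C_j$ an independent set $I_j$ of size $r$ whose every vertex is adjacent to the three literal vertices of $C_j$), and observes that contracting each $I_j$ (a set of twins) and then each triangle yields a planar trigraph of maximum total degree at most~$4$, hence twin-width at most~$9$. No subdivision is performed, so the standard $N$-versus-$r$ gap survives untouched.

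The problem with your route is a size mismatch between $F$ and the gap. In Halld\'orsson's reduction the penalty vertices are \emph{themselves} adjacent to the literals---that direct adjacency is precisely what forces an unsatisfied clause to pay $r$---so the number of literal--penalty incidence edges is $\Theta(rM)$, not $O(|\phi|)$. Subdividing all of them forces every independent dominating set, satisfiable instance or not, to spend $\Theta(rM\ell)$ vertices on the subdivision paths; since the intended gap $r-N$ is $o(rM\ell)$, the multiplicative gap collapses to $1+o(1)$. If instead you switch to a construction with a single clause vertex $c_j$ and a truly pendant independent set $P_j$ (so that $|F|=3M=O(|\phi|)$ as you write), the gap is lost before any subdivision: $\{d_i\}_i \cup \{c_j\}_j$ is an independent dominating set of size $N+M$ regardless of satisfiability, because choosing $c_j$ dominates all of $P_j$ while freeing you from committing to any literal of $C_j$. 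In short, your ``key lemma'' $\iota(G'_\phi) = \iota^\ast(G_\phi) + \beta|F|$ cannot simultaneously have $|F|=O(|\phi|)$ and a polynomial gap in $\iota^\ast$; the obstruction is structural, not a matter of per-path case analysis. The fix is exactly the paper's: start from a planar bounded-occurrence SAT variant so the incidence structure is already tame and no subdivision is needed.
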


The reduction is the same as the one for general graphs~\cite{Halldorsson93}, but performed from a planar variant of \textsc{3-SAT}.
The obtained instances are not planar but can be contracted to planar trigraphs, hence overall have bounded twin-width.

Finally the case of \lpath and \ipath is interesting.
The best approximation factor for the former~\cite{Gabow08} is worse than $n^{0.99}$, while the latter is known to have the same inapproximability as \smis~\cite{Lund93}.
However an $n^\varepsilon$-approximation algorithm (for every $\varepsilon>0$) is not excluded for \lpath.
We show that the property of bounded twin-width is unlikely to help for these two problems, as it would lead to better approximation algorithms for \lpath in general graphs.
This is mainly because subdividing at least $2 \log n$ times every edge of any $n$-vertex graph gives a graph with twin-width at most~4~\cite{Berge21}. 

\begin{theorem}\label{thm:intro-inapprox-path}
   For any $r=\omega(1)$, an $r$-approximation for \ipath or \lpath on graphs given with an $O(1)$-sequence would imply a~$(1+o(1))r$-approximation for \lpath in general graphs.
\end{theorem}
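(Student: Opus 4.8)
The plan is to reduce, in polynomial time, the approximation of \lpath on a general $n$-vertex graph $G$ to the approximation of \lpath (respectively \ipath) on a graph of twin-width at most $4$, by subdividing every edge $\Theta(\log n)$ times. Concretely, I would set $t := \lceil 2\log n\rceil$ and let $G'$ be obtained from $G$ by subdividing each edge exactly $t$ times; then $G'$ has $O(n^2\log n)$ vertices, its vertices of degree at least $3$ are exactly those of $G$ (the \emph{branch vertices}), and every edge $uv$ of $G$ becomes an internally disjoint $u$--$v$ path with $t$ new vertices. By \cite{Berge21} we have $\tww(G') \le 4$; moreover --- a point I would state carefully --- the contraction sequence in that proof is explicit, so an $O(1)$-sequence of $G'$ can be output in polynomial time, making $G'$ together with this sequence a valid input to the assumed approximation algorithm.

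The heart of the argument is a two-way comparison of the optima. First, if $u_0u_1\cdots u_L$ is \emph{any} path of $G$, subdividing it yields a path $P'$ of $(t+1)L$ edges in $G'$, and $P'$ is in fact \emph{induced} in $G'$: every edge of $G'$ is incident to a subdivision vertex, so a chord of $P'$ would have to join two branch vertices $u_i, u_j$ of $P'$, which are non-adjacent in $G'$ since the subdivided edge between them (if $u_iu_j \in E(G)$) has length $t+1 \ge 2$. Hence a longest path of $G'$ and a longest induced path of $G'$ each have at least $(t+1)$ times the length of a longest path of $G$. Second, from \emph{any} path $P'$ of $G'$ one extracts in polynomial time a path of $G$ of length at least $\lceil |E(P')|/(t+1)\rceil - 2$: reading off the branch vertices $v_0, v_1, \dots, v_m$ of $P'$ in order, each passage from $v_i$ to $v_{i+1}$ must traverse the whole subdivided edge $v_iv_{i+1}$ (subdivision vertices have degree $2$), so $v_0v_1\cdots v_m$ is a path of $G$ and $|E(P')| \le m(t+1) + 2t < (m+2)(t+1)$, where $2t$ bounds the two possible dangling ends of $P'$ lying strictly inside subdivided edges.

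Given an $r$-approximation $\mathcal A$ for \lpath (or for \ipath) on graphs supplied with an $O(1)$-sequence, the general-graph algorithm $\mathcal B$ would, on input $G$ with $|E(G)| \ge 1$: build $G'$ with its sequence, run $\mathcal A$ to obtain a path $P'$, extract the branch-vertex path $P$ of $G$ from $P'$, and, separately, compute by color coding (in polynomial time) a longest path $P_0$ of $G$ among those with at most $\lceil \log n\rceil$ edges; then output the longer of $P$ and $P_0$. Writing $L$ for the length of a longest path of $G$, the first comparison gives $|E(P')| \ge (t+1)L/r$, hence $|E(P)| \ge L/r - 2$, while $|E(P_0)| = \min(L, \lceil\log n\rceil)$. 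If $L \le \lceil \log n\rceil$ the output is optimal; if $\lceil \log n\rceil < L \le r\lceil \log n\rceil$ the output has at least $\lceil \log n\rceil \ge L/r$ edges, for ratio $\le r$; and if $L > r\lceil \log n\rceil$ the output has at least $L/r - 2 = (L/r)(1 - 2r/L) \ge (L/r)(1 - 2/\lceil\log n\rceil)$ edges, for ratio $\le r/(1 - 2/\lceil\log n\rceil) = (1+o(1))r$. So $\mathcal B$ is a polynomial-time $(1+o(1))r$-approximation for \lpath in general graphs, and the same argument applies verbatim whether $\mathcal A$ approximates \lpath or \ipath on $G'$ (only the first comparison uses that the subdivided path is induced).

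The two points I expect to need the most care are: confirming that \cite{Berge21} provides the $O(1)$-sequence of $G'$ \emph{constructively} in polynomial time (indispensable, since $\mathcal A$ requires an explicit sequence); and the small-$L$ bookkeeping, since a single call to $\mathcal A$ on $G'$ only controls the ratio once $L = \omega(r)$ --- which is exactly why the color-coding fallback for paths of length up to $\log n$ is built in. The remaining steps are routine.
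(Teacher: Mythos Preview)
Your approach is essentially the same as the paper's: subdivide each edge $\Theta(\log n)$ times, invoke \cite{Berge21} for the explicit 4-sequence, and exploit the tight (up to additive $O(\log n)$) correspondence between path lengths in $G$ and in $G'$. Your treatment is in fact more careful than the paper's terse argument---the color-coding fallback for short optima is a detail the paper glosses over---though your one-line justification that the subdivided path is induced (``a chord would have to join two branch vertices'') has the right conclusion but the wrong reasoning: the point is rather that every subdivision vertex on $P'$ already has both of its neighbors consecutive on $P'$, so no chord can be incident to it.
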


In turn, this can be used to exhibit a family $\mathcal H$ with an infinite antichain for the \emph{induced subgraph} relation such that \mihp{\mathcal H} is \emph{hard} to $n^\varepsilon$-approximate on graphs of bounded twin-width.
The family $\mathcal H$ is simply the set of all paths terminated by triangles at both ends.

\begin{theorem}\label{thm:intro-inapprox-mihp-dec-paths}
 There is an infinite family $\mathcal H$ of connected graphs such that if for every $\varepsilon >0$, \mihp{\mathcal H} admits an $n^\varepsilon$-approximation algorithm on $n$-vertex graphs given with an $O(1)$-sequence, then so does \lpath on general graphs. 
\end{theorem}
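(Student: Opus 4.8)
The plan is to reduce approximating \lpath on a general $n$-vertex graph $G$ to approximating \mihp{\mathcal H} on a graph of bounded twin-width with $O(n^2\log n)$ vertices, in a way that preserves the approximation ratio up to a factor $1+o(1)$; combined with the hypothesised $n^\varepsilon$-approximation for every $\varepsilon>0$ (and a brute-force treatment of instances with a bounded-length longest path), picking $\varepsilon$ small compared with the target then gives, for every $\varepsilon'>0$, an $n^{\varepsilon'}$-approximation for \lpath in general graphs, much as in Theorem~\ref{thm:intro-inapprox-path}. Here $\mathcal H$ is the set of all \emph{barbells}: for $k\ge 2$, $B_k$ is the graph obtained from the $k$-vertex path by attaching a triangle at each of its two ends, so $B_k$ has $k+4$ vertices, $k+5$ edges, and exactly two triangles (its two caps). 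Since any induced subgraph of $B_k$ that is a disjoint union of members of $\mathcal H$ and contains two triangles must already contain both caps together with the whole path linking them, the $B_k$ form an infinite antichain for the induced-subgraph order, which is what makes the statement non-vacuous.

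Given $G$, I would first form $G^+$ by adding two vertices $s,t$, each made adjacent to every vertex of $G$. Using the fact that subdividing every edge of any $N'$-vertex graph at least $2\log N'$ times produces a graph of twin-width at most $4$ together with an explicit $O(1)$-sequence~\cite{Berge21}, let $\hat G_0$ be $G^+$ with every edge subdivided exactly $N:=2\lceil\log(n+2)\rceil$ times, and let $\hat G$ be $\hat G_0$ with a triangle attached at $s$ and a triangle attached at $t$. Attaching these two constant-size caps raises the red degree of the $O(1)$-sequence of $\hat G_0$ by only a constant, so $\hat G$ has bounded twin-width, is given with an $O(1)$-sequence, and has $\hat n=O(n^2\log n)$ vertices. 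The key observation is that $\hat G_0$ is triangle-free (every edge has been subdivided), so $\hat G$ has \emph{exactly two} triangles, namely the two caps. Hence every feasible \mihp{\mathcal H} solution on $\hat G$, being a disjoint union of barbells that each consume two triangles, consists of \emph{at most one} barbell; if nonempty, this barbell uses both caps, so its path part is an induced $s$--$t$ path of $\hat G_0$. Such a path is necessarily of the shape ``$s$, an arm of $N$ subdivision vertices, an original vertex $v$, the $N$-subdivision of a simple path of $G$ from $v$ to some $w$, an arm of $N$ subdivision vertices, $t$'', since the degree-two subdivision vertices force each subdivided edge to be traversed in full and hence forbid any shortcut or detour; conversely every simple path of $G$ produces such an induced $s$--$t$ path after subdivision, because subdivision destroys all chords. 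Writing $L(G)$ for the number of vertices of a longest path of $G$, it follows that the optimum of \mihp{\mathcal H} on $\hat G$ equals $(N+1)L(G)\pm O(N)$ --- that is, $(N+1)L(G)(1\pm o(1))$ once $L(G)$ exceeds a constant --- whether the packing is measured by vertices or by edges, as there is at most one component.

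Finally, from an $r$-approximate \mihp{\mathcal H} solution on $\hat G$ I would take its unique barbell, discard the caps and the two arms, and contract each subdivided edge of the remaining middle portion back to a single edge of $G$; as noted this yields a genuine simple path of $G$ with at least $L(G)/(r(1+o(1)))-O(1)$ vertices, hence a $(1+o(1))r$-approximation for \lpath on $G$ (when $L(G)=O(1)$, simply output a longest path found by brute force). Since $r=\hat n^\varepsilon=(O(n^2\log n))^\varepsilon=n^{2\varepsilon(1+o(1))}$, taking $\varepsilon<\varepsilon'/2$ delivers the claimed $n^{\varepsilon'}$-approximation. The main obstacle is precisely what this construction is designed to overcome: a priori \mihp{\mathcal H} could prefer packing many short barbells to a single long one, and then an approximately optimal packing would say nothing about a single long path. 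Forcing the host graph to contain only two triangles collapses every feasible solution to one barbell; the long subdivision achieves this while simultaneously buying bounded twin-width, the only other point to check being that adding $s$ and $t$ adjacent to all of $V(G)$ before subdividing does no harm --- it does not, since the $2\log$-subdivision bound of~\cite{Berge21} holds for every graph unconditionally.
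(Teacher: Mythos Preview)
Your proposal is correct and follows essentially the same approach as the paper: take $\mathcal H$ to be the family of ``decorated paths'' (your barbells), subdivide the host graph $\Omega(\log n)$ times so that it simultaneously acquires bounded twin-width and loses all triangles, and then attach exactly two triangle caps so that any feasible $\mathcal H$-packing consists of a single barbell, which encodes a long path. The only cosmetic difference is that the paper attaches the two caps at an arbitrary pair $u,v\in V(G)$ and ranges over all ${n\choose 2}$ such pairs, whereas you add two universal vertices $s,t$ before subdividing and attach the caps there, so a single instance suffices; both variants work and the analysis is the same.
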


\cref{tbl:app-problems} summarizes our results and hints at future work.

\begin{table}[h!]
  \centering
\begin{tabular}{lcccc}
  \toprule
  ~~~~Problem name &  lower bound       &  upper bound                                                      & lower bound      \\
                   &  general graphs    &   bounded $\tww$                                                & bounded $\tww$ \\
  \midrule
  \mis    &  $n^{1-\varepsilon}$  & \framebox{$n^\varepsilon$}                                           & ?, self-improvement \\
  \coloring &  $n^{1-\varepsilon}$  & \framebox{$n^\varepsilon$}                                           & $4/3-\varepsilon$ \\
  \mim      & $n^{1-\varepsilon}$  & \framebox{$n^\varepsilon$}                                           & ? \\
  \textsc{Mut. Ind. $H$-Packing} & $n^{1-\varepsilon}$  & \framebox{$n^\varepsilon$ ($H$ connected)}      & ? \\
  \textsc{Mut. Ind. $\mathcal H$-Packing}      & $n^{1-\varepsilon}$  & \framebox{$n^\varepsilon$ for some $\mathcal H$}& \framebox{\lpath-hard} \\
  \textsc{Min Ind. Dom. Set}     & $n^{1-\varepsilon}$   & $n/\polylog(n)$                              & \framebox{$n^{1-\varepsilon}$} \\
  \lpath &  $2^{\log^{1-\varepsilon}n}$     & $n/\exp(\Omega(\sqrt{\log n}))$                           & \framebox{\lpath-hard}   \\
  \ipath &  $n^{1-\varepsilon}$                      & $n/\polylog(n)$                              & \framebox{\lpath-hard} \\
  \mds   &  $(1-\varepsilon)\ln n$ & $O(1)$ & ? \\
  \bottomrule
\end{tabular}
\caption{Approximability status of graph problems in general graphs and in graphs of bounded twin-width given with an $O(1)$-sequence.
  Everywhere ``$\varepsilon$'' should be read as ``$\forall \varepsilon > 0$''.
  Our results are enclosed by boxes.
  ``\lpath-hard'' means that getting an $r$-approximation would yield essentially the same ratio for \lpath in general graphs.
  The other lower bounds are under standard complexity-theoretic assumptions, mostly P$\neq$NP.
  Not to clutter the table, we do not put the references, which can all be found in the paper.}
\label{tbl:app-problems}
\end{table}

For the main highly inapproximable graph problems, we either obtain an $n^\varepsilon$-approximation algorithm on graphs of bounded twin-width given with an $O(1)$-sequence, or a conditional obstruction to such an algorithm.
In the former case, can we improve further the approximation factor?
The next theorem was observed using the self-improvement reduction of~Feige et al.~\cite{Feige91}, which preserves the twin-width bound.
This reduction consists of going from a graph $G$ to the lexicographic product $G[G]$, where every vertex of $G$ is replaced by a module inducing a copy of $G$ (and iterating this trick).

\begin{theorem}[\cite{twin-width3}]\label{thm:intro-self-improvement}
  Let $r: \mathbb N \to \mathbb R$ be any non-decreasing function such that for every $\varepsilon>0$, $r(n)=o(n^\varepsilon)$.
  If \mis admits an $r(n)$-approximation algorithm on $n$-vertex graphs of bounded twin-width given with an $O(1)$-sequence, then it further admits an $r(n)^\varepsilon$-approximation.
\end{theorem}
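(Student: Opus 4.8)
The plan is to run the classical self-improvement argument of Feige et al.~\cite{Feige91} through lexicographic powers, checking that every step keeps us inside the class of bounded twin-width graphs supplied with an $O(1)$-sequence. Write $G^{[1]} := G$ and $G^{[k]} := G[G^{[k-1]}]$, so $G^{[k]}$ has $n^k$ vertices (its vertex set is $V(G)^k$), two tuples being adjacent exactly when their coordinates at the smallest index where they differ form an edge of $G$. Three ingredients are needed. \textbf{(i)} $\alpha(G^{[k]}) = \alpha(G)^k$: a product of $k$ independent sets of $G$ is independent in $G^{[k]}$, and conversely in any independent set $S$ of $G[H]$ the set of first coordinates occurring in $S$ is independent in $G$ while, for each such $u$, the fiber $\{x : (u,x) \in S\}$ is independent in $H$, whence $|S| \le \alpha(G)\,\alpha(H)$; one then inducts on $k$. \textbf{(ii)} $\tww(G^{[k]}) \le \tww(G)$, and from a $d$-sequence of $G$ a $d$-sequence of $G^{[k]}$ is computable in time $\mathrm{poly}(n^k)$: in $G[H]$ each set $\{u\} \times V(H)$ is a module inducing a copy of $H$, so one first contracts, module by module, following the $d$-sequence of that copy of $H$; a red edge created inside a module never escapes it (all its vertices see the rest of the trigraph identically), so the red degree stays at most $\tww(H)$, and one reaches the quotient trigraph, which is $G$ with only black edges; contracting it along the $d$-sequence of $G$ finishes with red degree at most $\tww(G)$. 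Iterating gives $\tww(G^{[k]}) \le \tww(G)$. \textbf{(iii)} From any independent set $S$ of $G^{[k]}$ one can compute, in time $\mathrm{poly}(n^k)$, an independent set of $G$ of size at least $|S|^{1/k}$.

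Ingredient (iii) is the only non-routine step, and I would prove it by recursion along $G^{[k]} = G[G^{[k-1]}]$ --the constructive version of the inequality in (i). Given $S$, let $I_1$ be the set of first coordinates occurring in it (an independent set of $G$); for $u \in I_1$ put $S_u := \{x : (u,x) \in S\}$, pick $u^\star$ maximising $|S_u|$, set $M := |S_{u^\star}|$, recurse on the independent set $S_{u^\star}$ of $G^{[k-1]}$ to get an independent set $J$ of $G$ with $|J| \ge M^{1/(k-1)}$, and return the larger of $I_1$ and $J$ (the base case $k=1$ returns $S$). Since $|S| = \sum_{u \in I_1} |S_u| \le |I_1| \cdot M$, and since $\max\bigl(X, M^{1/(k-1)}\bigr) \ge (XM)^{1/k}$ for all $X, M \ge 0$ and $k \ge 2$ (split according to whether $X \ge M^{1/(k-1)}$), the returned set has size at least $\max\bigl(|I_1|, M^{1/(k-1)}\bigr) \ge |S|^{1/k}$. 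The recursion has depth $k$ and does $\mathrm{poly}(n^k)$ work in total.

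Assembling: on input $(G,\ d\text{-sequence},\ \varepsilon)$, dispose of bounded $n$ by brute force, which is exact and therefore within the required factor $r(n)^\varepsilon \ge 1$; for large $n$, choose a suitable $k = k(n,\varepsilon)$, build $G^{[k]}$ with its $d$-sequence by (ii), apply the assumed $r$-approximation to obtain an independent set $S$ with $|S| \ge \alpha(G)^k / r(n^k)$ by (i), and decode $S$ by (iii) into an independent set of $G$ of size at least $|S|^{1/k} \ge \alpha(G)/r(n^k)^{1/k}$ --an $r(n^k)^{1/k}$-approximation, in time $\mathrm{poly}(n^k)$. The crux, and the one delicate point, is to pick $k$ so that $r(n^k)^{1/k} \le r(n)^\varepsilon$ for all large $n$; this is exactly where the hypotheses that $r$ is non-decreasing and that $r(n) = o(n^\delta)$ for every $\delta > 0$ are used. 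For moderately growing $r$ (bounded, polylogarithmic, $\exp(\mathrm{polylog}\,n)$, and the like) a constant $k = k(\varepsilon)$ works, so the resulting algorithm is polynomial-time whenever the assumed one is; for the slowest subpolynomial rates (such as $r(n) = n^{1/\log\log n}$) one must let $k$ grow with $n$ --of order $(\log n)^{O(1/\varepsilon)}$-- so that $n^k$, and hence the running time, becomes quasi-polynomial. Carrying out this calibration uniformly over all admissible $r$ is the main thing to get right; afterwards, iterating the whole construction $t$ times yields an $r(n)^{\varepsilon^t}$-approximation for every $t \in \mathbb N$, which is the precise sense in which no ``clean'' bound below $n^\varepsilon$ can be singled out.
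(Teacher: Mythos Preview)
Your approach is exactly what the paper indicates (and does not itself prove, the theorem being quoted from~\cite{twin-width3}): the Feige et al.\ self-improvement via lexicographic powers $G^{[k]}$, together with the observation that twin-width is preserved under lexicographic product and that a $d$-sequence for $G^{[k]}$ can be assembled from one for $G$ by contracting module by module. Your write-up fills in all three ingredients and the decoding step correctly, and your honest flagging of the running-time subtlety for the slowest admissible growth rates of $r$ is in fact more careful than what either source makes explicit.
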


To our knowledge, the application of the self-improvement trick is always to strengthen a~lower bound, and never to effortlessly obtain a~better approximation factor.
Therefore, we may take~\cref{thm:intro-self-improvement} as a weak indication that our approximation ratio is best possible.
Still, not even a~polynomial-time approximation scheme (PTAS) is ruled out for \smis (nor for \mim, \mds, etc.) and we would like to see better approximation algorithms. 
For \coloring, as was previously observed~\cite{twin-width3}, a~PTAS is ruled out by the NP-hardness of deciding if a planar graph is 3-colorable or 4-chromatic, since planar graphs have twin-width at most~9 and a 9-sequence can be found in linear time~\cite{planar-tww}.

\section{Preliminaries}\label{sec:prelim}

For $i$ and $j$ two integers, we denote by $[i,j]$ the set of integers that are at least $i$ and at most~$j$.
For every integer $i$, $[i]$ is a shorthand for $[1,i]$.

\subsection{Handled graph problems}

We will consider several problems throughout the paper.
We recall here the definition of the most central ones.
Some technical problem generalizations will be defined along the way.

\defoptproblem{\wis (\swis, for short)}{A graph $G$ and a weight function $V(G) \to \Q$.}{A set $S \subseteq V(G)$ such that $\forall u,v \in S$, $uv \notin E(G)$ maximizing $w(S) := \sum\limits_{v \in S}w(v)$.}

A feasible solution to \swis is called an \emph{independent set}.
The \mis (\smis, for short) problem is the particular case with $w(v)=1$, $\forall v \in V(G)$.
We may denote by $\alpha(G)$, the \emph{independence number}, that is the optimum value of \swis on graph $G$.

\defoptproblem{\coloring}{A graph $G$.}{A partition $\P$ of $V(G)$ into independent sets minimizing the cardinality of~$\P$.}

Equivalently, \coloring can be expressed as finding an integer $k$ and a~map $c: V(G) \to [k]$ such that for every $uv \in E(G)$, $c(u) \neq c(v)$, while minimizing $k$.

\defoptproblem{\mim}{A graph $G$, possibly together with a weight function $w: E(G) \to \Q$.}{A set $S \subseteq E(G)$ such that $\forall uv \neq u'v' \in S$, $\{u,v\} \cap \{u',v'\} = \emptyset$ and $G[\{u,v,u',v'\}]$ has exactly two edges, maximizing $w(S) := \sum\limits_{e \in S}w(e)$.}

An~\emph{induced matching} is a~pairwise disjoint set of edges (i.e., a~matching) with no edge bridging them.
We now give a common generalization of \swis and \mim.

\defoptproblem{\mihp{\mathcal H}}{A graph $G$, possibly together with a weight function $w: V(G) \to \Q$.}{A set $S \subseteq V(G)$ such that $G[S]$ is a disjoint union of graphs each isomorphic to a graph in $\mathcal H$, maximizing $w(S) := \sum\limits_{v \in S}w(v)$.}

When $\mathcal H$ consists of a single graph, say $H$, we simply denote the former problem \mihp{H}.
\swis and \mim are the special cases when $H$ is a vertex and an edge, respectively.

\subsection{The contraction and partition viewpoints of twin-width}

A~\emph{trigraph} $G$ has vertex set $V(G)$, black edge set $E(G)$, red edge set $R(G)$ such that $E(G) \cap R(G) = \emptyset$ (and $E(G),R(G) \subseteq {V(G) \choose 2}$).
A~\emph{contraction} in a trigraph $G$ replaces a pair of (non-necessarily adjacent) vertices $u, v \in V(G)$ by one vertex $w$ that is linked to $G-\{u,v\}$ in the following way to form a new trigraph $G'$.
For every $z \in V(G) \setminus \{u,v\}$, $wz \in E(G')$ whenever $uz, vz \in E(G)$, $wz \notin E(G') \cup R(G')$ whenever $uz, vz \notin E(G) \cup R(G)$, and $wz \in R(G')$, otherwise.
The \emph{red graph} $(V(G),R(G))$ will be denoted by $\mathcal R(G)$.
We denote by $\mathcal T(G)$ the \emph{total graph} of $G$ defined as $(V(G),E(G) \cup R(G))$.
An \emph{induced subtrigraph} of a trigraph $G$ is obtained by removing vertices (but no edges) to $G$, analogously to induced subgraphs.
A partial contraction sequence of an $n$-vertex (tri)graph $G$ (to a trigraph $H$) is a sequence of trigraphs $G = G_n, \cdots, G_t = H$ for some $t \in [n]$ such that $G_i$ is obtained from $G_{i+1}$ by performing one contraction.
A~(complete) contraction sequence is such that $t=1$, that is, $H$~is the 1-vertex trigraph.
A~\emph{$d$-sequence} $\mathcal S$ of $G$ is a contraction sequence of $G$ in which the red graph of every trigraph of $\mathcal S$ has maximum degree at most~$d$. 

Assume that there is a partial contraction sequence from a (tri)graph $G$ to a trigraph $H$.
If $u$ is a vertex of $H$, then $u(G) \subseteq V(G)$ denotes the set of vertices eventually contracted into $u$ in $H$.
We denote by $\P(H)$ the partition $\{u(G) : u \in V(H)\}$ of $V(G)$.
If $G$ is clear from the context, we may refer to a \emph{part} of $H$ as any set in $\{u(G) : u \in V(H)\}$.
We will mostly see $d$-sequences as sequences of partitions, that is, $\P_n, \ldots, \P_t$ with $\P_i := \{u(G) : u \in V(G_i)\}$ when $G_n, \ldots, G_t$ is a partial (contraction) $d$-sequence. 



Given a graph $G$ and a partition $\P$ of $V(G)$, the \emph{quotient graph} of $G$ with respect to $\P$ is the graph with vertex set $\P$, where $PP'$ is an edge if there is $u \in P$ and $v \in P'$ such that $uv \in E(G)$.
Given a (tri)graph $G$ and a partition $\P$ of $V(G)$, the \emph{quotient trigraph} $G/\P$ is the trigraph with vertex set $\P$, where $PP'$ is a black edge if for every $u \in P$ and every $v \in P'$, $uv \in E(G)$, and a red edge if either there is $u \in P$ and $v \in P'$ such that $uv \in R(G)$, or there is $u_1, u_2 \in P$ and $v_1, v_2 \in P'$ such that $u_1v_1 \in E(G)$ and $u_2v_2 \notin E(G)$.

A~trigraph $H$ is a \emph{\regu} of another trigraph $G$ if $V(H)=V(G)$, $R(H) \subseteq R(G)$, and $E(G) \subseteq E(H) \subseteq E(G) \cup R(G)$.
That is, $H$ is obtained from $G$ by turning some of its red edges into black edges or non-edges.
We further say that $H$ is \emph{\fullregu} of $G$ if $H$ has no red edge, and thus, is considered as a graph. 
Note that the total graph $\mathcal T(G)$ and the \emph{black graph} $(V(G),E(G))$ of a trigraph $G$ are extreme examples of \fullregus of~$G$.



\subsection{Balanced partition sequences}

The notion of \emph{versatile twin-width} is a crucial opening step to our algorithms; see \cite{twin-width2}.
Let us call \emph{$d$-contraction} a contraction between two trigraphs of maximum red degree at most $d$.
A~\emph{tree of $d$-contractions} of a~trigraph~$G$ (of maximum red degree at most~$d$) is a rooted tree, whose root is labeled by $G$, whose leaves are all labeled by 1-vertex trigraphs $K_1$, and such that one can go from any parent to any of its children by performing a single $d$-contraction.
Observe that $d$-sequences coincide with trees of $d$-contractions that are paths.
A~trigraph $G$ has \emph{versatile twin-width~$d$} if $G$ admits a tree of $d$-contractions in which every internal node, labeled by, say, $F$, has at least $|V(F)|/d$ children each obtained by contracting one of a list of $|V(F)|/d$ pairwise disjoint pairs of vertices of $F$.

It was shown that twin-width and versatile twin-width are functionally equivalent~\cite{twin-width2}.
The relevant consequence for our purposes is that every graph $G$ with a $d'$-sequence admits a~\emph{balanced} $d$-sequence, where $d=h(d')$ depends only on $d'$, i.e., one for which the partitions $\mathcal P_n, \ldots, \mathcal P_1$ are such that for every $i \in [n]$ and $P \in \mathcal P_i$, $|P| \leqslant d \cdot \frac{n}{i}$.
As we will resort to recursion on induced subtrigraphs and quotient trigraphs, we need to keep more information on those subinstances that the mere fact that they have twin-width at most~$d$ (otherwise the twin-width bound could quickly diverge).

This will be done by opening up the proof in~\cite{twin-width2}, and handling divided $0,1,r$-matrices with some specific properties.
Thus we need to recall the relevant definitions.

Given two partitions $\mathcal P, \mathcal P'$ of the same set, we say that $\mathcal P'$ is a \emph{coarsening} of $\mathcal P$ if every part of $\mathcal P$ is contained in a part of $\mathcal P'$, and $\mathcal P, \mathcal P'$ are distinct.
Given a~matrix $M$, we call \emph{row division} (resp.~\emph{column division}) a~partition of the rows (resp.~columns) of~$M$ into parts of consecutive rows (resp.~columns).
A \emph{$(k,\ell)$-division}, or simply \emph{division}, of a matrix $M$ is a pair $(\mathcal R=\{R_1,\dots ,R_k\}, \mathcal C=\{C_1,\dots ,C_\ell\})$ where $\mathcal R$ is a row division and $\mathcal C$ is a column division.
In a matrix division $(\mathcal R,\mathcal C)$, each part $R \in \mathcal R$ is called a \emph{row part}, and each part $C \in \mathcal C$ is called a \emph{column part}.
Given a subset $R$ of rows and a subset $C$ of columns in a matrix $M$, the \emph{zone $M[R,C]$} denotes the submatrix of all entries of $M$ at the intersection between a row of $R$ and a column of $C$.
A~\emph{zone} of a matrix partitioned by $({\mathcal R},{\mathcal C})=(\{R_1, \ldots, R_k\},\{C_1, \ldots, C_\ell\})$ is any $M[R_i,C_j]$ for $i \in [k]$ and $j \in [\ell]$.
A~zone is \emph{constant} if all its entries are identical, \emph{horizontal} if all its columns are equal, and \emph{vertical} if all its rows are equal.
A~\emph{0,1-corner} is a $2 \times 2$ $0,1$-matrix which is not horizontal nor vertical.

Unsurprisingly, \emph{$0,1,r$-matrices} are such that each entry is in $\{0,1,r\}$ where $r$ is an error symbol that should be understood as a~red edge.
A~\emph{neat division} of a~$0,1,r$-matrix is a division for which every zone either contains only $r$~entries or contains no $r$ entry and is horizontal or vertical (or both, i.e., constant).
Zones filled with $r$ entries are called \emph{mixed}. 
A~\emph{neatly divided matrix} is a pair $(M,(\mathcal R,\mathcal C))$ where $M$ is a $0,1,r$-matrix and $(\mathcal R,\mathcal C)$ is a neat division of $M$.
A~\emph{$t$-mixed minor} in a neatly divided matrix is a $(t,t)$-division which coarsens the neat subdivision, and contains in each of its $t^2$ zones at least one mixed zone (i.e., filled with $r$ entries) or a 0,1-corner.
A~neatly divided matrix is said \emph{$t$-mixed free} if it does not admit a~$t$-mixed minor. 

A~\emph{mixed cut of a row part $R \in \mathcal R$} of a~neatly divided matrix $(M,(\mathcal R, \mathcal C=\{C_1,C_2,\ldots\}))$ is an index $i$ such that both $M[R,C_i]$ and $M[R,C_{i+1}]$ are not mixed, and there is a $0,1$-corner in the 2-by-$|R|$ zone defined by the last column of $C_i$, the first column of $C_{i+1}$, and $R$.
The \emph{mixed value of a row part $R \in \mathcal R$} of a~neatly divided matrix $(M,(\mathcal R, \mathcal C=\{C_1,C_2,\ldots\}))$ is the number of mixed zones $M[R,C_j]$ plus the number of mixed cuts between two (adjacent non-mixed) zones $M[R,C_j]$ and $M[R,C_{j+1}]$.
We similarly define the mixed value of a column part $C \in \mathcal C$.
The \emph{mixed value of a neat division} of a $0,1,r$-matrix is the maximum of the mixed values taken over every part.
The \emph{part size} of a division $(\mathcal R,\mathcal C)$ is defined as $\max(\max_{R \in \mathcal R}|R|,\max_{C \in \mathcal C}|C|)$.
A~division is \emph{symmetric} if the largest row index of each row part and the largest column index of each column part define the same set of integers. 
We call \emph{symmetric fusion} of a symmetric division the fusion of two consecutive parts in $\mathcal C$ and of the two corresponding parts in $\mathcal R$.
A~symmetric fusion on a~symmetric division yields another symmetric division.
A~matrix $A := (a_{i,j})_{i,j}$ is said \emph{symmetric} in the usual sense, namely, for every entry $a_{i,j}$ of~$A$, $a_{i,j}=a_{j,i}$.

In what follows, we set $c_d := 8/3(t+1)^22^{4t}$.
The following definition is key.
\begin{definition}
  Let $\mathcal M_{n,d}$ be the class of the neatly divided $n \times n$ symmetric $0,1,r$-matrices $(M,(\mathcal R,\mathcal C))$, such that $(\mathcal R,\mathcal C)$ is symmetric and has:
  \begin{compactitem}
  \item mixed value at most $4c_d$,
  \item part size at most $2^{4c_d+2}$, and
  \item no $d$-mixed minor.
  \end{compactitem}
\end{definition}

The \emph{red number} of a~matrix is the maximum number of $r$ entries in a single column or row of the matrix.

\begin{lemma}\label{lem:ndm-red-number}
  Let $(M,(\mathcal R,\mathcal C)) \in \mathcal M_{n,d}$.
  The red number of $M$ is at most $c_d \cdot 2^{4c_d+4}$.
  Thus, the trigraph whose adjacency matrix is $M$ has maximum red degree at most~$c_d \cdot 2^{4c_d+4}$.
\end{lemma}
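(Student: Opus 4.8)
The plan is to argue directly from the combinatorial structure of a neatly divided matrix; no recursion or subtle argument is needed, only careful bookkeeping. Fix $(M,(\mathcal R,\mathcal C)) \in \mathcal M_{n,d}$ and consider an arbitrary row of $M$, say row $i$, lying in the row part $R \in \mathcal R$. First I would recall that, since $(\mathcal R,\mathcal C)$ is a neat division, every zone $M[R,C_j]$ with $C_j \in \mathcal C$ either is \emph{mixed} (all of its entries equal $r$) or contains no $r$ entry at all. Consequently, the $r$ entries appearing in row $i$ are exactly those lying in the column parts $C_j$ for which $M[R,C_j]$ is mixed, and within each such column part row $i$ contributes exactly $|C_j|$ entries equal to $r$. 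In particular, the mixed cuts of $R$ — which only witness $0,1$-corners straddling two \emph{non-mixed} zones — play no role in counting $r$ entries.

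Next I would bound the two quantities at play. The number of mixed zones $M[R,C_j]$ is at most the mixed value of the row part $R$, which by the definition of $\mathcal M_{n,d}$ is at most $4c_d$. Each column part $C_j$ has $|C_j|$ at most the part size of the division, hence at most $2^{4c_d+2}$. Therefore the number of $r$ entries in row $i$ is at most $4c_d \cdot 2^{4c_d+2} = c_d \cdot 2^{4c_d+4}$. By the symmetry of the division (the rows and columns are partitioned identically) and of $M$, the same bound holds for every column, so the red number of $M$ is at most $c_d \cdot 2^{4c_d+4}$, which is the first claim.

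Finally, for the statement about the trigraph $T$ whose adjacency matrix is $M$: the red edges of $T$ incident to a vertex $v$ correspond precisely to the off-diagonal $r$ entries in the row (equivalently, the column) of $M$ indexed by $v$, so the red degree of $v$ is at most the number of $r$ entries in that row, hence at most $c_d \cdot 2^{4c_d+4}$.

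There is no genuine obstacle here; the only points that need a moment's care are (i) that in a neat division a non-mixed zone is guaranteed to be free of $r$ entries, so no red entry can be ``hidden'' inside a horizontal or vertical zone, and (ii) that mixed cuts, although they contribute to the mixed value, do not themselves carry $r$ entries — both are immediate from the definitions. The constant comes out exactly because $4 \cdot 2^{4c_d+2} = 2^{4c_d+4}$.
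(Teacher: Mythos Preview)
Your proof is correct and follows essentially the same approach as the paper: bound the number of mixed zones meeting a given row by the mixed value $4c_d$, bound the size of each mixed zone by the part size $2^{4c_d+2}$, and multiply. Your write-up is simply more detailed than the paper's three-line version, in particular making explicit that non-mixed zones in a neat division contain no $r$ entries and that mixed cuts contribute to the mixed value but not to the count of $r$ entries.
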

\begin{proof}
  Any row or column intersects at most $4c_d$ mixed zones (filled with $r$ entries).
  Each mixed zone has width and length bounded by the part size $2^{4c_d+2}$.
  Hence the maximum total number of $r$ entries on a single row or column is at most $4c_d \cdot 2^{4c_d+2} = c_d \cdot 2^{4c_d+4}$.
\end{proof}

A \emph{coarsening} of a neatly divided matrix $(M,(\mathcal R,\mathcal C))$ is a neatly divided matrix $(M',(\mathcal R',\mathcal C'))$ such that $(\mathcal R',\mathcal C')$ is a coarsening of $(\mathcal R,\mathcal C)$, and $M'$ is obtained from $M$ by setting to $r$ all entries that lie, in $M$ divided by $(\mathcal R',\mathcal C')$, in a zone with at least one $r$ entry or a 0,1-corner.
We also refer to the process of going from $(M,(\mathcal R,\mathcal C))$ to $(M',(\mathcal R',\mathcal C'))$ as \emph{coarsening operation}. 
A coarsening operation from $(M,(\mathcal R,\mathcal C)) \in \mathcal M_{n,d}$ to $(M',(\mathcal R',\mathcal C'))$ is said \emph{invariant-preserving} if $(M',(\mathcal R',\mathcal C')) \in \mathcal M_{n,d}$.

The following lemma is the crucial building block of the current section.

\begin{lemma}[{\cite[Lemma 18]{twin-width2-arxiv}}]\label{lem:coarsening-linear}
  We set $s := 2^{4c_d+4}$.
  Every neatly divided matrix $(M,(\mathcal R,\mathcal C)) \in \mathcal M_{n,d}$ has an invariant-preserving coarsening $(M',(\mathcal R',\mathcal C')) \in \mathcal M_{n,d}$ with $\lfloor n/s \rfloor$ disjoint pairs of identical columns.
  Given $(M,(\mathcal R,\mathcal C))$, both $(M',(\mathcal R',\mathcal C'))$ and the pairs of columns can be computed in $n^{O(1)}$ time.
\end{lemma}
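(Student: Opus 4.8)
Since this is \cite[Lemma 18]{twin-width2-arxiv}, quoted verbatim, the proof is the one of that paper; I only sketch its shape. Write $\mathcal C = \{C_1, \dots, C_k\}$ in left-to-right order. The part-size bound of $\mathcal M_{n,d}$ gives $k \geq n \cdot 2^{-(4c_d+2)} = 4n/s$. It therefore suffices to produce an invariant-preserving coarsening in which at least a quarter of the column parts each contain two identical columns: pairs picked from distinct parts are automatically disjoint, and $k/4 \geq \lfloor n/s \rfloor$. (If the input part sizes are already near the bound the matrix is essentially a bounded blow-up and one reads off the pairs directly; the generic case is a strict coarsening obtained by fusing consecutive parts.)

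The plan is to coarsen by repeatedly performing symmetric fusions of consecutive parts --- symmetry of the division forces every row fusion to be mirrored by a column fusion --- until every part has size within a constant factor of the bound $2^{4c_d+2}$. At each fusion one must check that we stay in $\mathcal M_{n,d}$. The no-$d$-mixed-minor property is easy to maintain, since a coarsening can only turn $0,1$-corners into $r$-zones and hence never creates a mixed minor of $M$; and the part-size bound is maintained by stopping the fusions in time. The one invariant in danger is the mixed-value bound $4c_d$, and controlling it is where the Marcus--Tardos-type machinery of \cite{twin-width1} enters: because there is no $d$-mixed minor, one can always keep fusing without the mixed value of any part ever exceeding $4c_d$, for a part whose mixed value is pushed past that threshold spreads its mixed zones and mixed cuts over a $d \times d$ sub-grid, which is precisely a $d$-mixed minor. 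All of this runs in $n^{O(1)}$ time, parts being stored as integer intervals.

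Once every part is large, I would extract a repeated column from each of them by pigeonhole. Fix such a column part $C$. For each row part $R$, the zone $M'[R,C]$ is mixed or $0,1$-valued; if $0,1$-valued it is horizontal or vertical, and only the properly-vertical zones can distinguish two columns of $C$: inside such a zone the common row is a non-constant vector $w_R \in \{0,1\}^{|C|}$, and two columns of $C$ agreeing on the family $(w_R)_R$ are identical columns of $M'$. Following \cite{twin-width2-arxiv}, the number of distinct vectors $w_R$ occurring for $C$ is at most roughly $\log_2 |C|$ --- again a consequence of no-$d$-mixed-minor, this time invoking symmetry of $M$ and of the division, since a properly-vertical zone $M'[R,C]$ is, after transposition, a properly-horizontal zone of $C$, and a part seeing too many distinct column-patterns against a fixed window of rows assembles a $d$-mixed minor. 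Hence the $|C|$ columns of $C$ realize strictly fewer than $|C|$ distinct "signatures", so two of them coincide. Collecting one such pair per large part and counting with $|C| \leq 2^{4c_d+2}$ leaves at least $k/4 \geq \lfloor n/s \rfloor$ disjoint pairs; all searches are polynomial.

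The hard part is the two appeals to the absence of a $d$-mixed minor: keeping the mixed value within budget throughout the fusions, and bounding the number of distinct vertical patterns a part can see. Each amounts to turning a hypothetical failure into an explicit $d \times d$ division of $M$ coarsening its neat subdivision with a $0,1$-corner or an $r$-filled zone in every one of its $d^2$ cells, while respecting the symmetric coupling of row and column fusions and tracking constants carefully enough that the loss is absorbed by the slack $s/2^{4c_d+2} = 4$. This bookkeeping is exactly what \cite{twin-width2-arxiv} carries out, and what the present paper uses as a black box.
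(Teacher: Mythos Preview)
Your proposal is appropriate: the paper does not prove this lemma but quotes it from \cite{twin-width2-arxiv}, adding only the remark that ``the proof is effective, since it greedily symmetrically fuses two consecutive parts, provided the resulting divided matrix remains in $\mathcal M_{n,d}$.'' Your sketch of greedy symmetric fusions with the mixed-value invariant controlled via the absence of a $d$-mixed minor is exactly this approach, so you are aligned with what the paper uses as a black box.
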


In~\cite{twin-width2-arxiv}, it is not explicitly stated that the invariant-preserving coarsening (hence the pairs of identical columns) can be found in polynomial time.
However it is easy to check that the proof is effective, since it greedily symmetrically fuses two consecutive parts, provided the resulting divided matrix remains in $\mathcal M_{n,d}$.
A~special case of the following observation is shown in \cite[Lemma 19]{twin-width2-arxiv}. 

\begin{lemma}\label{lem:simple-deletion}
  Let $(M,(\mathcal R,\mathcal C)) \in \mathcal M_{n,d}$ be a neatly divided matrix.
  Removing a set of $h$~columns and the $h$~corresponding rows, and possibly removing from the division the parts that are now empty, results in a neatly divided matrix in $\mathcal M_{n-h,d}$.
\end{lemma}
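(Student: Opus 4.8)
The plan is to peel off the columns one at a time: by induction on $h$ it suffices to prove that deleting a single column together with its corresponding row sends $\mathcal M_{n,d}$ into $\mathcal M_{n-1,d}$, since iterating this $h$ times (each time possibly discarding an emptied part) yields the statement. So fix the index $c$ of the deleted column; as $(\mathcal R,\mathcal C)$ is symmetric, row $c$ and column $c$ lie in parts $R_k$ and $C_k$ spanning the same index interval. The new object $(M',(\mathcal R',\mathcal C'))$ is obtained by removing row $c$ and column $c$ from $M$, re-indexing, replacing $R_k$ and $C_k$ by $R_k\setminus\{c\}$ and $C_k\setminus\{c\}$, and dropping both of these parts when they become empty; this happens for the row part exactly when it happens for the column part, so the division stays symmetric, and $M'$ stays symmetric in the usual sense since the same index is removed from rows and columns. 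Every zone of $M'$ is a zone of $M$ with at most one row and at most one column deleted, so an all-$r$ zone stays all-$r$ and a horizontal (resp.\ vertical, resp.\ constant) zone stays horizontal (resp.\ vertical, resp.\ constant). Hence $(M',(\mathcal R',\mathcal C'))$ is again a symmetric neatly divided $0,1,r$-matrix, and its part size is at most that of $(M,(\mathcal R,\mathcal C))$, so at most $2^{4c_d+2}$.

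For the absence of a $d$-mixed minor, suppose $M'$ admitted one: a $(d,d)$-division coarsening the neat subdivision of $M'$, each of whose $d^2$ zones contains a mixed zone or a $0,1$-corner. Reinsert the deleted column $c$ (and row $c$) into the group of consecutive column parts (resp.\ row parts) that contains a surviving neighbour of $C_k$; this produces a $(d,d)$-division coarsening the neat subdivision of $M$, and since $M$ and $M'$ agree entrywise on the surviving rows and columns, adding one column/row to a group cannot destroy a mixed zone or a $0,1$-corner already present. So every one of the $d^2$ zones still witnesses a mixed zone or a $0,1$-corner, contradicting that $(M,(\mathcal R,\mathcal C))$ has no $d$-mixed minor; hence $M'$ has none either.

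The only delicate invariant is the mixed value, where I would show that each part of $\mathcal R'\cup\mathcal C'$ has mixed value in $M'$ at most that of the corresponding part in $M$. Mixed zones can only disappear, since by neatness every zone is all-$r$ (mixed) or $r$-free (horizontal or vertical), and deleting rows/columns preserves each property. For mixed cuts, consider a row part $R_i$ (the column case is symmetric). At a boundary between two surviving column parts, if $c$ lies in the zone $M[R_i,C_k]$: when that zone is all-$r$ it stays all-$r$, precluding a mixed cut there; when it is horizontal, $c$ agrees on $R_i$ with the new last/first column of $C_k'$, so any witnessing $0,1$-corner is literally one already present in $M$; when it is vertical, a direct $2\times 2$ check shows the new corner on two rows $\{x,y\}\subseteq R_i$ is a $0,1$-corner if and only if the corner using $c$ in place of the new boundary column is, so again nothing new is created. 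The remaining case is $C_k=\{c\}$ a singleton: it disappears and its neighbours $C_{k-1}'$, $C_{k+1}'$ become adjacent, possibly creating one new mixed cut of $R_i$. If $M[R_i,C_k]$ is all-$r$ it was a mixed zone now gone, absorbing this cut; otherwise the restriction of column $c$ to the two rows $x,y$ of the witnessing corner carries no $r$, and comparing it with the $p$- and $q$-columns (last column of $C_{k-1}$, first of $C_{k+1}$) shows that $\{p,c\}$ or $\{c,q\}$ is already a $0,1$-corner on $\{x,y\}$, hence a mixed cut of $R_i$ at a boundary of $C_k$, which also vanishes in $M'$. In every case the new item replaces a vanished one, so the mixed value does not increase and stays at most $4c_d$.

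Combining these points, $(M',(\mathcal R',\mathcal C'))\in\mathcal M_{n-1,d}$, and applying this $h$ times yields $\mathcal M_{n-h,d}$. The main obstacle is exactly the mixed-value bookkeeping for mixed cuts: one must rule out that deleting the column turns a non-$0,1$-corner boundary into a $0,1$-corner one, and that removing an emptied singleton part creates an uncharged new mixed cut when two column (equivalently row) parts become adjacent; the small $2\times 2$ case analyses sketched above, using that every zone of a neat division is horizontal, vertical, or all-$r$, handle precisely these situations, while everything else is routine monotonicity under taking submatrices. (This mirrors, and generalises, the argument of \cite[Lemma 19]{twin-width2-arxiv}.)
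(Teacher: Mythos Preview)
Your proof is correct and follows the same approach as the paper: verify that each defining property of $\mathcal M_{n,d}$ (symmetry, neatness, part size, $d$-mixed freeness, mixed value) is preserved under deleting a row/column pair. The paper's proof is a three-line assertion of monotonicity, whereas you supply the details it omits---in particular the careful case analysis for mixed cuts (horizontal/vertical/all-$r$ zones, and the singleton part case), which is exactly the nontrivial point the paper glosses over with ``the mixed value can only decrease.''
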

\begin{proof}
  By construction, the new matrix and division are symmetric.
  The new neatly divided matrix remains $d$-mixed free.
  The part size and the mixed value can only decrease.
\end{proof}

\begin{lemma}[{\cite[Beginning of Lemma 20]{twin-width2-arxiv}}]\label{lem:versatile-tww}
  Given any graph $G$ with a~$d$-sequence, one can find in polynomial-time an adjacency matrix $M$ of $G$, such that $(M,(\mathcal R,\mathcal C))$ is a neatly divided matrix of $\mathcal M_{n,2d+2}$ with $(\mathcal R,\mathcal C)$ the finest division of $M$ (i.e., the one where all parts are of size 1).  
\end{lemma}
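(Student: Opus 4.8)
The plan is to reuse the construction from the beginning of Lemma~20 of~\cite{twin-width2-arxiv}. First I would turn the given $d$-sequence, viewed as a partition sequence $\mathcal P_n, \ldots, \mathcal P_1$, into its \emph{contraction tree} $T$: the root is the single part of $\mathcal P_1$, and each merge of two parts $A,B$ into $A \cup B$ (when passing from $\mathcal P_{i+1}$ to $\mathcal P_i$) makes $A$ and $B$ the children of $A \cup B$; the leaves are the $n$ singletons, i.e., the vertices of $G$. A depth-first traversal of $T$ orders its leaves as $v_1 \prec \cdots \prec v_n$, and since the leaves below any node form a contiguous block of this order, every part of every $\mathcal P_i$ is a $\prec$-interval; in particular each $\mathcal P_i$ is an interval partition of $(V(G), \prec)$. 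I then take $M$ to be the adjacency matrix of $G$ with rows and columns listed in the order $\prec$. This whole step is polynomial-time.

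Next I would verify that $(M,(\mathcal R,\mathcal C)) \in \mathcal M_{n,2d+2}$ where $(\mathcal R,\mathcal C)$ is the finest division. Most of the conditions are immediate: $M$ is a symmetric $0,1$-matrix (it is the adjacency matrix of a graph), the finest division is symmetric, and because $M$ has no $r$~entry there is no mixed zone and, as every part has size~$1$, no mixed cut either, so the mixed value is $0 \leqslant 4 c_{2d+2}$ and the part size is $1 \leqslant 2^{4 c_{2d+2}+2}$. The one substantive thing to establish is that $(M,(\mathcal R,\mathcal C))$ has no $(2d+2)$-mixed minor; since $M$ has no $r$~entry, this means: there is no $(2d+2,2d+2)$-division of $M$ all of whose $(2d+2)^2$ zones contain a $0,1$-corner.

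For this, the key observation is that the interval partition sequence $\mathcal P_n, \ldots, \mathcal P_1$ is exactly a contraction sequence of the \emph{ordered} matrix $M$ in the matrix sense of~\cite{twin-width1}: passing from $\mathcal P_{i+1}$ to $\mathcal P_i$ merges two consecutive row parts together with the two corresponding consecutive column parts, and in the contracted matrix the new entry at the position of two merged parts is the error symbol precisely when the corresponding zone of $M$ is non-constant, i.e., precisely when those two parts form a red edge of the quotient trigraph $G/\mathcal P_i$. Hence each matrix of this sequence has at most $d$ error entries in each row and each column, because each part has red degree at most~$d$ in its partition. I would then invoke the (easy) direction of the mixed-minor characterisation of matrix twin-width~\cite{twin-width1}, namely that an ordered matrix carrying such a contraction sequence has no $(2d+2)$-mixed minor, to conclude.

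The main obstacle is precisely this last step --- not the idea, which is standard, but the bookkeeping: checking that the graph-side red-degree bound translates without loss into the ``at most $d$ error entries per row and per column'' of the matrix-side sequence (this is exactly why the fusions must be done symmetrically, simultaneously on rows and on columns), and pinning the resulting threshold to the advertised $2d+2$ rather than a looser function of $d$. Both of these are carried out in~\cite{twin-width2-arxiv}; here I would only reassemble them and additionally observe that, once the finest division is used, the remaining invariants defining $\mathcal M_{n,2d+2}$ (symmetry, bounded mixed value, bounded part size) are satisfied for free.
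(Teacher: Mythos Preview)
The paper does not give its own proof of this lemma; it is stated with a citation to \cite[Beginning of Lemma 20]{twin-width2-arxiv} and used as a black box. Your proposal is a faithful reconstruction of that cited argument: order the vertices compatibly with the $d$-sequence so that every $\mathcal P_i$ is an interval partition, take the adjacency matrix in this order, observe that the finest division trivially satisfies the symmetry, part-size, and mixed-value constraints (since $M$ has no $r$ entries and all parts have size~1), and then appeal to the twin-width-to-mixed-freeness direction of~\cite{twin-width1} to get the $(2d+2)$-mixed-freeness. This is correct and is essentially the same approach as the source the paper cites.
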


The adjacency matrix of a trigraph extends the one of a graph by putting $r$ symbols when the vertices of the corresponding row and column are linked by a red edge.
A~neatly divided matrix $(M,(\mathcal R,\mathcal C))$ is said~\emph{conform} to a trigraph $G$ if $M$ is the adjacency matrix of a~trigraph $G'$ such that $G$ is a \regu of $G'$.
Furthermore, we assume (and keep implicit) that we know the one-to-one correspondence between each row (and corresponding column) of~$M$ and vertex of $G$.

\begin{lemma}\label{lem:sequence}
  Let $d$ be a natural, $s := 2^{4c_d+4}$, and $d' := c_d \cdot 2^{4c_d+4}$.
  Let $G$ be an $n$-vertex trigraph given with a~neatly divided matrix $(M,(\mathcal R,\mathcal C)) \in \mathcal M_{n,d}$ conform to~$G$.
  A partial $d'$-sequence $\mathcal S$ from $G$ to a~trigraph $H$ satisfying
  \begin{compactitem}
  \item $|V(H)|=\lfloor \sqrt n \rfloor$, and
  \item $\forall u \in V(H), |u(G)| \leq s \sqrt n$,
  \end{compactitem}
  and a~neatly divided matrix $(M',(\mathcal R',\mathcal C')) \in \mathcal M_{\lfloor \sqrt n \rfloor,d}$ conform to~$H$ can be computed in time $n^{O(1)}$.
\end{lemma}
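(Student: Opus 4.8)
The plan is to realise the desired partial $d'$-sequence as a succession of $n-\lfloor\sqrt n\rfloor$ single contractions, carrying along at every step a conform neatly divided matrix in $\mathcal M_{\cdot,d}$, and choosing each contraction so as to keep the parts balanced. Concretely, I set $F_0:=G$ and $N_0:=(M,(\mathcal R,\mathcal C))\in\mathcal M_{n,d}$, which is conform to $F_0$ by hypothesis. Inductively, given an $(n-k)$-vertex trigraph $F_k$ and a neatly divided matrix $N_k\in\mathcal M_{n-k,d}$ conform to $F_k$, I apply \cref{lem:coarsening-linear} to $N_k$ to obtain, in polynomial time, an invariant-preserving coarsening $N_k'\in\mathcal M_{n-k,d}$ together with $\lfloor(n-k)/s\rfloor$ pairwise disjoint pairs of identical columns of $N_k'$. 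Among those pairs I pick one, say $\{i,j\}$, whose two parts $i(G),j(G)\subseteq V(G)$ have the smallest total size, let $F_{k+1}$ be obtained from $F_k$ by contracting $\{i,j\}$, and let $N_{k+1}$ be obtained from $N_k'$ by deleting column $j$ and the corresponding row (and removing from the division any part that becomes empty). Stopping at $k=n-\lfloor\sqrt n\rfloor$, I output $H:=F_{n-\lfloor\sqrt n\rfloor}$ and $(M',(\mathcal R',\mathcal C')):=N_{n-\lfloor\sqrt n\rfloor}$; then $|V(H)|=\lfloor\sqrt n\rfloor$, and each iteration runs in $n^{O(1)}$ time, so the whole construction does too.

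The heart of the argument will be to show that this process is well defined, i.e.\ that $N_{k+1}\in\mathcal M_{n-k-1,d}$ is conform to $F_{k+1}$ and that every $F_k$ has maximum red degree at most $d'$. First, a coarsening preserves conformity: if $N_k$ is the adjacency matrix of a trigraph $X$ of which $F_k$ is a \regu, then $N_k'$ is the adjacency matrix of the trigraph $X'$ obtained from $X$ by turning some black edges or non-edges into red edges, and one reads off from the definition of \regu that $F_k$ remains a \regu of $X'$. Since columns $i$ and $j$ of $N_k'$ are identical, $i$ and $j$ are twins in $X'$, so contracting $\{i,j\}$ in $X'$ produces a trigraph $X'_{k+1}$ whose adjacency matrix is exactly $N_k'$ with column/row $j$ removed, namely $N_{k+1}$; by \cref{lem:simple-deletion}, $N_{k+1}\in\mathcal M_{n-k-1,d}$. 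The step that needs care is checking that $F_{k+1}$ (which is $F_k$ with $\{i,j\}$ contracted) is a \regu of $X'_{k+1}$. This is a short case analysis on the colours of $iz$ and $jz$ in $X'$ and in $F_k$: since $F_k$ is a \regu of $X'$, each non-edge (resp.\ black edge) of $X'$ stays a non-edge (resp.\ black edge) in $F_k$, whereas a red edge of $X'$ may become black or a non-edge, and independently so at $iz$ and at $jz$; going through the possibilities shows that the colour of the new edge $wz$ in $F_{k+1}$ is consistent with $F_{k+1}$ being a \regu of $X'_{k+1}$, and in particular $R(F_{k+1})\subseteq R(X'_{k+1})$. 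Hence the maximum red degree of every $F_k$ is at most the red number of $N_k$, which by \cref{lem:ndm-red-number} is at most $c_d\cdot 2^{4c_d+4}=d'$; so $\mathcal S:=F_0,F_1,\dots,F_{n-\lfloor\sqrt n\rfloor}$ is a partial $d'$-sequence from $G$ to $H$.

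It remains to bound the part sizes. Since the $\lfloor(n-k)/s\rfloor$ pairs produced at step $k$ are pairwise disjoint, the sum of the sizes of the at most $2\lfloor(n-k)/s\rfloor$ parts they involve is at most $n$, so the lightest pair has total size at most $n/\lfloor(n-k)/s\rfloor$; consequently the part created when passing from $n-k$ to $n-k-1$ parts has size $O\!\big(s\,n/(n-k)\big)$. A routine induction on $k$ then yields that, as long as at least $\lfloor\sqrt n\rfloor$ parts remain, every part has size $O(s\,n/m)$ where $m$ is the current number of parts, hence $O(s\sqrt n)$ at termination; after replacing $s$ by a suitable constant multiple depending only on $d$, this is the claimed bound $|u(G)|\leq s\sqrt n$ for all $u\in V(H)$. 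I expect the main obstacle to be exactly the middle paragraph: propagating the \regu relation through a contraction of two vertices that are twins in the auxiliary trigraph of the matrix but generally not in the actual trigraph $F_k$ of the sequence, so that both the red-degree bound $d'$ and the membership $N_{k+1}\in\mathcal M_{\cdot,d}$ survive; by comparison, the balancing argument is elementary.
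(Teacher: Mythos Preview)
Your proposal is correct and follows essentially the same approach as the paper: iterate \cref{lem:coarsening-linear} to produce $\lfloor m/s\rfloor$ disjoint twin pairs in the current conform matrix, contract a light pair, apply \cref{lem:simple-deletion} to stay in $\mathcal M_{\cdot,d}$, and invoke \cref{lem:ndm-red-number} for the red-degree bound~$d'$. Your case analysis showing that the \regu relation survives a contraction of matrix-twins is a detail the paper leaves implicit, and your inductive part-size bound is slightly more work than needed---the paper simply observes that every newly created part has size at most $n/\lfloor m/s\rfloor$ with $m>\lfloor\sqrt n\rfloor$ throughout, giving $\leq s\sqrt n$ directly (modulo the same floor-rounding looseness you flag).
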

\begin{proof}
  This is a consequence of~\cref{lem:coarsening-linear,lem:simple-deletion}; see the proof of the more general \cref{lem:seq-to-partial-seq-induction}.
\end{proof}

Combining~\cref{lem:versatile-tww,lem:sequence}, one obtains the following.

\begin{lemma}\label{lem:seq-to-partial-seq}
  Let $d$ be a natural, $s := 2^{4c_d+4}$, and $d' := c_d \cdot 2^{4c_d+4}$.
  Given an $n$-vertex graph $G$ with a $d$-sequence, one can compute in time $n^{O(1)}$ a partition $\P=\{P_1, P_2, \ldots, P_{\lfloor \sqrt n \rfloor}\}$ of $V(G)$ satisfying
  \begin{compactitem}
  \item for every integer $1 \leqslant i \leqslant \lfloor \sqrt n \rfloor$, $|P_i| \leq s \sqrt n \leqslant d' \sqrt n$, and
  \item the red graph of $G/\P$ has maximum degree at most~$d'$.
  \end{compactitem}
\end{lemma}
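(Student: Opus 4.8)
The plan is to combine \cref{lem:versatile-tww} with \cref{lem:sequence} in the most direct way possible. First I would apply \cref{lem:versatile-tww} to the input graph $G$ equipped with its $d$-sequence: this yields in polynomial time an adjacency matrix $M$ of $G$ together with the finest division $(\mathcal R,\mathcal C)$ of $M$, so that $(M,(\mathcal R,\mathcal C)) \in \mathcal M_{n,2d+2}$. Note that this neatly divided matrix is trivially conform to $G$ itself (taking the \regu to be $G$, which is its own \regu since $G$ has no red edges). So we are exactly in the hypothesis of \cref{lem:sequence} with the role of ``$d$'' in that lemma played by $2d+2$.

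Next I would feed $(M,(\mathcal R,\mathcal C)) \in \mathcal M_{n,2d+2}$ into \cref{lem:sequence}. This produces, in time $n^{O(1)}$, a partial $d'$-sequence $\mathcal S$ from $G$ to a trigraph $H$ with $|V(H)| = \lfloor \sqrt n \rfloor$ and $|u(G)| \leq s\sqrt n$ for every $u \in V(H)$, where $s := 2^{4c_{2d+2}+4}$ and $d' := c_{2d+2} \cdot 2^{4c_{2d+2}+4}$ — that is, $s$ and $d'$ depend only on $d$. Now I would read off the desired partition: set $\P := \P(H) = \{u(G) : u \in V(H)\}$, which by construction is a partition of $V(G)$ into exactly $\lfloor \sqrt n \rfloor$ parts. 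Write $\P = \{P_1, \ldots, P_{\lfloor\sqrt n\rfloor}\}$. Each part has size $|P_i| \leq s\sqrt n \leq d'\sqrt n$ (the second inequality because $c_{2d+2} \geq 1$), which is the first bullet. For the second bullet, since $\mathcal S$ is a $d'$-sequence, the trigraph $H$ — which is the quotient trigraph $G/\P$ — has maximum red degree at most $d'$, i.e., the red graph $\mathcal R(G/\P)$ has maximum degree at most $d'$. There is a small point worth spelling out: one should confirm that $H$ as obtained from the contraction sequence coincides with the quotient trigraph $G/\P(H)$ in the sense defined in the preliminaries, but this is immediate from the definitions (the contraction operation and the quotient-trigraph operation agree), and indeed the excerpt already records this correspondence when defining ``part of $H$''.

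The only genuinely substantive step is \cref{lem:sequence} itself, whose proof is deferred to \cref{lem:seq-to-partial-seq-induction}; at the level of the present lemma there is no real obstacle, since everything is a matter of matching up hypotheses and renaming constants. If I wanted to be careful I would remark that the constants $s$ and $d'$ in the statement of \cref{lem:seq-to-partial-seq} are defined in terms of $c_d$ rather than $c_{2d+2}$; this is harmless — one simply absorbs the doubling into the dependence on $d$, or equivalently one can state the bound with the worse constant $c_{2d+2}$ and observe it is still a function of $d$ alone. The take-away is that \cref{lem:seq-to-partial-seq} is a packaging lemma: it hides the matrix machinery behind a clean combinatorial statement — a single balanced coarsening step of a contraction sequence — which is exactly the object the approximation algorithms will recurse on, and whose graph $G/\P$ and induced subgraphs will again be amenable to the same lemma because they carry neatly divided matrices in the classes $\mathcal M_{\cdot,d}$.
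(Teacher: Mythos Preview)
Your proposal is correct and matches the paper's approach exactly: the paper's entire proof of \cref{lem:seq-to-partial-seq} is the one-liner ``Combining~\cref{lem:versatile-tww,lem:sequence}, one obtains the following,'' which is precisely the chaining you spell out. Your observation about the constants is apt and honest: strictly speaking, \cref{lem:versatile-tww} lands in $\mathcal M_{n,2d+2}$, so feeding into \cref{lem:sequence} yields $s=2^{4c_{2d+2}+4}$ and $d'=c_{2d+2}\cdot 2^{4c_{2d+2}+4}$ rather than the $c_d$-based constants written in the statement; the paper silently elides this (and in the more careful \cref{lem:seq-to-partial-seq-induction} it does set $d=2\hat d+2$ explicitly), so your remark is the right caveat rather than a gap.
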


We will need a stronger inductive form of~\cref{lem:seq-to-partial-seq}, also a~consequence of~\cref{lem:versatile-tww,lem:sequence}. 

\begin{lemma}\label{lem:seq-to-partial-seq-induction}
  Let $\hat{d}$ be a natural, $d=2\hat{d}+2$, and set $s := 2^{4c_d+4}$, and $d' := c_d \cdot 2^{4c_d+4}$.
  Given an $n$-vertex graph $G$ given with a $\hat{d}$-sequence, or an $n$-vertex trigraph $G$ with a~neatly divided matrix $(M,(\mathcal R,\mathcal C)) \in \mathcal M_{n,d}$ such that $M$ is conform to~$G$, one can compute in time $n^{O(1)}$ a~partition $\P=\{P_1, P_2, \ldots, P_{\lfloor \sqrt n \rfloor}\}$ of $V(G)$ with maximum red degree at most~$d'$ satisfying that, for every integer $1 \leqslant i \leqslant \lfloor \sqrt n \rfloor$, $|P_i| \leq s \sqrt n \leq d' \sqrt n$, and for any trigraph $H$ that is 
  \begin{compactitem}
  \item a~\regu of an induced subtrigraph of $G/\P$, or
  \item an induced subtrigraph $G[\bigcup_{i \in J \subseteq [\lfloor \sqrt n \rfloor]} P_i]$,
  \end{compactitem}
  a~neatly divided matrix $(M',(\mathcal R',\mathcal C')) \in \mathcal M_{|V(H)|,d}$ conform to~$H$ can be computed in time $n^{O(1)}$.
\end{lemma}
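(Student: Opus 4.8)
The plan is to coerce the two kinds of input into a single form, build $\P$ exactly as in the proof of versatile twin-width, and then read off the promised certificates for $H$ by a mere restriction of the right divided matrix. Two elementary facts will be used throughout: the \regu relation is transitive (if $H_1$ is a \regu of $H_2$ and $H_2$ a \regu of $H_3$, then $H_1$ is a \regu of $H_3$), and restriction commutes with it -- if a neatly divided matrix is conform to a trigraph $F$ and $W \subseteq V(F)$, then (by \cref{lem:simple-deletion}) deleting the rows and columns outside $W$ yields a neatly divided matrix in the relevant $\mathcal M_{\cdot,d}$ conform to $F[W]$. First, if $G$ is a graph given with a $\hat d$-sequence, I would call \cref{lem:versatile-tww} to compute in polynomial time a neatly divided matrix $(M,(\mathcal R,\mathcal C)) \in \mathcal M_{n,2\hat d+2} = \mathcal M_{n,d}$, with the finest division, conform to $G$; if $G$ is already presented with such an $(M,(\mathcal R,\mathcal C)) \in \mathcal M_{n,d}$ conform to it, there is nothing to do. From now on I work only with $(M,(\mathcal R,\mathcal C))$, keeping the correspondence between its rows/columns and the vertices of $G$.

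Next I would build $\P$ -- which also proves \cref{lem:sequence}. I maintain a partition $\mathcal Q$ of $V(G)$ (initially the singletons, so $|\mathcal Q| = n$), a neatly divided matrix in $\mathcal M_{|\mathcal Q|,d}$ conform to $G/\mathcal Q$ (initially $(M,(\mathcal R,\mathcal C))$), and the invariant that every part of $\mathcal Q$ has size at most a fixed constant times $s \cdot n/|\mathcal Q|$. While $|\mathcal Q| > \lfloor \sqrt n \rfloor$, I apply \cref{lem:coarsening-linear} to the current matrix, obtaining in polynomial time an invariant-preserving coarsening still in $\mathcal M_{|\mathcal Q|,d}$ together with $\lfloor |\mathcal Q|/s \rfloor$ disjoint pairs of identical columns, which, since the matrix is symmetric, are also pairs of identical rows, hence pairs that can be contracted without creating any new red edge. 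As in the balancedness argument of \cite{twin-width2}, since the pairs are disjoint and the parts of $\mathcal Q$ have total size $n$, only a minority of these pairs $\{P,P'\}$ can have $|P|+|P'|$ large, so one can always pick a pair whose contraction keeps the size invariant; contracting it amounts to deleting one row and the corresponding column, and \cref{lem:simple-deletion} guarantees the result is a neatly divided matrix in $\mathcal M_{|\mathcal Q|-1,d}$, still conform to the new quotient (coarsening only turns $0,1$-entries into $r$, which preserves conformity to a \regu, and contracting twins preserves conformity by a routine check). By \cref{lem:ndm-red-number} the red number stays at most $d'$, so the whole process is a partial $d'$-sequence. After polynomially many iterations $|\mathcal Q|$ reaches $\lfloor \sqrt n \rfloor$; I set $\P := \mathcal Q = \{P_1,\dots,P_{\lfloor \sqrt n \rfloor}\}$, which satisfies $|P_i| \le s\sqrt n \le d'\sqrt n$ and whose quotient $G/\P$ has maximum red degree at most $d'$, and I keep the resulting conform matrix $(M',(\mathcal R',\mathcal C')) \in \mathcal M_{\lfloor \sqrt n \rfloor,d}$ of $G/\P$.

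Finally I would handle the two families of trigraphs $H$. If $H = G[W]$ is an induced subtrigraph of $G$ on a union of parts, $W = \bigcup_{i \in J} P_i$, I output the restriction of the \emph{original} matrix $(M,(\mathcal R,\mathcal C))$ to the rows and columns indexed by $W$: by \cref{lem:simple-deletion} it lies in $\mathcal M_{|W|,d} = \mathcal M_{|V(H)|,d}$ and is conform to $G[W] = H$. If instead $H$ is a \regu of an induced subtrigraph $(G/\P)[S]$ of the quotient, I output the restriction of the \emph{quotient} matrix $(M',(\mathcal R',\mathcal C'))$ to the rows and columns indexed by $S$: by \cref{lem:simple-deletion} it lies in $\mathcal M_{|S|,d} = \mathcal M_{|V(H)|,d}$ and is conform to $(G/\P)[S]$, hence -- by transitivity of the \regu relation, since $H$ is a \regu of $(G/\P)[S]$ -- also conform to $H$. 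Both restrictions are produced in $n^{O(1)}$ time.

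The real difficulty lies not in this argument but in \cref{lem:coarsening-linear} and the balancedness scheme of \cite{twin-width2} -- namely that a single coarsening can control the mixed value, the part size, and $d$-mixed-freeness at once while still exhibiting a linear number of contractible pairs -- which I would invoke as a black box. What remains is careful bookkeeping: induced subtrigraphs of $G$ must be certified from the original matrix $M$ and \regus of induced subtrigraphs of $G/\P$ from the quotient matrix $M'$ (the two must not be confused), and throughout the construction of $\P$ one must check that each operation leaves the divided matrix conform to the appropriate \regu of the current quotient, which rests on the one-directional fact that coarsening only ever makes entries ``redder''. The only spot needing genuine, if minor, care is pinning down the balance constant; since $d' = c_d \cdot s$ with $c_d$ large, a constant-factor loss there is absorbed.
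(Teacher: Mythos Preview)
Your proposal is correct and follows essentially the same approach as the paper: reduce to the matrix form via \cref{lem:versatile-tww}, iteratively apply \cref{lem:coarsening-linear} and a pigeonhole argument to select a balanced pair, delete a twin column/row via \cref{lem:simple-deletion}, and finally certify the two kinds of $H$ by restricting either the original matrix or the final quotient matrix. The only cosmetic difference is that you phrase the balance bound as a maintained invariant ``$|P| \le O(1)\cdot s n/|\mathcal Q|$'' whereas the paper argues directly that each newly merged part has size at most $n/h$ with $h \ge \lfloor \sqrt n\rfloor/s$ throughout; both yield the same $s\sqrt n$ bound.
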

\begin{proof}
  If we are given a graph $G$ with a~$\hat{d}$-sequence, we immediately compute a~neatly divided matrix $(M,(\mathcal R,\mathcal C)) \in \mathcal M_{n,d}$ conform to~$G$, by~\cref{lem:versatile-tww}.
  We then proceed as if we received the second kind of input.
  
  We will build iteratively the partition $\P=\{P_1, P_2, \ldots, P_{\lfloor \sqrt n \rfloor}\}$ starting from the finest partition.
  At each step we merge two parts, until the number of parts is $\lfloor \sqrt n \rfloor$.
  At this point, we have the desired partition $\P$.

  We iteratively maintain a~trigraph $G^z$ and a~neatly divided matrix $(M^z,(\mathcal R^z,\mathcal C^z)) \in \mathcal M_{n-z+1,d}$ conform to it.
  The maintained partition is just the one corresponding to the parts of $G^z$. 
  Initially, $G^1$ is $G$, and $(M^1,(\mathcal R^1,\mathcal C^1)) = (M,(\mathcal R,\mathcal C)) \in \mathcal M_{n,d}$.
  At step $z$ we do the following.
  We apply \cref{lem:coarsening-linear} on $(M^z,(\mathcal R^z,\mathcal C^z)) \in \mathcal M_{n-z+1,d}$ and obtain, in polynomial-time, an invariant-preserving coarsening $(M'^z,(\mathcal R'^z,\mathcal C'^z)) \in \mathcal M_{n-z+1,d}$, and $h := \lfloor (n-z+1)/s \rfloor$ disjoint pairs of equal columns $\{c_1, c'_1\}, \ldots, \{c_h, c'_h\}$ in $(M'^z,(\mathcal R'^z,\mathcal C'^z))$.
  Let $\{r_1, r'_1\}, \ldots, \{r_h, r'_h\}$ be the corresponding rows, and $\{v_1, v'_1\}, \ldots, \{v_h, v'_h\}$ the corresponding vertices.
  Observe that a coarsening of a neatly divided matrix conform to a trigraph is still conform to that trigraph, since the new matrix may only have some $r$ entries in place of some previously 0 or 1 entries.
  In particular, $(M'^z,(\mathcal R'^z,\mathcal C'^z))$ is conform to~$G^z$.
  
  There is at least one pair $\{v_i, v'_i\}$ whose contraction forms a part of size at most~$n/h$.
  Indeed, otherwise the union of the parts corresponding to $v_1, v'_1\, \ldots, v_h, v'_h$ is larger than~$n$.
  We remove $c'_i$ and $r'_i$ from $(M'^z,(\mathcal R'^z,\mathcal C'^z))$.
  By~\cref{lem:simple-deletion}, we obtain a neatly divided matrix of $M_{n-z,d}$ that we denote by $(M^{z+1},(\mathcal R^{z+1},\mathcal C^{z+1}))$.
  As we stop when $n-z+1=\lfloor \sqrt n \rfloor$, it means that the maximum size of a part of our partition is at most $n/h \leqslant sn/\sqrt n=s \sqrt n$. 
  The bound on the maximum red degree of the obtained partition (actually of all maintained partitions) is given by~\cref{lem:ndm-red-number}.
  
  We now show to find, for any \regu $H$ of an induced subtrigraph of $G/\P$, a~neatly divided matrix $(M',(\mathcal R',\mathcal C')) \in \mathcal M_{|V(H)|,d}$ conform to~$H$.
  We first observe, as a~consequence of \cref{lem:coarsening-linear,lem:simple-deletion}, that $(M^{\lfloor \sqrt n \rfloor},(\mathcal R^{\lfloor \sqrt n \rfloor},\mathcal C^{\lfloor \sqrt n \rfloor})) \in \mathcal M_{{\lfloor \sqrt n \rfloor},d}$ is conform to $G/\P$.
  Taking an induced subgraph $H'$ of $G/\P$ (i.e., removing vertices from it), we get, by removing the corresponding rows and columns in $(M^{\lfloor \sqrt n \rfloor},(\mathcal R^{\lfloor \sqrt n \rfloor},\mathcal C^{\lfloor \sqrt n \rfloor}))$ a neatly divided matrix $(M',(\mathcal R',\mathcal C')) \in \mathcal M_{|V(H')|,d}$ conform to $H'$, by~\cref{lem:simple-deletion}.
  Note finally that taking a \regu $H$ of $H'$, we can simply keep $(M',(\mathcal R',\mathcal C'))$ as a neatly divided matrix of $\mathcal M_{|V(H)|,d}$ conform to~$G$.
  The second item, concerning induced subtrigraphs $G[\bigcup_{i \in J \subseteq [\lfloor \sqrt n \rfloor]} P_i]$ is a simple application of~\cref{lem:simple-deletion}, and works more generally for any induced subtrigraph of $G$.
\end{proof}

In effect, we will only apply \cref{lem:seq-to-partial-seq-induction} for graphs $G$ and $H$, i.e., when $H$ is an induced subgraph of $G$ or a~\fullregu of an induced subtrigraph of $G/\P$.
Indeed, the structures $H$ will correspond to subinstances.
We want those to be graphs, so that the tackled graph problem is well-defined on them.


\section{Approximation algorithms for \mis}

We naturally start our study with \mis, a central problem that is very inapproximable~\cite{Hastad96,Zuckerman07}, and yet constitutes the textbook example of our approach. 

\subsection{Subexponential-time constant-approximation algorithm}

We present a subexponential-time $O_d(1)$-approximation for \swis on graphs given with a~$d$-sequence, which we recall, is unlikely to exist in general graphs~\cite{Chalermsook13}. 

\begin{lemma}\label{lem:mis-subexp-approx}
  Let $d'$ be a natural, $s := 2^{4c_{d'}+4}$, and $d := c_{d'} \cdot 2^{4c_{d'}+4}$.
  Assume $n$-vertex inputs $G$, vertex-weighted by $w$, are given with a~$d'$-sequence.
  \wis can be $(d+1)$-approximated in time $2^{O_d(\sqrt n)}$ on these inputs.
\end{lemma}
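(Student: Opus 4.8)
The plan is to follow precisely the informal description sketched in the introduction, now made rigorous using the machinery of \cref{lem:seq-to-partial-seq-induction}. First I would invoke \cref{lem:seq-to-partial-seq-induction} on the input graph $G$ with its $d'$-sequence (playing the role of $\hat d$, so that $d = 2\hat d + 2$ after the internal adjustment; to avoid cluttering I will simply say we obtain the constants $s$ and $d$ as in the statement). This produces, in polynomial time, a partition $\P = \{P_1, \ldots, P_{\lfloor \sqrt n \rfloor}\}$ of $V(G)$ with $|P_i| \le s\sqrt n$ for all $i$, and such that the red graph $\mathcal R(G/\P)$ has maximum degree at most $d$. Moreover, for any induced subtrigraph $H$ of $G$ of the form $G[\bigcup_{i \in J} P_i]$ (equivalently, any induced subgraph of $G$ obtained by keeping a subset of the parts), we are handed a neatly divided matrix in $\mathcal M_{|V(H)|,d}$ conform to $H$ — in particular $H$ is itself a graph, so \wis is well-defined on it.

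Next I would properly color the red graph $\mathcal R(G/\P)$, which has maximum degree at most $d$, using $d+1$ colors (greedily, in polynomial time). This partitions the index set $[\lfloor \sqrt n \rfloor]$ into $d+1$ color classes $J_1, \ldots, J_{d+1}$. For each class $J_\ell$, the induced subgraph $G_\ell := G[\bigcup_{i \in J_\ell} P_i]$ has the property that between any two distinct parts $P_i, P_j$ with $i,j \in J_\ell$ there is \emph{no} red edge of $G/\P$, hence (by definition of the quotient trigraph) between $P_i$ and $P_j$ there are either all edges or no edges of $G$. Thus $\{P_i : i \in J_\ell\}$ is a partition of $V(G_\ell)$ into modules: $G_\ell$ is the lexicographic-type blow-up of its quotient \emph{graph} $Q_\ell := G_\ell / \{P_i : i \in J_\ell\}$ (a genuine graph, no red edges) by the induced subgraphs $G[P_i]$. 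The key structural point is that $Q_\ell$ has at most $\lfloor \sqrt n \rfloor$ vertices and each $G[P_i]$ has at most $s\sqrt n$ vertices.

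On such a modular structure, a maximum-weight independent set can be computed by the standard recipe: an independent set of $G_\ell$ picks, from each module $P_i$, an independent set of $G[P_i]$, and the set of modules from which we pick a \emph{nonempty} independent set must form an independent set of $Q_\ell$. Concretely, for each $i \in J_\ell$ compute $\alpha_w(G[P_i])$, the maximum weight of a nonempty independent set inside $G[P_i]$, by brute force over the at most $2^{s\sqrt n}$ subsets of $P_i$ (checking independence and summing weights); this takes $2^{O_d(\sqrt n)}$ time per part, and there are at most $\sqrt n$ parts. Then solve, again by brute force over the at most $2^{\lfloor \sqrt n \rfloor}$ vertex subsets of $Q_\ell$, the problem of choosing an independent set $I$ of $Q_\ell$ maximizing $\sum_{i \in I} \alpha_w(G[P_i])$; this is $2^{O(\sqrt n)}$ time. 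The resulting value equals $\alpha_w(G_\ell)$, and a witnessing independent set of $G_\ell$ (hence of $G$) is recovered from the optimal choices. We output the best such set over $\ell \in [d+1]$.

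It remains to check the approximation ratio. Let $S^\*$ be a maximum-weight independent set of $G$, of weight $\alpha_w(G)$. For each $\ell$, $S^\* \cap V(G_\ell)$ is an independent set of $G_\ell$, so the algorithm's solution for color class $\ell$ has weight at least $w(S^\* \cap V(G_\ell))$. Since the $V(G_\ell)$ partition $V(G)$, there is some $\ell^\*$ with $w(S^\* \cap V(G_{\ell^\*})) \ge \frac{1}{d+1} w(S^\*)$, so the best of the $d+1$ computed solutions has weight at least $\frac{1}{d+1}\alpha_w(G)$, giving a $(d+1)$-approximation. The total running time is $(d+1)$ times $2^{O_d(\sqrt n)}$, i.e.\ $2^{O_d(\sqrt n)}$.

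I do not foresee a genuine obstacle here — this lemma is the base case of the whole framework and is essentially bookkeeping — but the one point deserving care is verifying that the subinstances $G_\ell$ (and the $G[P_i]$, $Q_\ell$) really are \emph{graphs} and not trigraphs, so that \wis is well-posed and the brute-force step is legitimate; this is exactly what the ``\fullregu of an induced subtrigraph of $G/\P$'' clause of \cref{lem:seq-to-partial-seq-induction} guarantees, together with the observation that a color class of a proper coloring of the red graph induces a red-edge-free, hence modular-into-genuine-modules, subtrigraph of $G/\P$. A secondary (trivial) point is to make sure the per-part brute force correctly reports the maximum weight of a \emph{nonempty} independent set when all weights could be negative or zero, but since one may always augment by adding isolated handling (or simply allow empty and treat the module as not chosen), this is immaterial for the final bound.
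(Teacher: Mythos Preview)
Your proposal is correct and follows essentially the same approach as the paper's proof: partition via the balanced-sequence lemma, $(d+1)$-color the red graph of $G/\P$, brute-force \swis inside each part and on each (red-edge-free, hence modular) color-class quotient, and conclude by pigeonhole. The only cosmetic difference is that the paper invokes \cref{lem:seq-to-partial-seq} (which matches the constants $s,d$ in the statement exactly and needs no $2\hat d+2$ adjustment) rather than the stronger \cref{lem:seq-to-partial-seq-induction}; since this base case uses brute force rather than recursion, the extra conform-matrix machinery you mention is not actually needed here.
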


\begin{proof}

  By~\cref{lem:seq-to-partial-seq}, we compute in polynomial time a partition $\P=\{P_1, \ldots, P_{\lfloor \sqrt n \rfloor}\}$ of~$V(G)$ whose parts have size at most $s \sqrt n$ and such that $\mathcal R(G/\P)$ has maximum degree at most~$d$.

  For every integer $1 \leqslant i \leqslant \lfloor \sqrt n \rfloor$, we compute a heaviest independent set in $G[P_i]$, say~$S_i$.
  Even with an exhaustive algorithm, this takes time $\sqrt n \cdot s^2 \sqrt n \cdot 2^{s \sqrt n}=2^{O_d(\sqrt n)}$.
  We then $(d+1)$-color (in linear time) $\mathcal R(G/\P)$, which is possible since this graph has maximum degree at most~$d$.
 This defines a coarsening of $\P$ in $d+1$ parts $\mathcal Q = \{C_1, \ldots, C_{d+1}\}$.
 Thus, $\mathcal Q$ is a partition of $V(G)$ such that $C_j$ consists of all the parts $P_i \in \P$ receiving color $j$ in the $(d+1)$-coloring of $\mathcal R(G/\P)$.

 For every $j \in [d+1]$, let $H_j$ be the \emph{graph} $(G/\P)[C_j]$\footnote{We use this notation as a slight abuse of notation for $(G/\P)[\{P_i~:~P_i \subseteq C_j\}]$.} vertex-weighted by $P_i \subseteq C_j \mapsto w(S_i)$.
 Note that $(G/\P)[C_j]$ can indeed be assimilated to a graph, since it has, by design, no red edge.
 We compute a heaviest independent set in $H_j$, say $R_j$.
 This takes time $(d+1) \cdot n \cdot 2^{\sqrt n}=2^{O_d(\sqrt n)}$.
 We output $\bigcup_{P_i \subseteq R_j} S_i$ for the index $j \in [d+1]$ maximizing $\sum\limits_{P_i \subseteq R_j} w(S_i)$.

 This finishes the description of the algorithm.
 We already argued that its running time is $2^{O_d(\sqrt n)}$.
 We shall justify that it does output an independent set of weight at least a $\frac{1}{d+1}$ fraction of the optimum $\alpha(G)$.

 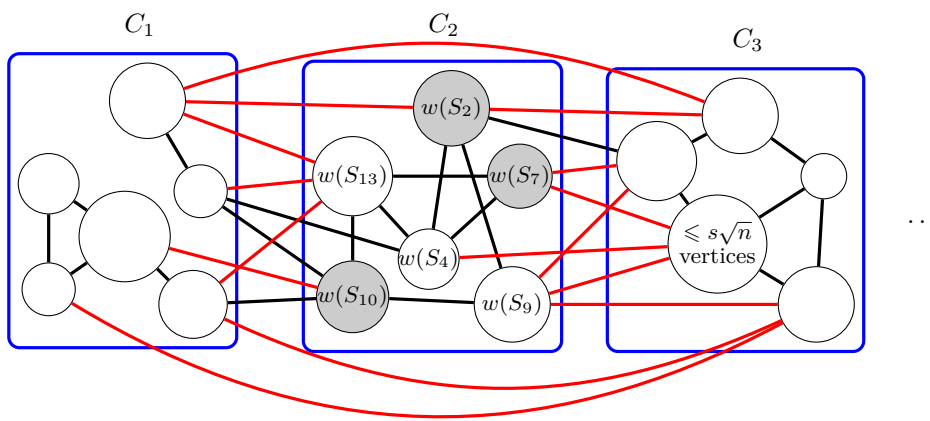
\begin{figure}[h!]
   \centering
   \begin{tikzpicture}
     \foreach \i/\j/\l/\s in {0/0/p1/0.8, 0.3/2/p2/1, 1.2/1.1/p3/0.85, -1/1.1/p4/1.05, -1/-0.5/p5/0.95, 1.1/-0.6/p6/1,
       -4/0.3/q1/1.2, -3.7/2.1/q2/1, -3/0.9/q3/0.7, -5/1/q4/0.8, -5/-0.4/q5/0.7, -3.1/-0.6/q6/0.9,
        4.1/1.9/r2/1, 5.2/1.1/r3/0.6, 3/1.3/r4/1.05, 3.8/0.2/r5/1.3, 5.1/-0.6/r6/1}{
       \node[draw,circle,minimum size=\s cm] (\l) at (\i,\j) {} ;
     }
     \foreach \i/\j/\l/\s in {0.3/2/p2/1, 1.2/1.1/p3/0.85, -1/-0.5/p5/0.95}{
       \node[circle,fill,fill opacity=0.2,minimum size=\s cm] (\l) at (\i,\j) {} ;
     }
      \foreach \i/\j/\l in {0/0/{w(S_4)}, 0.3/2/{w(S_2)}, 1.2/1.1/{w(S_7)}, -1/1.1/{w(S_{13})}, -1/-0.5/{w(S_{10})}, 1.1/-0.6/{w(S_9)}}{
       \node at (\i,\j) {\footnotesize{$\l$}} ;
     }
      \node at (3.8,0.35) {\footnotesize{$\leqslant s \sqrt n$}} ;
      \node at (3.8,0.05) {\footnotesize{vertices}} ;

    \node at (6.5,0.5) {$\ldots$} ;
     
     \node[draw,very thick,blue,rounded corners,fit=(p1) (p2) (p3) (p4) (p5) (p6)] {} ;
     \node[draw,very thick,blue,rounded corners,fit=(q1) (q2) (q3) (q4) (q5) (q6)] {} ;
     \node[draw,very thick,blue,rounded corners,fit=(r2) (r3) (r4) (r5) (r6)] {} ;
     \foreach \i/\j/\k in {-3.8/3.1/{C_1},0.2/3.1/{C_2},4.2/2.9/{C_3}}{
       \node at (\i,\j) {$\k$} ;
     }
     
     \foreach \i/\j in {p1/p2,p1/p3,p1/p4,p3/p4,p4/p5,p5/p6,p6/p2,
       q2/q3,q1/q6,q1/q4,q4/q5,q5/q1,
       r2/r3,r3/r6,r6/r5,r5/r4,r4/r2,r5/r3,
       q3/p1,q3/p5,q6/p5,p2/r4}{
       \draw[very thick] (\i) -- (\j) ;
     }

     \foreach \i/\j in {q3/p4,q6/p4,q2/p4,q2/p2,q1/p5,p3/r4,p3/r5,p6/r4,p6/r5,p6/r6,p1/r5,p2/r2}{
       \draw[very thick,red] (\i) -- (\j) ;
     }
     \foreach \i/\j/\b in {q2/r2/20,q5/r6/-30,q6/r6/-25}{
       \draw[very thick,red] (\i) to [bend left=\b] (\j) ;
     }
   \end{tikzpicture}
   \caption{The trigraph $G/\P$ with its $\lfloor \sqrt n \rfloor$ vertices, each corresponding to a subset of at most $s \sqrt n$ vertices of $G$.
     The weights $w(S_i)$ of heaviest independent sets $S_i$ of $G[P_i]$ for each part $P_i$ of the color class $C_2$ of the $d+1$-coloring of $\mathcal R(G/\P)$.
     A heaviest independent set in the so-weighted $(G/\P)[C_2]$ (shaded) corresponds to an optimum solution in $G[\bigcup_{P_i \subseteq C_2} P_i]$.
   One of these $d+1$ independent sets is a $d+1$-approximation.}
   \label{fig:alg-mis}
 \end{figure}

 \medskip

 \textbf{$I$ is indeed an independent set.}
 For any $j \in [d+1]$, consider two vertices $x, y \in \bigcup_{P_i \subseteq R_j} S_i$.
 If $\{x,y\} \in S_i$ for some $i$, then $x$ and $y$ are non-adjacent since $S_i$ is an independent set of $G[P_i]$.
 Else $x \in S_i$ and $y \in S_{i'}$ for some $i \neq i'$.
 $P_i$ and $P_{i'}$ are not linked by a black edge in $(G/\P)[C_j]$ since $R_j$ is an independent set in $H_j$, nor they can be linked by a red edge (there are none in $(G/\P)[C_j]$).
 Thus again, $x$ and $y$ are non-adjacent in $G$.

 \medskip

 \textbf{$I$ has weight at least $\frac{\alpha(G)}{d+1}$}.
 We claim that $\bigcup_{P_i \subseteq R_j} S_i$ is a heaviest independent set of $G[C_j]$.
 Note that the $P_i$s that are included in $C_j$ (and partition it) form a module partition of $G[C_j]$.
 In particular, any heaviest independent set intersecting some $P_i \subseteq C_j$ has to contain a heaviest independent of $G[P_i]$.
 This is precisely what the algorithm computes.
 Then a heaviest independent set in $G[C_j]$ packs such subsolutions to maximize the total weight, which is what is computed in $H_j$. 

 We conclude by the pigeonhole principle, since a heaviest independent set $X$ of $G$ is such that $w(X \cap C_j) \geqslant \frac{\alpha(G)}{d+1}$ for some $j \in [d+1]$.
\end{proof} 

\subsection{Improving the approximation factor}

We notice in this short section that the approximation factor of~\cref{lem:mis-subexp-approx} can be improved using the notion of \emph{clustered coloring}.
The \emph{clustered chromatic number} of a class of graphs is the smallest integer $k$ such that there is a constant $c$ for which all the graphs of the class can be $k$-colored such that every color class induces a subgraph whose connected components have size at most $c$.
A proper coloring is a particular case of clustered coloring when $c=1$.

Instead of properly coloring the red graph, as we did in the proof of~\cref{lem:mis-subexp-approx}, we could use less colors and allow for small monochromatic components (in place of monochromatic components of size~1).
We use for that the following bound due to Alon et al.

\begin{theorem}[\cite{Alon03}]\label{lem:clustered-coloring}
  The class of graphs of maximum degree at most~$d$ has clustered chromatic number at most~$\lceil \frac{d+2}{3} \rceil$.
\end{theorem}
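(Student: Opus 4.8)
The plan is to follow the local-optimality strategy of Alon et al. The key numerical observation is that $d/k<3$ for $k:=\lceil (d+2)/3\rceil$, since $3k\ge d+2>d$; equivalently $\lfloor d/k\rfloor\le 2$. So fix $k=\lceil (d+2)/3\rceil$, and among all partitions $(V_1,\dots,V_k)$ of $V(G)$ into $k$ classes pick one minimizing the number of monochromatic edges $\Phi:=\sum_i |E(G[V_i])|$. I claim that in such a partition every vertex $v\in V_i$ has at most $\lfloor d/k\rfloor\le 2$ neighbours inside $V_i$: writing $a_j$ for the number of neighbours of $v$ in $V_j$, local optimality under moving $v$ forces $a_i\le a_j$ for every $j$, hence $k\,a_i\le\sum_j a_j=\deg(v)\le d<3k$, so $a_i\le 2$. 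Therefore each $G[V_i]$ has maximum degree at most $2$, i.e.\ is a disjoint union of paths and cycles. This already yields the correct number of colour classes; what remains is to bound the lengths of these monochromatic paths and cycles by a constant $c=c(d)$.

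To get the length bound, I would refine the choice: among all $\Phi$-minimal partitions, further pick one minimizing a secondary potential that heavily penalizes long monochromatic components, e.g.\ $\Psi:=\sum_{C}f(|C|)$ over all monochromatic components $C$, with $f$ strictly convex and fast-growing (the sorted vector of component sizes, ordered lexicographically, works equally well). The main step is then: if some monochromatic component had more than a suitably chosen threshold $c(d)$ vertices, one could perform a local surgery — relocating one internal vertex (or a short segment) of that long path/cycle to another colour class — that does not increase $\Phi$ and strictly decreases $\Psi$, contradicting minimality. The rough reason such a move should exist: an internal vertex $v$ of a long monochromatic path in $V_i$ has exactly two neighbours in $V_i$, hence at most $d-2$ neighbours outside; distributing these among the $k-1$ other classes, some class $V_j$ receives at most $2$ of them, so moving $v$ to $V_j$ does not increase $\Phi$, it splits the long component of $v$ in $V_i$ into two strictly shorter ones, and (provided $v$ is chosen far from the ends and its $\le 2$ cross-neighbours in $V_j$ do not themselves lie near long components) it creates only bounded new components in $V_j$; since $f$ grows fast enough that breaking one huge component beats the bounded loss elsewhere, $\Psi$ strictly drops.

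The main obstacle is precisely this surgery: a naive single-vertex move can fail if $v$'s few cross-neighbours happen to sit in the interior of other long monochromatic paths, so that the move just trades one long component for another. Overcoming this needs either an averaging argument over the many central vertices of a long path (showing that some relocation genuinely decreases the total "mass" of long components), or an augmenting-path-style chain of relocations that is guaranteed to terminate. This is the technical heart of the argument, and it is the reason the component bound $c(d)$ is not small/explicit. Since for our application only the value $\lceil (d+2)/3\rceil$ of the clustered chromatic number matters — the constant $c(d)$ enters the \mis\ algorithm merely as an $O_d(1)$ blow-up of the instances solved by brute force — I would simply invoke the full argument of~\cite{Alon03} for the surgery step and the existence of $c(d)$.
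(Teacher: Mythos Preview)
The paper does not prove this theorem: it is stated as a result of Alon, Ding, Oporowski, and Vertigan~\cite{Alon03} and simply cited, with no proof or sketch given. So there is no ``paper's own proof'' to compare your proposal against.

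Your sketch is a reasonable outline of the Alon et al.\ argument. The first step (minimize the number of monochromatic edges; local optimality forces $a_i\le \lfloor d/k\rfloor\le 2$ so each colour class is a union of paths and cycles) is standard and correct. You are also right that the real content is the second step, bounding the component sizes, and you correctly identify the obstruction to a naive single-vertex move. Since you ultimately defer to~\cite{Alon03} for that surgery, your write-up is effectively a citation with commentary---which is exactly what the paper does, minus the commentary.
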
 

We can use this lemma to improve our approximation algorithms.

\begin{theorem}\label{thm:improved-mis}
  On inputs as in~\cref{lem:mis-subexp-approx} with $s := 2^{4c_{d'}+4}$, and $d := c_{d'} \cdot 2^{4c_{d'}+4}$, \wis further admits an  $\lceil \frac{d+2}{3} \rceil$-approximation algorithm in time $2^{O_d(\sqrt n)}$.
\end{theorem}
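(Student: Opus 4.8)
The plan is to run the algorithm of \cref{lem:mis-subexp-approx} essentially verbatim, changing only the coloring step: instead of a proper $(d+1)$-coloring of the red graph, use a \emph{clustered} $k$-coloring with $k := \lceil \frac{d+2}{3} \rceil$ colors. Concretely, we first compute by \cref{lem:seq-to-partial-seq} the balanced partition $\P = \{P_1,\dots,P_{\lfloor \sqrt n \rfloor}\}$ with $|P_i| \le s\sqrt n$ and $\mathcal R(G/\P)$ of maximum degree at most $d$. By \cref{lem:clustered-coloring}, $\mathcal R(G/\P)$ has a $k$-coloring in which every color class induces a graph whose connected components have at most $c := c(d)$ vertices; such a coloring can be produced in polynomial time if one opens up the proof of Alon et al., but in any case it can be found by brute force over the $k^{\lfloor \sqrt n \rfloor} = 2^{O_d(\sqrt n)}$ colorings of the at most $\sqrt n$ vertices of $\mathcal R(G/\P)$, which is within budget. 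As in \cref{lem:mis-subexp-approx}, this yields a coarsening $\mathcal Q = \{C_1,\dots,C_k\}$ of $\P$, where $C_j$ is the union of the parts colored $j$.

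The only genuinely new ingredient is computing a heaviest independent set of $G[C_j]$, which is no longer modular since $(G/\P)[C_j]$ may carry red edges. The point is that these red edges are confined to small components: let $D^j_1,\dots,D^j_{m_j}$ be the vertex sets of the connected components of $\mathcal R(G/\P)[C_j]$ (each a set of at most $c$ parts of $\P$), and $E^j_a := \bigcup_{P_i \in D^j_a} P_i$, so $|E^j_a| \le c s\sqrt n$. If $P_i \in D^j_a$ and $P_{i'} \in D^j_b$ with $a\neq b$, then $P_iP_{i'}$ is not a red edge of $G/\P$ (else $P_i$ and $P_{i'}$ would lie in the same component of $\mathcal R(G/\P)[C_j]$), hence it is a black edge or a non-edge. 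So we precompute, for every $j$, every $a$, and every subset $F \subseteq D^j_a$, a heaviest independent set of $G[\bigcup_{P \in F} P]$ and its weight $w^j_a(F)$; exhaustively, this costs $k \cdot \sqrt n \cdot 2^c \cdot 2^{O_d(\sqrt n)} = 2^{O_d(\sqrt n)}$. Then, letting $\Gamma_j$ be the graph on the parts of $C_j$ whose edges are the black edges of $(G/\P)[C_j]$, we compute by brute force over the at most $2^{\sqrt n}$ subsets $J$ of $V(\Gamma_j)$ the quantity $\max\{\sum_a w^j_a(J \cap D^j_a) : J \text{ independent in } \Gamma_j\}$, keeping a witnessing set of $G$ (the union of the optimal subsolutions). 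We output the heaviest such set over $j \in [k]$.

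For correctness one argues component by component, as in \cref{lem:mis-subexp-approx}. The output is an independent set of $G$: two of its vertices in distinct $E^j_a, E^j_b$ lie in parts $P_i, P_{i'}$ both placed in $J$, hence not black-adjacent, and lying in distinct red components hence not red-adjacent, so non-adjacent in $G$; two vertices in the same $E^j_a$ are non-adjacent because the chosen subsolution is an independent set of $G[E^j_a]$. Conversely, any independent set $I$ of $G[C_j]$ has a trace $J := \{P_i : I \cap P_i \neq \emptyset\}$ that is independent in $\Gamma_j$ (a black edge between two touched parts would contradict independence of $I$), and $w(I) = \sum_a w(I \cap E^j_a) \le \sum_a w^j_a(J \cap D^j_a)$; hence the computed value equals $\alpha(G[C_j])$. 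Finally, a heaviest independent set $X$ of $G$ satisfies $w(X \cap C_j) \ge \alpha(G)/k$ for some $j$ by the pigeonhole principle, so the output has weight at least $\alpha(G[C_j]) \ge w(X \cap C_j) \ge \alpha(G)/k$, giving a $\lceil \frac{d+2}{3} \rceil$-approximation in time $2^{O_d(\sqrt n)}$.

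The step I expect to require the most care — though it is still modest — is the exactness of the per-class computation via the part-trace profiles $w^j_a(\cdot)$, i.e.\ the claim that $\max\{\sum_a w^j_a(J\cap D^j_a)\}$ equals $\alpha(G[C_j])$. This is exactly where clustered coloring is used: it forces the red edges of $(G/\P)[C_j]$ into components of bounded size so that every inter-component relation is black or a non-edge, which is what decouples the components and lets $\Gamma_j$ carry all the cross-component interaction. A secondary, and after the brute-force remark essentially cosmetic, point is the effective computation of the clustered coloring of \cref{lem:clustered-coloring}.
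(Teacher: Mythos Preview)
Your proof is correct and follows essentially the same approach as the paper: both replace the proper $(d+1)$-coloring by the clustered $\lceil\frac{d+2}{3}\rceil$-coloring of \cref{lem:clustered-coloring} (found by brute force in $2^{O_d(\sqrt n)}$), precompute optimal independent sets on every subset of parts within each bounded-size red component, and then combine these across components by a $2^{O(\sqrt n)}$ enumeration. The only cosmetic difference is that you phrase the cross-component combination as maximizing over independent sets $J$ of the black graph $\Gamma_j$, whereas the paper enumerates unions of per-component subsolutions and filters by edgelessness in $G$; these are equivalent formulations of the same computation.
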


\begin{proof}
   Again, we compute in polynomial time a partition $\P=\{P_1, \ldots, P_{\lfloor \sqrt n \rfloor}\}$ of~$V(G)$ whose parts have size at most $s \sqrt n$ and such that $\mathcal R(G/\P)$ has maximum degree at most~$d$, using~\cref{lem:seq-to-partial-seq}.
   Let $c$ be the constant such that $\mathcal R(G/\P)$ admits a clustered coloring using $\lceil \frac{d+2}{3} \rceil$ colors such that each color class $C_j$ (with $j \in [\lceil \frac{d+2}{3} \rceil]$) is such that the connected components $C_j^1, C_j^2, \ldots, C_j^{h_j} \subseteq C_j$ of $\mathcal R(G/\P)[C_j]$ have size at most $c$ each. 
   This coloring is guaranteed to \emph{exist} by~\cref{lem:clustered-coloring}.
   Due to the overall running time, we might as well compute it by exhaustive search, in time $2^{O_d(\sqrt n)}$.

   For every $j \in \lceil \frac{d+2}{3} \rceil$ and $h \in [h_j]$, we denote by $P_1(C_j^h), \ldots, P_{c(j,h)}(C_j^h)$ the $c(j,h) \leqslant c$ parts $P_i \in \P$ that are included in $C_j^h$.
   For every $j \in \lceil \frac{d+2}{3} \rceil$, every $h \in [h_j]$, and every $J \subseteq [c(j,h)]$, we compute a heaviest independent set in $G[\bigcup_{z \in J}  P_z(C_j^h)]$, which we denote by $S_{j,h,J}$.
   This takes time $O(\sqrt n \cdot 2^c \cdot 2^{sc \sqrt n}) = 2^{O_d(\sqrt n)}$ since $|\bigcup_{z \in J}  P_z(C_j^h)| \leqslant c \cdot s \sqrt n$.

   For each $C_j$, in time $(2^c)^{\sqrt n}=2^{c \sqrt n}$, we exhaustively try all subsets $X \subseteq \bigcup_{P_i \in C_j} P_i$ that are unions of $S_{j,h,J}$ filtering them out when $G[X]$ is not edgeless, and keep a heaviest of them, say $R_j$.
   Since there can only be black edges or non-edges between some $P_i \in C_j^h$ and $P_{i'} \in C_j^{h'}$ with $h \neq h'$, it is clear that a heaviest independent set of $G[\bigcup_{P_i \in C_j} P_i]$ is indeed a union of $S_{j,h,J}$ (with fixed $j$).
   We output a heaviest set among the $R_j$s, which is the desired $\lceil \frac{d+2}{3} \rceil$-approximation.
   The running time is as claimed.
\end{proof}

\subsection{Time-approximation trade-offs}\label{sec:time-app-tradeoffs}

\Cref{lem:mis-subexp-approx,thm:improved-mis} run exhaustive algorithms on induced subgraphs of size $O_d(\sqrt n)$.
As such, the latter inputs keep the same twin-width upper bound.
To speed up the algorithm (admittedly while worsening the approximation factor) it is tempting to recursively call our very algorithm.
We show that this leads to a time-approximation trade-off parameterized by an integer $q=0, \ldots, O_d(\log \log n)$.
At one end of this discrete curve, one finds the exact exponential algorithm ($q=0$), and more interestingly the $d+1$-approximation in time $2^{O_d(\sqrt n)}$ ($q=1$), while at the other end lies a polynomial-time algorithm with approximation factor~$n^\varepsilon$, where $\varepsilon > 0$ can be made as small as desired.

As we will deal with the same kind of recursions for several problems, we show the following generic abstraction.

\begin{lemma}\label{lem:induction-approx-fast}
  Let $\hat{d}$ be a natural, $d'=2\hat{d}+2$, and $d := c_{d'} \cdot 2^{4c_{d'}+4}$.
  Let $\Pi$ be an optimization graph problem where inputs come with a~\mbox{$\hat{d}$-sequence} of their $n$-vertex graph $G$, or with a~neatly divided matrix $(M,(\mathcal R,\mathcal C)) \in \mathcal M_{n,d'}$ conform to~$G$.
  Let $\P$ be the partition of~$V(G)$ given by~\cref{lem:seq-to-partial-seq-induction}.
  Assume that
  \begin{compactenum}
  \item\label{it:exact} $\Pi$ can be exactly solved in time $2^{O(n)}$, and there are constants $c_1, c_2, c_3$, and a function $f \geqslant 1$ such that 
  \item\label{it:induct} a~\mbox{$d^{c_3}r^2$-approximation} of $\Pi$ on $G$ can be built in time $n^{c_2}$ by using at most $n^{c_1}$ calls to an $r$-approximation of $\Pi$ --or another optimization problem $\Pi'$ already satisfying the conclusion of the lemma-- on an induced subgraph of~$G$ with at most $f(d) \sqrt n$ vertices or a~\fullregu of an induced subtrigraph of $G/\P$ (on at most $\sqrt n$ vertices).
  \end{compactenum}
  Then $\Pi$ can be $d^{c_3 (2^q-1)}$-approximated in time $$(f(d)^q n)^{(2-2^{-q})(c_1+c_2)} \cdot 2^{f(d)^{2(1-2^{-q})} n^{2^{-q}}},$$ for any non-negative integer~$q$.
\end{lemma}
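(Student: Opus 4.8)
The plan is to prove the statement by induction on $q$. The base case $q=0$ is exactly hypothesis~\ref{it:exact}: an exact (hence $d^{c_3(2^0-1)} = d^0 = 1$)-approximation in time $2^{O(n)}$, which fits the claimed bound $(f(d)^0 n)^{(2-1)(c_1+c_2)} \cdot 2^{f(d)^0 n} = n^{c_1+c_2} \cdot 2^{O(n)}$ once we absorb constants. For the inductive step, assume the statement holds for $q-1$: that is, $\Pi$ (and any problem $\Pi'$ covered by the lemma) admits a $d^{c_3(2^{q-1}-1)}$-approximation running in time $T(m,q-1) := (f(d)^{q-1}m)^{(2-2^{-(q-1)})(c_1+c_2)} \cdot 2^{f(d)^{2(1-2^{-(q-1)})} m^{2^{-(q-1)}}}$ on $m$-vertex inputs. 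Now apply hypothesis~\ref{it:induct} to the $n$-vertex input $G$: we build a $d^{c_3}r^2$-approximation using $n^{c_1}$ calls of an $r$-approximation on subinstances of at most $f(d)\sqrt{n}$ vertices, each of which is either an induced subgraph of $G$ or a \fullregu of an induced subtrigraph of $G/\P$. By \cref{lem:seq-to-partial-seq-induction}, each such subinstance again comes equipped with a neatly divided matrix in $\mathcal M_{\cdot,d'}$ conform to it, so it is a legitimate input to the level-$(q-1)$ algorithm. Taking $r := d^{c_3(2^{q-1}-1)}$, the resulting algorithm is a $d^{c_3} \cdot r^2 = d^{c_3}\cdot d^{2c_3(2^{q-1}-1)} = d^{c_3(2^q - 1)}$-approximation, giving the stated factor.

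It remains to bound the running time. The overhead of the reduction is $n^{c_2}$, and we make at most $n^{c_1}$ recursive calls, each on an instance with $m \leqslant f(d)\sqrt{n}$ vertices, each costing $T(f(d)\sqrt n, q-1)$. So the total time is at most $n^{c_2} + n^{c_1}\cdot T(f(d)\sqrt{n},q-1)$, which is dominated (up to absorbing the additive $n^{c_2}$ and the polynomial prefactors into the exponent's constant) by $n^{c_1+c_2}\cdot T(f(d)\sqrt n, q-1)$. The main computation is then to verify that
\[
 n^{c_1+c_2}\cdot\bigl(f(d)^{q-1}\cdot f(d)\sqrt n\bigr)^{(2-2^{-(q-1)})(c_1+c_2)} \cdot 2^{f(d)^{2(1-2^{-(q-1)})}\,(f(d)\sqrt n)^{2^{-(q-1)}}}
\]
is at most $(f(d)^q n)^{(2-2^{-q})(c_1+c_2)}\cdot 2^{f(d)^{2(1-2^{-q})} n^{2^{-q}}}$. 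For the exponential factor: $(f(d)\sqrt n)^{2^{-(q-1)}} = f(d)^{2^{-(q-1)}} n^{2^{-q}}$, so the exponent becomes $f(d)^{2(1-2^{-(q-1)})}\cdot f(d)^{2\cdot 2^{-q}}\cdot n^{2^{-q}} = f(d)^{2(1-2^{-(q-1)}) + 2\cdot 2^{-q}} n^{2^{-q}} = f(d)^{2(1-2^{-q})} n^{2^{-q}}$, since $1 - 2^{-(q-1)} + 2^{-q} = 1 - 2^{-q}$. That matches exactly. For the polynomial factor: the exponent of $f(d)^{q-1}$ times the factor, plus the new $f(d)$-term, plus tracking $n$, one checks $(f(d)^{q-1})^{(2-2^{-(q-1)})(c_1+c_2)}\cdot f(d)^{(2-2^{-(q-1)})(c_1+c_2)} = (f(d)^q)^{(2-2^{-(q-1)})(c_1+c_2)} \le (f(d)^q)^{(2-2^{-q})(c_1+c_2)}$ since $2 - 2^{-(q-1)} \le 2 - 2^{-q}$ and $f(d)\ge 1$, $q\ge 1$; and similarly $n^{c_1+c_2}\cdot n^{(1/2)(2-2^{-(q-1)})(c_1+c_2)} = n^{(c_1+c_2)(1 + 1 - 2^{-q})} = n^{(2-2^{-q})(c_1+c_2)}$, using $\tfrac12(2-2^{-(q-1)}) = 1 - 2^{-q}$. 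So every factor is absorbed and the induction goes through.

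The step I expect to be the main obstacle is not any single inequality but rather the bookkeeping needed to make the induction self-consistent: one must be careful that the subinstances produced by hypothesis~\ref{it:induct} are \emph{exactly} of the type the lemma's hypothesis presupposes (graphs with a neatly divided conform matrix in the appropriate $\mathcal M$-class), which is precisely what \cref{lem:seq-to-partial-seq-induction} guarantees — including for the \fullregus of induced subtrigraphs of $G/\P$, whose $d'$-parameter does \emph{not} blow up. A secondary subtlety is that the reduction may call a \emph{different} problem $\Pi'$; the inductive hypothesis must therefore be applied simultaneously to the whole family of problems satisfying the lemma's conclusion, so the recursion does not leave this family. Once these two points are in place, the running-time recurrence $T(n,q) \le n^{O(1)}\cdot T(f(d)\sqrt n, q-1)$ unrolls as above and the exponent arithmetic $n^{2^{-q}}$, $f(d)^{2(1-2^{-q})}$, together with the polynomial exponent $(2-2^{-q})(c_1+c_2)$, closes exactly.
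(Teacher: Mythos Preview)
Your proof is correct and follows essentially the same approach as the paper: induction on $q$, with the base case given by hypothesis~\ref{it:exact}, the approximation factor recursion $d^{c_3}r^2 = d^{c_3(2^q-1)}$, and the same exponent arithmetic for the running time (your computation that $\tfrac12(2-2^{-(q-1)}) = 1-2^{-q}$ and $2(1-2^{-(q-1)})+2\cdot 2^{-q} = 2(1-2^{-q})$ matches the paper's line by line). Your explicit remarks about \cref{lem:seq-to-partial-seq-induction} guaranteeing the subinstances remain of the right type, and about the simultaneous induction over all problems $\Pi'$ covered by the lemma, are exactly the two points the paper also relies on.
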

\begin{proof}
  The proof is by induction on $q$.
  The case $q=0$ is implied by~\cref{it:exact}.
  The case $q=1$, and the induction step in general, is nothing more than an abstraction of~\cref{lem:mis-subexp-approx}, where exhaustive algorithms are replaced by recursive calls.
  
  For any $q \geqslant 0$, we assume that $\Pi$ can $d^{c_3 (2^q-1)}$-approximated in the claimed running time, and show the same statement for the value $q+1$.
  Following~\cref{it:induct}, we run this algorithm --or one for another optimization problem $\Pi'$ satisfying the conclusion of the lemma-- at most $n^{c_1}$ times on $f(d) \sqrt n$-vertex induced subgraphs of the input graph $G$ or on \fullregus of induced subtrigraphs of $G/\P$.
  The latter graphs have at most $\sqrt n \leqslant f(d) \sqrt n$ vertices.
  By~\cref{lem:seq-to-partial-seq-induction}, we can compute in polynomial time a~neatly divided matrix $(M',(\mathcal R',\mathcal C')) \in \mathcal M_{|V(H)|,d'}$ conform to~$H$, for each graph $H$ of a~recursive call; hence the induction applies.   
  
  Overall this takes time at most $$n^{c_1}+n^{c_2} \cdot \left( (f(d)^q \cdot f(d) \sqrt n)^{(2-2^{-q})(c_1+c_2)} \cdot 2^{f(d)^{2(1-2^{-q})} (f(d) \sqrt n)^{2^{-q}}} \right)$$
  $$\leqslant (f(d)^{q+1} n)^{c_1+c_2 + \frac{1}{2} (2-2^{-q})(c_1+c_2)} \cdot 2^{f(d)^{2(1-2^{-q})+2^{-q}} n^{\frac{2^{-q}}{2}}}$$
  $$= (f(d)^{q+1} n)^{(2-\frac{2^{-q}}{2})(c_1+c_2)} \cdot 2^{f(d)^{2-2^{-q+1}+2^{-q}} n^{2^{-(q+1)}}}$$
  $$= (f(d)^{q+1} n)^{(2-2^{-(q+1)})(c_1+c_2)} \cdot 2^{f(d)^{2(1-2^{-(q+1)})} n^{2^{-(q+1)}}}.$$

  For the first inequality, we assume that the two summands are larger than 2, so their sum can be bounded by their product.

   Besides we get an approximation of factor at most $(d^{c_3 (2^q-1)})^2 d^{c_3} = d^{c_3 (2^{q+1}-1)}$.
\end{proof}

In more legible terms we have proved that:
\begin{lemma}\label{lem:induction-approx-leg-fast}
  Problems $\Pi$ satisfying the assumptions of~\cref{lem:induction-approx-fast} can be $d^{O(1)(2^q-1)}$-approximated in time $2^{O_{d,q}(\sqrt[2^q]{n})}$, for any non-negative integer $q$.
\end{lemma}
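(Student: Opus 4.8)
The plan is to recognise that Lemma~\ref{lem:induction-approx-leg-fast} is nothing more than a coarsening of Lemma~\ref{lem:induction-approx-fast}: everything amounts to verifying that, once $d$ and $q$ are treated as constants, the three quantities appearing in the conclusion of Lemma~\ref{lem:induction-approx-fast} -- the approximation ratio and the two factors of the running time -- have the shape claimed here. So I would not reprove anything; I would simply bound each quantity by absorbing all dependence on $d$ (and, where permitted, on $q$) into the $O(1)$, $O_d(\cdot)$, and $O_{d,q}(\cdot)$ notation.

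For the approximation ratio this is immediate: $c_3$ is an absolute constant, so $d^{c_3(2^q-1)}$ is literally of the form $d^{O(1)(2^q-1)}$. For the running time I would split the bound $(f(d)^q n)^{(2-2^{-q})(c_1+c_2)} \cdot 2^{f(d)^{2(1-2^{-q})} n^{2^{-q}}}$ into its polynomial prefactor $A := (f(d)^q n)^{(2-2^{-q})(c_1+c_2)}$ and its subexponential factor $B := 2^{f(d)^{2(1-2^{-q})} n^{2^{-q}}}$. Since $c_1,c_2$ are absolute constants, $q$ is fixed, and $f(d)$ depends only on $d$, the exponent in $A$ is a constant, so $A = n^{O_{d,q}(1)} = 2^{O_{d,q}(\log n)}$; and since the coefficient $f(d)^{2(1-2^{-q})} \le f(d)^2$ in the exponent of $B$ depends only on $d$, we get $B = 2^{O_d(n^{2^{-q}})}$. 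Using that $\log n = o(n^{2^{-q}})$ for every fixed $q$, the prefactor $A$ is itself $2^{O_{d,q}(n^{2^{-q}})}$, and multiplying, $A \cdot B = 2^{O_{d,q}(n^{2^{-q}})} = 2^{O_{d,q}(\sqrt[2^q]{n})}$, which is exactly the stated bound.

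There is essentially no hard step here, but the one place I would be careful is the inequality $\log n = o(n^{2^{-q}})$ invoked to fold the polynomial prefactor into the subexponential term: it holds for each fixed $q$ but the margin shrinks as $q$ grows, which is precisely why Lemma~\ref{lem:induction-approx-leg-fast} (like Lemma~\ref{lem:induction-approx-fast}) is stated for a \emph{fixed} non-negative integer $q$, and why later, when pushing $q$ up to obtain a polynomial running time, one can only afford $q$ of order $\log\log n$ rather than arbitrarily large. Every other estimate is routine arithmetic already carried out in the proof of Lemma~\ref{lem:induction-approx-fast}.
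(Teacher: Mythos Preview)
Your proposal is correct and matches the paper's own treatment: the paper introduces this lemma with the phrase ``In more legible terms we have proved that:'' and gives no separate proof, treating it as a direct restatement of \cref{lem:induction-approx-fast}. Your explicit unpacking of the two factors $A$ and $B$ and the use of $\log n = o(n^{2^{-q}})$ for fixed $q$ is exactly the arithmetic one would write out if asked to justify the restatement.
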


If most graph problems admit single-exponential algorithms, we will deal with such a~problem that is only known to be solvable in time $2^{O(n \log n)}$.
Therefore we prove a~variant of~\cref{lem:induction-approx-fast} with a~slightly worse running time.

\begin{lemma}\label{lem:induction-approx}
  Let $\Pi$ be solvable in time $2^{O(n \log n)}$ and satisfy the second item of~\cref{lem:induction-approx-fast}.
  Then $\Pi$ can be $d^{c_3 (2^q-1)}$-approximated in time $$2^{\left((c_1+c_2) (2-2^{-q}) \log f(d) + f(d)^{2(1-2^{-q})} n^{2^{-q}}\right) \log n},$$ for any non-negative integer $q$.
\end{lemma}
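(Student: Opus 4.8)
The plan is to proceed by induction on $q$, following verbatim the structure of the proof of~\cref{lem:induction-approx-fast} and merely keeping track of the additional $\log n$ factor created by the weaker exact algorithm. The base case $q=0$ is immediate: the hypothesis gives an algorithm running in time $2^{O(n\log n)}$, and the claimed bound reads $2^{((c_1+c_2)\log f(d)+n)\log n}$, which dominates it once the hidden constant is absorbed (this is harmless: $n$ may be padded, and in every application $f(d)\ge 2$, so the term $(c_1+c_2)\log f(d)$ supplies the needed slack).

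\textbf{Inductive step.} Assuming the statement for $q$, we prove it for $q+1$. By the second item of~\cref{lem:induction-approx-fast}, a $d^{c_3}r^2$-approximation for $\Pi$ on $G$ is computed in time $n^{c_2}$ plus at most $n^{c_1}$ calls to an $r$-approximation (or to an $r$-approximation of another problem $\Pi'$ already satisfying the lemma's conclusion), each on an induced subgraph of $G$ with at most $f(d)\sqrt n$ vertices, or on a~\fullregu of an induced subtrigraph of $G/\P$ (on at most $\sqrt n$ vertices). By~\cref{lem:seq-to-partial-seq-induction}, every such subinstance $H$ comes with a neatly divided matrix in $\mathcal M_{|V(H)|,d'}$ conform to it; hence $H$ is a legitimate input for the level-$q$ algorithm, and the induction hypothesis applies to it with $r:=d^{c_3(2^q-1)}$. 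This yields the approximation factor $d^{c_3}\cdot(d^{c_3(2^q-1)})^2 = d^{c_3(2^{q+1}-1)}$ claimed at level $q+1$.

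\textbf{Running time.} Set $m:=f(d)\sqrt n$ and, exactly as in the proof of~\cref{lem:induction-approx-fast}, bound the overhead plus recursive cost by $n^{c_1+c_2}$ times the level-$q$ running time on an $m$-vertex instance, i.e.\ by
\[ 2^{(c_1+c_2)\log n}\cdot 2^{\left((c_1+c_2)(2-2^{-q})\log f(d)+f(d)^{2(1-2^{-q})}m^{2^{-q}}\right)\log m}. \]
Two elementary computations, as in~\cref{lem:induction-approx-fast}, close the induction. First, $f(d)^{2(1-2^{-q})}m^{2^{-q}}=f(d)^{2(1-2^{-q})}f(d)^{2^{-q}}n^{2^{-(q+1)}}=f(d)^{2(1-2^{-(q+1)})}n^{2^{-(q+1)}}$, so the exponential term already has the desired shape. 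Second, $\log m=\log f(d)+\tfrac{1}{2}\log n$; bounding $\log m\le\log n$ (valid once $n\ge f(d)^2$, smaller instances being of constant size) turns the whole exponent into the global $\log n$ factor, while the precise value $\tfrac{1}{2}\log n$, together with the additive $(c_1+c_2)\log n$ coming from the $n^{c_1+c_2}$ overhead, reproduces the coefficient increment from $(2-2^{-q})$ to $(2-2^{-(q+1)})$: indeed $(c_1+c_2)\log n+\tfrac{1}{2}(c_1+c_2)(2-2^{-q})\log f(d)\log n\le(c_1+c_2)(2-2^{-(q+1)})\log f(d)\log n$ since $\log f(d)\ge 1$, the leftover lower-order terms being absorbed for $n$ large.

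The only mildly delicate point I anticipate is this last piece of bookkeeping: one must check that charging the $n^{c_1+c_2}$ overhead to the exponent-before-$\log n$ costs no more than the coefficient increase $(2-2^{-q})\to(2-2^{-(q+1)})$ affords, which is precisely where the halving of the instance size (the $\tfrac{1}{2}$ in $\log m$) and the assumption $f(d)\ge 2$ enter. Everything else is a transcription of the proof of~\cref{lem:induction-approx-fast}.
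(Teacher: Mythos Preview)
Your proposal is correct and follows essentially the same approach as the paper: induction on $q$, mirroring the proof of \cref{lem:induction-approx-fast} while tracking the extra $\log n$ factor from the weaker $2^{O(n\log n)}$ base case. In fact the paper's own proof is a two-line sketch that simply states the key inductive inequality, whereas your plan spells out the bookkeeping (the identity $f(d)^{2(1-2^{-q})}m^{2^{-q}}=f(d)^{2(1-2^{-(q+1)})}n^{2^{-(q+1)}}$ and the coefficient increment $(2-2^{-q})\to(2-2^{-(q+1)})$ via $\log m\approx\tfrac12\log n$) more explicitly than the paper does.
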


\begin{proof}
  We follow the proof of~\cref{lem:induction-approx-fast} when the induction now gives a running time of $$n^{c_2}+n^{c_1} \cdot 2^{\left((c_1+c_2) (2-2^{-q}) \log f(d) + (f(d) \sqrt n)^{2^{-q}}\right) \log (f(d) \sqrt n)}$$
  $$\leqslant 2^{\left((c_1+c_2) (2-2^{-(q+1)}) \log f(d) + f(d)^{2(1-2^{-(q+1)})} n^{2^{-(q+1)}}\right) \log n}.$$
\end{proof}

Again the previous lemma can be rewritten as:
\begin{lemma}\label{lem:induction-approx-leg}
  Problems $\Pi$ satisfying the assumptions of~\cref{lem:induction-approx} can be $d^{O(1)(2^q-1))}$-approximated in time $2^{O_{d,q}(\sqrt[2^q]{n} \log n)}$, for any non-negative integer $q$.
\end{lemma}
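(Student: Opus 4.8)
The plan is to observe that this statement is merely a restatement of \cref{lem:induction-approx} in which the explicit constants are absorbed into asymptotic notation; the proof thus amounts to inspecting the two quantities appearing in that lemma, so the induction has already been done.

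For the approximation factor, \cref{lem:induction-approx} yields a $d^{c_3(2^q-1)}$-approximation, and $c_3$ is one of the absolute constants furnished by the hypotheses of \cref{lem:induction-approx-fast} (it depends neither on $d$, nor on $q$, nor on $n$). Hence $d^{c_3(2^q-1)} = d^{O(1)(2^q-1)}$, as required.

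For the running time, I would bound the outer-parenthesised exponent
\[
\bigl((c_1+c_2)(2-2^{-q})\log f(d) + f(d)^{2(1-2^{-q})} n^{2^{-q}}\bigr)\log n
\]
summand by summand. Since $c_1, c_2$ are absolute constants, $f$ depends only on $d$, and $2-2^{-q} \le 2$, the first summand is at most $2(c_1+c_2)\log f(d) = O_d(1)$. For the second summand, $2(1-2^{-q}) \le 2$ gives $f(d)^{2(1-2^{-q})} \le f(d)^2 = O_d(1)$, so it equals $O_d(1)\cdot n^{2^{-q}} = O_d(\sqrt[2^q]{n})$. Since $\sqrt[2^q]{n} \ge 1$ for every $q \ge 0$ and $n \ge 1$, adding the two summands gives $O_{d,q}(\sqrt[2^q]{n})$ (the dependence on $q$ only serves to dominate, e.g., the constant regime), and multiplying by $\log n$ produces an exponent of $O_{d,q}(\sqrt[2^q]{n}\log n)$. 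Thus the running time of \cref{lem:induction-approx} is $2^{O_{d,q}(\sqrt[2^q]{n}\log n)}$, exactly the claimed bound.

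There is essentially no obstacle here: the only point requiring care is the bookkeeping of dependencies — verifying that $c_1, c_2, c_3$ are indeed absolute constants while $f$ is permitted to depend on $d$ — together with the trivial observation that $\sqrt[2^q]{n} \ge 1$, so that the additive $O_d(\log n)$ contributed by the first summand is harmlessly swallowed by $O_{d,q}(\sqrt[2^q]{n}\log n)$.
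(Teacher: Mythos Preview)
Your proposal is correct and matches the paper's approach: the paper presents this lemma without proof, introducing it with ``Again the previous lemma can be rewritten as,'' i.e., treating it purely as a restatement of \cref{lem:induction-approx} in asymptotic notation. Your explicit bookkeeping of the constants is exactly the verification one would supply if asked to justify that restatement.
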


We derive from~\cref{lem:induction-approx} the following notable regimes.

\begin{theorem}\label{lem:polytime-approx}
  Problems $\Pi$ satisfying the assumptions of~\cref{lem:induction-approx} admit polynomial-time $n^\varepsilon$-approximation algorithms, for any $\varepsilon > 0$.
\end{theorem}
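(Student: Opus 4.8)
The plan is to apply \cref{lem:induction-approx} with a recursion depth $q=q(n)$ that is allowed to grow, namely $q=\Theta_{d,\varepsilon}(\log\log n)$, and to observe that for such a choice the bound it provides is simultaneously a polynomial running time and an approximation factor below $n^\varepsilon$. The one elementary fact doing all the work is that $n^{1/(c\log_2 n)}=2^{1/c}$ is a \emph{constant} for every fixed $c>0$: this collapses the exponential term $2^{f(d)^{2(1-2^{-q})}n^{2^{-q}}\log n}$ of \cref{lem:induction-approx} to something polynomial as soon as $2^{-q}\log_2 n$ is bounded, that is, as soon as $2^q$ is a constant fraction of $\log_2 n$, while at the other end $2^q$ stays small enough that the factor $d^{c_3(2^q-1)}$ does not exceed $n^\varepsilon$.

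Concretely, fix $\varepsilon>0$; since $d\geqslant 2$ we may set $\delta:=\varepsilon/(2c_3\log_2 d)$, where $c_3$ is the constant from \cref{lem:induction-approx-fast}, and take $q:=\lceil\log_2(\delta\log_2 n)\rceil$. For all $n$ above a threshold depending only on $d$ and $\varepsilon$ this is a well-defined non-negative integer satisfying $\delta\log_2 n\leqslant 2^q<2\delta\log_2 n$ (for smaller $n$, solve $\Pi$ exactly in time $2^{O(n\log n)}=O_{d,\varepsilon}(1)$). Then $c_3(2^q-1)<c_3\cdot 2^q<(\varepsilon/\log_2 d)\log_2 n$, so the approximation factor $d^{c_3(2^q-1)}$ guaranteed by \cref{lem:induction-approx} is at most $2^{\varepsilon\log_2 n}=n^\varepsilon$, as wanted.

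It remains to check that the corresponding running time is polynomial. Plugging this $q$ into the bound $2^{((c_1+c_2)(2-2^{-q})\log f(d)+f(d)^{2(1-2^{-q})}n^{2^{-q}})\log n}$ of \cref{lem:induction-approx} and using $2^{-q}\leqslant 1/(\delta\log_2 n)$, we get $n^{2^{-q}}=2^{2^{-q}\log_2 n}\leqslant 2^{1/\delta}$, a constant depending only on $d$ and $\varepsilon$; together with $f(d)^{2(1-2^{-q})}\leqslant f(d)^2$ and $2-2^{-q}\leqslant 2$, the exponent is at most $\big(2(c_1+c_2)\log f(d)+f(d)^2\,2^{1/\delta}\big)\log n=:C_{d,\varepsilon}\log n$, so the running time is at most $2^{C_{d,\varepsilon}\log n}=n^{C_{d,\varepsilon}}$, which is polynomial in $n$.

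There is no genuinely hard step here once the right $q$ has been spotted; the only point to be careful about is that the two requirements on $q$ are compatible. Polynomial running time forces $2^q=\Omega_d(\log n)$, that is, $q\geqslant\log_2\log_2 n-O_d(1)$, while a ratio of $n^\varepsilon$ only requires $2^q\leqslant(\varepsilon/(c_3\log_2 d))\log_2 n$, that is, $q\leqslant\log_2\log_2 n+O_{d,\varepsilon}(1)$, so both can be met by taking $q$ equal to $\log_2\log_2 n$ up to an additive $O_{d,\varepsilon}(1)$, which is exactly our choice. (As noted for \cref{thm:intro-mis3}, when $\Pi$ is solvable in single-exponential time one may instead invoke \cref{lem:induction-approx-fast}; then the $f(d)^q$ terms contribute only $(\log n)^{O_d(1)}$ and the polynomial exponent can be taken to be an absolute constant, but for the present statement the cruder bound from \cref{lem:induction-approx} suffices.)
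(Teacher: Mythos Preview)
Your proof is correct and essentially identical to the paper's: both choose $q=\lceil\log_2(\frac{\varepsilon}{2c_3\log_2 d}\log_2 n)\rceil$ (your $\delta$ is just shorthand for the constant in front of $\log_2 n$), and both verify the approximation ratio and running-time bounds by the same substitutions, arriving at the same polynomial exponent $2(c_1+c_2)\log f(d)+f(d)^2 d^{2c_3/\varepsilon}$. Your treatment is slightly more explicit about the small-$n$ regime and the compatibility of the two constraints on $q$, but there is no substantive difference.
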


\begin{proof}
  This is the particular case $q = \lceil \log \frac{\varepsilon \log n}{2c_3 \log d}\rceil$.

  Indeed the approximation factor is then at most $d^{c_3 (2^q-1)} \leqslant d^{2c_3 \frac{\varepsilon \log n}{2c_3 \log d}} = 2^{\varepsilon \log n}=n^\varepsilon$, while the running time is at most $$2^{\left((c_1+c_2) (2-2^{-q}) \log f(d) + f(d)^{2(1-2^{-q})} n^{2^{-q}} \right) \log n} \leqslant 2^{\left(2(c_1+c_2)\log f(d) + f(d)^2 n^{\frac{2c_3 \log d}{\varepsilon \log n}}\right)\log n}$$
  $$=n^{2(c_1+c_2) \log f(d) + f(d)^2 d^{\frac{2c_3}{\varepsilon}}}.$$

  If further $\Pi$ can be solved exactly in time $2^{O(n)}$ (hence satisfies the assumptions of~\cref{lem:induction-approx-fast}), one obtains a better running time, where the exponent of $n$ does not depend on $\varepsilon$.
  Indeed,
  \[(f(d)^q n)^{(2-2^{-q})(c_1+c_2)} 2^{f(d)^{2(1-2^{-q})} n^{2^{-q}}} \leqslant \left(\frac{\varepsilon \log n}{c_3 \log d}\right)^{2(c_1+c_2) \log f(d)} 2^{f(d)^2 d^{\frac{2c_3}{\varepsilon}}} n^{2(c_1+c_2)}.\qedhere\]
\end{proof}

\begin{theorem}\label{lem:log-approx}
  Problems $\Pi$ satisfying the assumptions of~\cref{lem:induction-approx-fast}, resp.~\cref{lem:induction-approx}, admit a $\log n$-approximation algorithm running in time $2^{O_d(n^{\frac{1}{\log \log n}})}$, resp.~$2^{O_d(n^{\frac{1}{\log \log n}} \log n)}$.
\end{theorem}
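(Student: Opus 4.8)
The plan is to re-run the argument behind \cref{lem:polytime-approx}: instantiate \cref{lem:induction-approx-fast} (for problems solvable in single-exponential time) or \cref{lem:induction-approx} (for the $2^{O(n\log n)}$ case) at a suitable recursion depth $q$, the only difference being that $q$ is now tuned so that the approximation factor $d^{c_3(2^q-1)}$ lands around $\log n$ rather than at an arbitrary $n^\varepsilon$. Concretely I would take $q := \lfloor \log\bigl(\tfrac{\log\log n}{c_3 \log d}\bigr)\rfloor$, so that for $n$ large enough (for small $n$ a single brute-force call already fits the budget) we have $2^q \le \tfrac{\log\log n}{c_3\log d}$ and hence
$$d^{c_3(2^q-1)} \;\le\; d^{c_3 2^q} \;\le\; d^{\frac{\log\log n}{\log d}} \;=\; 2^{\log\log n} \;=\; \log n,$$
while simultaneously $q = \Theta(\log\log\log n)$ and $2^{-q} = O_d\!\bigl(\tfrac1{\log\log n}\bigr)$, i.e.\ $n^{2^{-q}} = n^{O_d(1/\log\log n)}$.

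Next I would substitute this $q$ into the running-time bound of \cref{lem:induction-approx-fast}, namely $(f(d)^q n)^{(2-2^{-q})(c_1+c_2)} \cdot 2^{f(d)^{2(1-2^{-q})} n^{2^{-q}}}$. The prefactor is $n^{O(1)}$: the exponent $(2-2^{-q})(c_1+c_2)$ is a constant and $f(d)^q = 2^{O_d(\log\log\log n)} = n^{o(1)}$. In the exponent of the second factor, $f(d)^{2(1-2^{-q})} \le f(d)^2$ and $n^{2^{-q}} = n^{O_d(1/\log\log n)}$, so after folding the polynomial prefactor into the exponent the whole running time is $2^{O_d(n^{1/\log\log n})}$. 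For a problem only known to be solvable in time $2^{O(n\log n)}$ I would instead feed the same $q$ into the bound $2^{((c_1+c_2)(2-2^{-q})\log f(d) + f(d)^{2(1-2^{-q})}n^{2^{-q}})\log n}$ of \cref{lem:induction-approx}; the identical estimates on the two summands merely multiply the exponent by the global factor $\log n$, yielding $2^{O_d(n^{1/\log\log n}\log n)}$.

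The one point that needs care is the constant bookkeeping hidden in ``$O_d$'': the floor in the definition of $q$ can inflate $2^{-q}$ by up to a factor $2$, so a priori $n^{2^{-q}}$ is only bounded by $n^{\gamma_d/\log\log n} = (n^{1/\log\log n})^{\gamma_d}$ for a constant $\gamma_d = \Theta(c_3 \log d)$; one therefore reads the time bound $2^{O_d(n^{1/\log\log n})}$ with the $d$-dependent constant allowed to sit next to $1/\log\log n$ inside the exponent (equivalently, if one insists on a clean $2^{O_d(n^{1/\log\log n})}$, one should choose $q=\lceil\log\log\log n\rceil$ instead and settle for a $\polylog(n)$- rather than an exactly-$\log n$-approximation). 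Apart from this, nothing new is required: this is the specialization of \cref{lem:induction-approx-fast,lem:induction-approx} to the point on the trade-off curve where the approximation ratio becomes (poly)logarithmic, verified exactly as for \cref{lem:polytime-approx}.
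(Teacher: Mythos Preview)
Your proposal is correct and follows essentially the same approach as the paper: both proofs simply instantiate the trade-off lemmas at a value of $q$ chosen so that $d^{c_3(2^q-1)}\le\log n$, the paper taking $q=\lfloor\log(\tfrac{\log\log n}{c_3\log d}+1)\rfloor$ while you take $q=\lfloor\log(\tfrac{\log\log n}{c_3\log d})\rfloor$, and then verify the running time by substitution. Your discussion of the floor and of where the $d$-dependent constant actually sits in the exponent is in fact more careful than the paper's one-line ``it can be easily checked that the running times are as announced.''
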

\begin{proof}
  This is the particular case $q = \lfloor \log \left( \frac{\log \log n}{c_3 \log d} + 1 \right)\rfloor$.

  This value is computed such that the approximation factor $d^{c_3(2^q-1)}$ is at most $\log n$.
  It can be easily checked that the running times are as announced.
\end{proof}

We derive the following for \wis.

\begin{theorem}\label{lem:mis-iteration}
  \wis on $n$-vertex graphs $G$ (vertex-weighted by $w$) given with a~$d'$-sequence satisfies the assumptions of~\cref{lem:induction-approx-fast}.
  In particular, this problem admits
  \begin{compactitem}
  \item a $(d+1)^{2^q-1}$-approximation in time $2^{O_{d,q}(n^{2^{-q}})}$, for every integer $q \geqslant 0$,
  \item an $n^\varepsilon$-approximation in polynomial-time $O_{d,\varepsilon}(1) \log^{O_d(1)}n \cdot n^{O(1)}$, for any $\varepsilon > 0$, and
  \item a $\log n$-approximation in time $2^{O_d(n^{\frac{1}{\log \log n}})}$,
  \end{compactitem}
  with $d := c_{2d'+2} \cdot 2^{4c_{2d'+2}+4}$. 
\end{theorem}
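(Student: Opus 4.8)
The plan is to check that \wis---on $n$-vertex graphs handed with a $d'$-sequence, which plays the role of the $\hat{d}$-sequence of~\cref{lem:induction-approx-fast}---satisfies the two hypotheses of that lemma, and then to read off the three announced regimes from~\cref{lem:induction-approx-fast} (first item), from~\cref{lem:polytime-approx} (second item), and from the first case of~\cref{lem:log-approx} (third item). Hypothesis~\cref{it:exact} is immediate: enumerating the $2^n$ subsets of $V(G)$ and returning the heaviest one that is independent solves \wis exactly in time $2^{O(n)}$. So the crux is hypothesis~\cref{it:induct}, the one-step reduction, for which I would replay the proof of~\cref{lem:mis-subexp-approx} essentially verbatim, only replacing its exhaustive subroutines by recursive calls.

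Concretely: first invoke~\cref{lem:seq-to-partial-seq-induction} to obtain in polynomial time the partition $\P=\{P_1,\dots,P_{\lfloor\sqrt n\rfloor}\}$ with every $|P_i|\le s\sqrt n$ and $\mathcal R(G/\P)$ of maximum degree at most~$d$, together with conform neatly divided matrices in the appropriate class $\mathcal M_{\cdot,d'}$ for each $G[P_i]$ and for every \fullregu of an induced subtrigraph of $G/\P$. Then: (i) for each $i$, call an $r$-approximation of \wis on $G[P_i]$---an induced subgraph of $G$ on at most $s\sqrt n=f(d)\sqrt n$ vertices---to get $S_i$; (ii) properly $(d+1)$-color $\mathcal R(G/\P)$, giving color classes $C_1,\dots,C_{d+1}$; (iii) for each $j$, form the graph $H_j:=(G/\P)[C_j]$, which is a \fullregu of an induced subtrigraph of $G/\P$ on at most $\sqrt n$ vertices, weight its vertex $P_i\subseteq C_j$ by $w(S_i)$, and call an $r$-approximation of \wis on $H_j$ to get $R_j$; (iv) output the heaviest $\bigcup_{P_i\subseteq R_j}S_i$. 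There are $\lfloor\sqrt n\rfloor+(d+1)\le n$ recursive calls and all else is polynomial, so one may take $c_1=1$, $c_2$ an absolute constant, $c_3=2$, and $f$ the part-size function of~\cref{lem:seq-to-partial-seq-induction}.

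For correctness the output is an independent set by the two observations of~\cref{lem:mis-subexp-approx} (each $S_i$ is independent in $G[P_i]$, and within one $H_j$ parts are joined only by black edges or non-edges). For the ratio, let $X$ be a heaviest independent set of $G$ and pick $j^\star$ with $w(X\cap C_{j^\star})\ge\alpha(G)/(d+1)$; since the parts $P_i\subseteq C_{j^\star}$ are a module partition of $G[C_{j^\star}]$, the exact optimum of $G[C_{j^\star}]$ is at least $\alpha(G)/(d+1)$, replacing the true module optima by the $r$-approximate weights $w(S_i)$ costs a factor $r$ on the optimum of $H_{j^\star}$, and the $r$-approximate solver on $H_{j^\star}$ costs a further factor $r$, so the output has weight at least $\alpha(G)/((d+1)r^2)$. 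Thus~\cref{it:induct} holds with one-step factor $d+1\le d^{c_3}$, and~\cref{lem:induction-approx-fast} applies; re-running its induction with base $d+1$ in place of $d^{c_3}$ sharpens the factor to $(d+1)^{2^q-1}$ while keeping the running time $2^{O_{d,q}(n^{2^{-q}})}$, which is the first item. Since \wis is moreover exactly solvable in $2^{O(n)}$, \cref{lem:polytime-approx} gives the polynomial-time $n^\varepsilon$-approximation whose exponent of $n$ is $\varepsilon$-independent, the remaining $\varepsilon$-dependence being a constant factor and a $\log^{O_d(1)}n$ factor coming from the recursion depth $q=O_{d,\varepsilon}(\log\log n)$; and the first case of~\cref{lem:log-approx} gives the $\log n$-approximation in time $2^{O_d(n^{1/\log\log n})}$.

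The only genuinely delicate point is the verification inside~\cref{it:induct} that every graph on which we recurse is one of the two shapes the framework tolerates---$G[P_i]$ an induced subgraph of $G$, $H_j$ a \fullregu of an induced subtrigraph of $G/\P$, both on $O(f(d)\sqrt n)$ vertices---so that~\cref{lem:seq-to-partial-seq-induction} hands down a conform neatly divided matrix in the right class $\mathcal M_{\cdot,d'}$ and the width parameter does not drift upward through the recursion; together with the observation that the two nested layers of $r$-approximation (the $S_i$'s, then the packing $R_j$) compose multiplicatively, producing the $r^2$ in the one-step guarantee. Everything past that is the bookkeeping already encapsulated in the lemmas of~\cref{sec:time-app-tradeoffs}.
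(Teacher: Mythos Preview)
Your proposal is correct and follows essentially the same route as the paper: verify hypothesis~\ref{it:exact} by brute force, instantiate hypothesis~\ref{it:induct} by replaying the algorithm of~\cref{lem:mis-subexp-approx} with the exhaustive subroutines replaced by $r$-approximate recursive calls, check that all $\lfloor\sqrt n\rfloor+(d+1)$ subcalls land on the two admissible shapes, and chain the two layers of $r$-approximation with the pigeonhole over the $d+1$ color classes to get the $(d+1)r^2$ one-step guarantee. The only cosmetic difference is that the paper observes $(G/\P)[C_j]$ can also be viewed as an induced subgraph of $G$ (by picking one representative per part), whereas you classify it as a \fullregu of an induced subtrigraph of $G/\P$; both descriptions are valid inputs to~\cref{lem:seq-to-partial-seq-induction}, so this changes nothing.
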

\begin{proof}
  Even the exhaustive algorithm exactly solves \swis in time $2^{O(n)}$.
  We thus focus on showing that \swis satisfies the second item of~\cref{lem:induction-approx-fast}.
  We set $c_1 \geqslant 1$ as the required exponent to turn a $d'$-sequence into a~neatly divided matrix of $\mathcal M_{n,2d'+2}$ conform to~$G$, $c_2=\frac{1}{2}+\eta$ for any fixed $\eta>0$, the appropriate $1 < c_3 \leqslant 2$, and $f(d)=d \geqslant 1$.

  The algorithm witnessing the second item is simply the proof of~\cref{lem:mis-subexp-approx}.
  We first check that this algorithm makes $\lfloor \sqrt n \rfloor + d + 1$ recursive calls on induced subgraphs of the input~$G$: each of the $\lfloor \sqrt n \rfloor$ graphs $G[P_i]$ where $P_i$ has indeed size at most $O_d(\sqrt n)$, and each of the $d+1$ graphs $(G/\P)[C_j]$ (indeed an induced subgraph of $G$ by definition of the black graph of a~trigraph) on at most $\sqrt n$ vertices.

  We finally assume that each recursive call outputs an $r$-approximation of \swis.
  Let $j \in [d+1]$ be such that $w(C_j \cap I) \geqslant \frac{1}{d+1} w(I)$ for $I$ a heaviest independent set of $G$ vertex-weighted by $w$.
  Let $J \subseteq [\lfloor \sqrt n \rfloor]$ be the indices of the $P_i$s that are intersected by $C_j \cap I$, that is, $J = \{i~:~P_i \cap (C_j \cap I) \neq \emptyset\}$.
  For every $i \in J$, set $w_i = w(P_i \cap I)$.
  Each recursive call on some $P_i$ with $i \in J$, yields an independent set of weight at least $\frac{w_i}{r}$, by assumption.
  Thus the weights that our algorithm puts on $(G/\P)[C_j]$ are such that it has an independent set of weight at least $\Sigma_{i \in J} \frac{w_i}{r}=\frac{w(C_j \cap I)}{r}$.
  As we run an $r$-approximation on this graph, we get an independent set of weight at least $\frac{w(C_j \cap I)}{r^2} \geqslant \frac{w(I)}{(d+1)r^2}$.
  Thus \swis satisfies the assumptions of~\cref{lem:induction-approx-fast}, and we conclude.
\end{proof}

\section{Finding the suitable generalization: the case of \coloring} \label{subsec:coloring}

In this section, we deal with the \coloring problem.
Unlike for \swis, we cannot solely resort to recursively calling our \coloring algorithm on smaller graphs.
The right problem generalization needs to be found for the inductive calls to work through, and it happens to be \setcoloring.

In the \setcoloring problem, the input is a couple $(G, b)$ where $G$ is a graph, and $b$ is a function assigning a positive integer to each vertex of $G$. The goal is to find, for each $v \in V(G)$, a set $S_v$ of at least $b(v)$ colors such that $S_u \cap S_v = \emptyset$ whenever $uv \in E(G)$, and minimizing $|\cup_{v \in V(G)} S_v |$. Let $\chi_b(G)$ be the optimal value of \setcoloring for $(G, b)$. Observe that \coloring corresponds to the case where $b(v)=1$ for every $v \in V(G)$.

\begin{theorem}\label{thm:coloring-app}
\setcoloring (and hence \coloring) on $n$-vertex graphs $G$ given with a $d'$-sequence satisfies the assumptions of~\cref{lem:induction-approx}. In particular, this problem admits
\begin{compactitem}
	\item a $(d+1)^{2^q-1}$-approximation in time $2^{O_{d, q}(n^{2^{-q}} \log n)}$, for every integer $q \geqslant 0$, and 
	\item an $n^{\varepsilon}$-approximation in polynomial-time for any $\varepsilon > 0$.
\end{compactitem}
with $d := c_{2d'+2} \cdot 2^{4c_{2d'+2}+4}$.
\end{theorem}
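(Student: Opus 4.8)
The plan is to verify the hypotheses of~\cref{lem:induction-approx} for $\Pi = \setcoloring$ and then invoke~\cref{lem:polytime-approx}. I would take $f(d) = d$, $c_1 = 1$, $c_2$ a suitable constant, and $c_3 = 2$ (so $d^{c_3} \ge d+1$ since $d \ge 2$). Two points need checking: a $2^{O(n\log n)}$-time exact algorithm for the \setcoloring instances that arise, and \cref{it:induct} of~\cref{lem:induction-approx-fast} with these constants.

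\emph{Exact algorithm.} Solve $(H,c)$ by a shortest-path computation in the ``residual-demand'' digraph: its vertices are the integer vectors $(r_v)_{v \in V(H)}$ with $0 \le r_v \le c(v)$, the source is $(c(v))_v$, the sink $(0)_v$, and from a vertex one may, at unit cost, decrease by one (floored at $0$) the coordinates of an independent set of~$H$. A source--sink walk of length $C$ is precisely an ordered list of $C$ independent sets covering each $v$ at least $c(v)$ times, i.e.\ a set colouring with $C$ colours; hence the distance equals $\chi_c(H)$. Since $\chi_c(H) \le |V(H)|\cdot\max_v c(v)$ (give each class of an optimal ordinary colouring of $H$ a private palette of the right size), this runs in time $\prod_v(c(v)+1)\cdot 2^{|V(H)|}\cdot|V(H)|^{O(1)}$, which is $2^{O(|V(H)|\log|V(H)|)}$ as long as the demands are polynomially bounded --- and they are: every demand produced by a recursive call is (an approximation of) some $\chi_{c|_S}(G'[S])$ for an induced subgraph $G'[S]$ of the current instance, hence is at most $(\text{module size})\times(\text{current max demand})\times(\text{that call's approximation ratio})$, and along any root-to-leaf branch these factors multiply to a polynomial in the original $n$.

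\emph{Inductive step.} Let $\P = \{P_1,\dots,P_{\lfloor\sqrt n\rfloor}\}$ be the partition from~\cref{lem:seq-to-partial-seq-induction}, properly $(d+1)$-colour the maximum-degree-$\le d$ graph $\mathcal R(G/\P)$, and let $\mathcal Q = \{C_1,\dots,C_{d+1}\}$ be the induced coarsening of $\P$. Within a class $C_j$ the parts $P_i \subseteq C_j$ form a module partition of $G[C_j]$ whose quotient is the \emph{graph} $Q_j := (G/\P)[C_j]$ (no red edge lives inside $C_j$), so $\chi_b(G[C_j]) = \chi_{b'_j}(Q_j)$ with $b'_j(P_i) := \chi_{b|_{P_i}}(G[P_i])$; and gluing per-class colourings on pairwise disjoint palettes, while restricting a colouring of $G$ to each $C_j$, gives $\chi_b(G) \le \sum_j\chi_b(G[C_j]) \le (d+1)\chi_b(G)$. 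The algorithm then: (i) recursively $r$-approximates $\chi_{b|_{P_i}}(G[P_i])$ for every $i$ --- $G[P_i]$ is an induced subgraph of $G$ on $\le s\sqrt n \le d\sqrt n$ vertices --- obtaining $\tilde b_i$ with $b'_j(P_i) \le \tilde b_i \le r\,b'_j(P_i)$ (the left inequality because a feasible set colouring over-estimates); (ii) recursively $r$-approximates $\chi_{\tilde b'_j}(Q_j)$ for every $j$, where $\tilde b'_j(P_i) := \tilde b_i$ --- $Q_j$ is a \fullregu of the induced subtrigraph $(G/\P)[C_j]$ of $G/\P$, for which~\cref{lem:seq-to-partial-seq-induction} provides the needed neatly divided matrix, and it has $\le \sqrt n$ vertices --- obtaining $\tilde c_j$; (iii) outputs $\sum_{j=1}^{d+1}\tilde c_j$ with the corresponding colouring on disjoint palettes. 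Monotonicity of $\chi$ in the demand vector ($\tilde b'_j \ge b'_j$ gives $\chi_{\tilde b'_j}(Q_j) \ge \chi_b(G[C_j])$) together with ``scaling demands by the integer $r$ scales $\chi$ by at most $r$'' ($r$ disjoint copies of an optimal colouring) yield $\chi_b(G[C_j]) \le \tilde c_j \le r^2\chi_b(G[C_j])$, hence $\chi_b(G) \le \sum_j\tilde c_j \le (d+1)r^2\chi_b(G) \le d^{c_3}r^2\chi_b(G)$, assembled from $\lfloor\sqrt n\rfloor + d + 1 \le n^{c_1}$ calls of the two allowed kinds plus $n^{c_2}$ polynomial overhead. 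This is \cref{it:induct}, so~\cref{lem:induction-approx} gives the $(d+1)^{2^q-1}$-approximation in time $2^{O_{d,q}(n^{2^{-q}}\log n)}$, and~\cref{lem:polytime-approx} (case $q = \Theta_{d,\varepsilon}(\log\log n)$) the polynomial-time $n^\varepsilon$-approximation.

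The decisive structural move --- replacing \coloring by the self-reducible \setcoloring --- is dictated by the need for the recursion to close; the remaining work is bookkeeping, and the main pitfall is that steps (i)--(ii) must pass \emph{over}-estimated demands up the recursion (the colour counts actually used on the modules, not the true optima) so that the reassembled colouring stays feasible, while those inflated demands must simultaneously be shown to remain polynomially bounded so that the residual-demand exact algorithm is still $2^{O(n\log n)}$-time. Apart from that, this is essentially the proof of~\cref{lem:mis-subexp-approx} with ``heaviest independent set of $G[P_i]$'' and ``heaviest independent set of the weighted $(G/\P)[C_j]$'' replaced by recursive approximate set-colourings of $G[P_i]$ and of $Q_j$, respectively.
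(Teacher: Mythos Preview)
Your proof is correct and follows the same approach as the paper: generalize to \setcoloring, recurse on the modules $G[P_i]$ and on the red-free quotients $Q_j=(G/\P)[C_j]$ with the over-estimated demands, and combine via the $(d+1)$-colouring of $\mathcal R(G/\P)$. The two differences are cosmetic: the paper invokes an inclusion--exclusion algorithm of Nederlof running in time $O^*(\max_v b(v)^n)$ for the exact step rather than your residual-demand BFS, and your explicit tracking of why the demands stay polynomially bounded along the recursion is a point the paper leaves implicit when asserting the $2^{O(n\log n)}$ bound.
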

\begin{proof}
It is known~\cite{Ned08} that \setcoloring can be solved using the inclusion-exclusion principle in time $O^*(\max_{v \in V(G)}b(v)^n) = 2^{O(n \log n)}$.
We now prove that it satisfies the second item of~\cref{lem:induction-approx-fast}. We denote by $\mathcal{A}$ the $r$-approximation algorithm of the statement, which we will use on instances of \setcoloring. In particular, we will call it at most $\sqrt{n} + 1$ times, and will obtain at the end a $(d+1)r^2$-approximation on our input $(G, b)$ in polynomial time.

We first apply~\cref{lem:seq-to-partial-seq-induction} to get, in polynomial-time, a partition $\P = \{P_1, \dots, P_{\lfloor \sqrt{n}\rfloor}\}$ of $V(G)$ whose parts have size at most $d \sqrt{n}$ and such that $R(G/\P)$ has maximum degree at most $d$. 
For every $i \in [\lfloor \sqrt{n} \rfloor]$, we use $\mathcal{A}$ to compute an $r$-approximated solution $c_{P_i}$ of $(G[P_i], b_{|P_i})$. 
We denote by $b'$ the function which assigns, to each $P_i$, the number of colors of $c_{P_i}$. 
We now compute, in polynomial-time, a proper $(d+1)$-coloring of $R(G/\P)$, which defines the sets $C_1$, $\dots$, $C_{d+1}$. For each $j \in [d+1]$, we construct another \setcoloring instance consisting of the graph $H_j = (G/\P)[C_j]$ (recall that this trigraph has no red edge, and can thus be seen as a graph), together with the function $b'_{|C_j}$. Again we use $\mathcal{A}$ to compute an $r$-approximated solution on $(H_j,  b'_{|C_j})$. We denote by $c_H$ this solution.
Let $G_j$ be the subgraph of $G$ induced by $\cup_{P_i \in C_j} P_i$, and $b_j$ the restriction of $b$ to $V(G_j)$. We now show how to construct a solution $c_j$ of \setcoloring to $(G_j, b_j)$ from $c_H$ and all $c_{P_i}$.
Recall that for every $P_i \in C_j$, every $v \in P_i$, we have that $c_{P_i}(v)$ is a subset of $\{1, \dots, b'(P_i)\}$ of size at least $b(v)$, and that $c_H(P_i)$ is a subset of size at least $b'(P_i)$. Hence, for each $P_i \in C_j$, one can choose an arbitrary bijection $\tau$ from $\{1, \dots, b'(P_i)\}$ to $c_H(P_i)$, and define to each vertex $v \in P_i$ the set $c_j(v)$ as $\{\tau(x) : x \in c_{P_i}(v)\}$.

By construction, this solution is a feasible one for the instance $(G_j, b_j)$.
Let us prove that it is an $r^2$-approximation of $\chi_{b_j}(G_j)$.
First, by definition of $c_H$, our solution uses at most $r \cdot \chi_{b'_{|C_j}}(H_j)$ colors. Then, by definition of $c_{P_i}$ for every $P_i \in C_j$, we have $b'_{C_j}(P_i) \leqslant r \cdot \chi_{b_{|P_i}}(G[P_i])$.
Now, denote by $\Gamma$ the function which assigns to each $P_i \in C_j$ the number $\chi_{b_{|P_i}}(G[P_i])$. We now use the following claim, whose proof is left to the reader.
\begin{claim}\label{claim:larger-set}
Let $(G, b)$ be an instance of \setcoloring, and $r \in \R_+$. It holds that $\chi_{r \cdot b}(G) \leqslant r \cdot \chi_b(G)$, where $r \cdot b$ is the function which assigns $r \cdot b(v)$ to each $v \in V(G)$.
\end{claim}

This implies $\chi_{b'_{|C_j}}(H_j) \leqslant r \cdot \chi_{\Gamma}(H_j)$, and thus our solution uses at most $r^2 \cdot \chi_{\Gamma}(H_j)$ colors. We now prove the following claim.

\begin{claim}
$\chi_{\Gamma}(H_j) \leqslant \chi_{b_j}(G_j)$.
\end{claim}
\begin{proof}[Proof of the claim]
Let $c$ be an optimal solution for $(G_j, b_j)$. For every distinct $P_i$, $P_{i'} \in C_j$ such that $P_iP_{i'}$ is an edge of $H_j$, it holds that there are all possible edges between $P_i$ and $P_{i'}$ in $G_j$ (by definition of the coloring $C_1$, $\cdots$, $C_{d+1}$), hence it holds that $\bigcup_{v \in P_i} c(v)$ and $\bigcup_{v \in P_{i'}}c(v)$ have empty intersection. Moreover, by definition of $\Gamma$, we have that $\bigcup_{v \in P_i} c(v)$ is of size at least $\Gamma(P_i)$, hence the function which assigns $\bigcup_{v \in P_i} c(v)$ to each $P_i$ is a feasible solution for $(H_j, \Gamma)$ using at most $\chi_{b_j}(G_j)$ colors.
\end{proof}

We now have in hand an $r^2$-approximated solution of $(G_j, b_j)$ for every $j \in [d+1]$, which can be turned into a $(d+1)r^2$-approximated solution of $(G, b)$, as desired.
\end{proof}

\section{Edge-based problems: the case of \mim} \label{subsec:matching}

So far, we only considered problems where approximated solutions in each part $P_i$ of a~partition $\P$ of $V(G)$ of small width, and in some selected induced subgraphs of $(V(G/\P),E(G/\P))$, were enough to build an approximated solution for $G$.\footnote{The improvement based on clustered coloring slightly departed from that simple scheme.}
We now handle problems for which a number of edges is to be optimized.
Now all competitive solutions can integrally lie in between pairs of parts $P_i, P_j$ linked by a black or a red edge in $G/\P$.
This complicates matters, and forces us to be competitive there as well, naturally splitting the algorithm into three subroutines.

We present the algorithms for \msim where one is given, in addition to the input graph $G$ (possibly with edge weights), a subset $Y \subseteq E(G)$, and the goal is to find a heaviest induced matching $S$ of $G$ such that $S \subseteq Y$.
Then \mim is the particular case when $Y=E(G)$.
Of course, we could solely use the edge weights to emulate $Y$ (by giving negative weights to all the edges in $E(G) \setminus Y$).
We believe this formalism is slightly more convenient for the reader to quickly and explicitly identify where our algorithm is seeking mutually induced edges.

Since the case of \mim is more involved than were the treatment of \smis and \coloring, we again split the arguments into the design of a subexponential-time constant-approximation algorithm (\cref{lem:induced-matching-subexp}) followed by how this algorithm meets the requirements of~\cref{lem:induction-approx-fast} (\cref{lem:msim-iteration}). 

\begin{lemma}\label{lem:induced-matching-subexp}
  Assume every input $(G,Y)$ is given with a~\mbox{$d'$-sequence} of the $n$-vertex, edge-weighted by $w$, graph $G$.
  We set $d := c_{d'} \cdot 2^{4c_{d'}+4}$, and $s := 2^{4c_{d'}+4}$.
  \msim can be $O(d^2)$-approximated in time $2^{O_d(\sqrt n)}$ on these inputs.
\end{lemma}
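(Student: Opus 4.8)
The plan is to mimic the proof of~\cref{lem:mis-subexp-approx}, but with three coordinated subroutines rather than one. First, using~\cref{lem:seq-to-partial-seq} we compute in polynomial time a partition $\P=\{P_1,\ldots,P_{\lfloor\sqrt n\rfloor}\}$ of $V(G)$ with every part of size at most $s\sqrt n$ and $\mathcal R(G/\P)$ of maximum degree at most~$d$. Fix a heaviest induced matching $S\subseteq Y$ of $G$; we may assume $S$ contains no negative-weight edge, so removing edges keeps it feasible without improving it. Partition $S$ into $S_\mathrm{in}$ (edges with both endpoints in one part), $S_\mathrm{black}$ (edges between two parts linked by a black edge of $G/\P$), and $S_\mathrm{red}$ (edges between two parts linked by a red edge of $G/\P$); this is a partition because an edge of $G$ with endpoints in two distinct parts forces those parts to be black- or red-adjacent. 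Since $w(S)=w(S_\mathrm{in})+w(S_\mathrm{black})+w(S_\mathrm{red})$ and each summand is nonnegative, it suffices to compute, for each of the three, an induced matching of $G$ of weight $\Omega(1/d^2)$ times it; outputting the heaviest of the three then loses only an extra factor~$3$.

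\emph{Targeting $S_\mathrm{in}$.} For each $i$, brute-force a heaviest induced matching $M_i$ of $G[P_i]$ all of whose edges are in $Y$, in time $2^{O(s\sqrt n)}$, and set $w_i:=w(M_i)$. Weighting each vertex $P_i$ of the total graph $\mathcal T(G/\P)$ (a~\fullregu of $G/\P$ on at most $\sqrt n$ vertices) by $w_i$, brute-force a heaviest independent set $T'$ there and output $\bigcup_{P_i\in T'}M_i$. Between two parts of $T'$ there is neither a black nor a red edge of $G/\P$, hence no edge of $G$, so the output is an induced matching of $G$ of weight $\sum_{P_i\in T'}w_i$. For the guarantee: the parts met by $S_\mathrm{in}$ form an independent set $T$ of the black graph of $G/\P$ (a black edge between two such parts would bridge two edges of $S$), so $\mathcal T(G/\P)[T]=\mathcal R(G/\P)[T]$ has maximum degree at most~$d$, hence a $w_i$-weighted independent set of weight $\ge\frac1{d+1}\sum_{P_i\in T}w_i\ge\frac1{d+1}w(S_\mathrm{in})$, using $w_i\ge w(S\cap E(G[P_i]))$.

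\emph{Targeting $S_\mathrm{black}$ and $S_\mathrm{red}$.} A black edge of $G/\P$ stands for \emph{all} edges between the two parts, so an edge $uv\in S$ between two black-linked parts $P_i,P_j$ is the unique edge of $S$ meeting $P_i\cup P_j$ (any other would bridge); hence the pairs coming from $S_\mathrm{black}$ form an \emph{induced} matching of the black graph $B:=(V(G/\P),E(G/\P))$, again a~\fullregu of $G/\P$. So weight each black edge $P_iP_j$ by $\omega_{ij}:=\max\{w(uv):uv\in Y,\ u\in P_i,\ v\in P_j\}$, brute-force a heaviest ($\omega$-weighted) induced matching $N$ of $B$ in time $2^{O(\sqrt n)}$, declare two pairs of $N$ in conflict when a red edge of $G/\P$ joins them (this graph has maximum degree $\le 2d$, as each part has $\le d$ red neighbours), $(2d+1)$-colour it, keep a heaviest colour class $Z$, and output one $\omega$-witnessing $Y$-edge per pair of $Z$; within $Z$ all cross-pairs of parts are non-edges of $G/\P$, so the result is an induced matching of $G$ of weight $\ge\frac1{2d+1}\omega(N)\ge\frac1{2d+1}w(S_\mathrm{black})$. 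For $S_\mathrm{red}$, properly colour the square of the line graph of $\mathcal R(G/\P)$ with $K=O(d^2)$ colours, so each colour class $\mathcal E_c$ is a set of red edges that pairwise neither share a part nor are joined by a red edge of $G/\P$; for each red edge $P_iP_j$ brute-force (in $2^{O(s\sqrt n)}$) a heaviest induced matching $M_{ij}$ of $G$ with all edges in $Y$ and between $P_i$ and $P_j$, of weight $\mu_{ij}$. For fixed $c$, the only remaining obstruction to gluing the $M_{ij}$ over $\mathcal E_c$ is a black edge of $G/\P$ between two of the pairs, which for the pairs met by $S$ would again bridge two edges of $S$; so those pairs are independent in the ``black-conflict graph'' on $\mathcal E_c$ (a graph on $\le\sqrt n$ vertices), and brute-forcing a heaviest ($\mu$-weighted) independent set $\mathcal E_c'$ of it and outputting $\bigcup_{P_iP_j\in\mathcal E_c'}M_{ij}$ gives an induced matching of $G$ of weight at least $\sum_{P_iP_j\in\mathcal E_c}w\big(S\cap\{\text{edges between }P_i,P_j\}\big)$. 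As each red edge lies in exactly one $\mathcal E_c$ and $w\big(S\cap\{\text{edges between }P_i,P_j\}\big)\le\mu_{ij}$, summing over $c$ shows that the best $c$ yields weight $\ge w(S_\mathrm{red})/K=\Omega(w(S_\mathrm{red})/d^2)$.

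Outputting the heaviest of the three partial solutions gives the claimed $O(d^2)$-approximation, and every brute force runs on at most $O_d(\sqrt n)$ vertices, so the total time is $2^{O_d(\sqrt n)}$; moreover each brute force is \msim on an induced subgraph of $G$ or \msim/\swis on a~\fullregu of an induced subtrigraph of $G/\P$ (namely $B$ or $\mathcal T(G/\P)$), so replacing brute forces by recursive approximations will make the argument fit the hypotheses of~\cref{lem:induction-approx-fast}. I expect $S_\mathrm{red}$ to be the crux: one must realise that a red edge between two selected pairs is \emph{not} a hard constraint (an optimum may route non-adjacent representatives through it), so these must be removed by a bounded-degree colouring of the square of the line graph of $\mathcal R(G/\P)$, whereas a black edge between two selected pairs \emph{is} a hard constraint and is instead eliminated exactly by a brute-force independent set on a $\le\sqrt n$-vertex auxiliary graph — getting this split between ``colour'' and ``solve exactly on a small graph'' right, and checking that every glued solution is genuinely induced, is the delicate part.
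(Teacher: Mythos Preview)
Your proposal is correct and follows essentially the same three-way decomposition $(S_\mathrm{in},S_\mathrm{black},S_\mathrm{red})$ as the paper's $(M_v,M_b,M_r)$, with the same distance-$2$-edge-colouring of $\mathcal R(G/\P)$ for the red case and the same $(2d{+}1)$-colouring of the conflict graph for the black case. The one noticeable variation is in the $S_\mathrm{in}$ subroutine: the paper $(d{+}1)$-colours $\mathcal R(G/\P)$ and solves \swis on the (red-free) black graph within each colour class, whereas you solve \swis directly on the total graph $\mathcal T(G/\P)$ and argue that the optimum-meeting parts are black-independent, hence induce a red-only subgraph of degree~$\le d$; this is a mild simplification but gives the same $(d{+}1)$ factor and still lands on a~\fullregu of $G/\P$, so the recursion via~\cref{lem:induction-approx-fast} goes through.
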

\begin{proof}
  Again, by~\cref{lem:seq-to-partial-seq}, we start by computing in polynomial time a partition of $V(G)$, $\P=\{P_1, \ldots, P_{\lfloor \sqrt n \rfloor}\}$, of parts with size at most $s \sqrt n$ and such that $\mathcal R(G/\P)$ has maximum degree at most~$d$.

  We $(d+1)$-color $\mathcal R(G/\P)$, which defines a coarsening $\{C_1, \ldots, C_{d+1}\}$ of $\P$.
  We also distance-2-edge-color $\mathcal R(G/\P)$ with $z = 2(d-1)d+1$ colors, that is, properly (vertex-)color the square of its line graph.
  Observe that $z-1$ upperbounds the maximum degree of the square of the line graph of $\mathcal R(G/\P)$.
  This partitions the edges of $\mathcal R(G/\P)$ into $\{E_1, \ldots, E_z\}$.
  For each red edge $e=P_iP_j \in R(G/\P)$, we denote by $p(e)$ the set $P_i \cup P_j$.
  We also set $X_h = p(E_h) = \bigcup_{e \in E_h} p(e)$ for each $h \in [z]$.

  Let $M \subseteq Y$ be a fixed (unknown) heaviest induced matching of $G$ contained in $Y$.
  Let $M_v, M_r, M_b$ partition~$M$, where $M_v$ (as \textbf{v}ertex) consists of the edges of~$M$ with both endpoints in a same~$P_i$, $M_r$ (as \textbf{r}ed) corresponds to edges of~$M$ between some $P_i$ and $P_j$ with $P_iP_j \in R(G/\P)$, and $M_b$ (as \textbf{b}lack), the edges of $M$ between some $P_i$ and $P_j$ with $P_iP_j \in E(G/\P)$.  
  We compute three induced matchings $N_v, N_r, N_e \subseteq Y$ of $G$, capturing a~positive fraction of $M_v, M_r, M_e$, respectively. \cref{fig:matchings} gives the intuition of the procedures which determine each of these approximated solutions.
  
  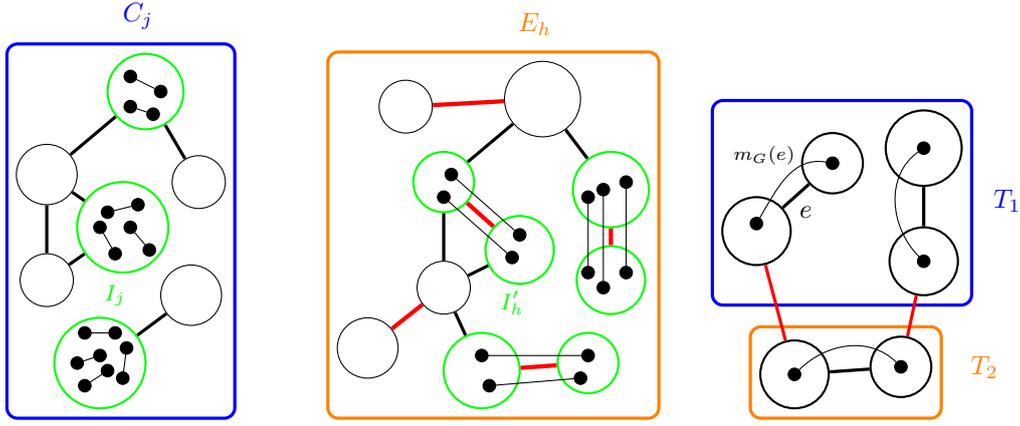
\begin{figure}[t]
   \centering
   \begin{subfigure}[b]{0.30\columnwidth}
   \centering
   \scalebox{1.0}{\begin{tikzpicture}
     \foreach \i/\j/\l/\s in {-3/0.9/q3/0.7, -5/1/q4/0.8, -5/-0.4/q5/0.7, -3.1/-0.6/q7/0.8}{
       \node[draw,circle,minimum size=\s cm] (\l) at (\i,\j) {} ;
     }
     \foreach \i/\j/\l/\s in {-4/0.3/q1/1.2, -3.7/2.1/q2/1,-4.3/-1.5/q6/1.2}{
       \node[draw,circle,color = green,thick,minimum size=\s cm] (\l) at (\i,\j) {} ;
     }
     \node[color = green] at (-4.1,-0.6) {\footnotesize{$I_j$}} ;

    \foreach \i/\j/\k/\l/\s/\t in {
       -3.5/2.1/-3.9/2.3/u1/v1, -3.6/1.8/-3.9/1.9/u2/v2,-3.8/0.6/-4.2/0.5/u3/v3,-4.3/0.3/-4.1/-0.05/u4/v4,-3.9/0.3/-3.65/0/u5/v5,-4.5/-1.1/-4.1/-1.1/u6/v6,-4.6/-1.5/-4.3/-1.4/u7/v7,-4.5/-1.8/-4.2/-1.6/u8/v8,-3.95/-1.3/-4.0/-1.7/u9/v9}{
       \node[draw,fill=black,circle,scale=0.5] (\s) at (\i,\j) {};
       \node[draw,fill=black,circle,scale = 0.5] (\t) at (\k,\l) {};
       \draw (\s) -- (\t) ;
     }
     
     \node[draw,very thick,blue,rounded corners,fit=(q1) (q2) (q3) (q4) (q5) (q6)] {} ;
     
     \node[color = blue] at (-3.8,3.1) {$C_j$} ;
     
     \foreach \i/\j in {
       q2/q3,q1/q4,q4/q5,q5/q1,q4/q2,q6/q7}{
       \draw[very thick] (\i) -- (\j) ;
     }

\end{tikzpicture}}
   \caption{Computing $R_j$ consists first of determining the heaviest induced matching in each part $P_i$ and then, for color $C_j$, to compute the maximum independent set $I_j$ (in green) weighted by the size of the matchings.}
   \end{subfigure}
   ~
   \begin{subfigure}[b]{0.35\columnwidth}
   \centering
   \scalebox{1.0}{\begin{tikzpicture}
     \foreach \i/\j/\l/\s in {-3.7/2.1/q2/1,  -5/-0.4/q5/0.7, -6/-1.2/q10/0.8, -5.5/2.0/q8/0.7}{
       \node[draw,circle,minimum size=\s cm] (\l) at (\i,\j) {} ;
     }
     \foreach \i/\j/\l/\s in {-5/1/q4/0.8,-4.0/0.1/q1/0.9,-4.5/-1.5/q6/1.0,-3.1/-1.4/q7/0.8,-2.8/0.9/q3/1.0, -2.8/-0.3/q9/0.9}{
       \node[draw,circle,color = green,thick,minimum size=\s cm] (\l) at (\i,\j) {} ;
     }
     \node[color = green] at (-4.1,-0.6) {\footnotesize{$I_h'$}} ;

    \foreach \i/\j/\k/\l/\s/\t in {
       -3.1/0.8/-3.1/-0.2/u1/v1, -2.9/0.9/-2.9/-0.4/u2/v2,-2.6/1.0/-2.6/-0.2/u3/v3,-4.5/-1.3/-3.1/-1.3/u4/v4, -4.4/-1.7/-3.2/-1.6/u5/v5,-4.9/1.1/-4/0.3/u6/v6,-5.0/0.8/-4.1/0.0/u7/v7}{
       \node[draw,fill=black,circle,scale=0.5] (\s) at (\i,\j) {};
       \node[draw,fill=black,circle,scale = 0.5] (\t) at (\k,\l) {};
       \draw (\s) -- (\t) ;
     }
     
     \node[draw,very thick,orange,rounded corners,fit=(q1) (q2) (q3) (q4) (q5) (q6) (q10)] {} ;
     
     \node[color = orange] at (-3.8,3.1) {$E_h$} ;
     
     \foreach \i/\j in {
       q2/q3,q4/q5,q5/q1,q4/q2,q6/q5}{
       \draw[very thick] (\i) -- (\j) ;
     }

     \foreach \i/\j in {
       q3/q9,q6/q7,q1/q4,q2/q8,q5/q10}{
       \draw[ultra thick, red] (\i) -- (\j) ;
     }
\end{tikzpicture}}
   \caption{Color $E_h$ reveals a set of red edges from trigraph $G/\P$. Set $R_h'$ corresponds to the heaviest matching among these edges which is mutually induced regarding the black edges. The weight of the red edges $e$ is $w(S_e')$.}
   \end{subfigure}
   ~
   \begin{subfigure}[b]{0.28\columnwidth}
   \centering
   \scalebox{1.0}{\begin{tikzpicture}
     \foreach \i/\j/\l/\s in {-4/1/q1/0.8,-5.0/0.1/q2/0.9,-4.5/-1.8/q3/0.9,-3.1/-1.7/q4/0.8,-2.8/1.2/q5/1.0, -2.8/-0.3/q6/0.9}{
       \node[draw,circle,thick,minimum size=\s cm] (\l) at (\i,\j) {} ;
     }
     \node[color = blue] at (-1.7,0.5) {$T_1$} ;
     \node[color = orange] at (-2.0,-1.7) {$T_2$} ;


\node[draw,fill=black,circle,scale=0.5] (u1) at (-4,1) {};
\node[draw,fill=black,circle,scale = 0.5] (v1) at (-5,0.2) {};
\draw (u1) to [out=170,in=60] (v1) ;

\node at (-4.35,0.35) {$e$} ;
\node at (-4.9,1.1) {\scriptsize{$m_G(e)$}} ;

\node[draw,fill=black,circle,scale=0.5] (u2) at (-4.5,-1.8) {};
\node[draw,fill=black,circle,scale = 0.5] (v2) at (-3.1,-1.7) {};
\draw (u2) to [out=45,in=135] (v2) ;

\node[draw,fill=black,circle,scale=0.5] (u3) at (-2.8,1.2) {};
\node[draw,fill=black,circle,scale = 0.5] (v3) at (-2.8,-0.3) {};
\draw (u3) to [out=215,in=135] (v3) ;
     
     \node[draw,very thick,blue,rounded corners,fit=(q1) (q2) (q5) (q6)] {} ;
     \node[draw,very thick,orange,rounded corners,fit=(q3) (q4)] {} ;
     
     
     \foreach \i/\j in {
       q1/q2,q3/q4,q5/q6}{
       \draw[very thick] (\i) -- (\j) ;
     }

     \foreach \i/\j in {
       q2/q3,q4/q6}{
       \draw[very thick, red] (\i) -- (\j) ;
     }
\end{tikzpicture}}
   \caption{An example of set $S$ of size 3 with two colors $T_1$ and $T_2$. The induced matching $R_i''$ for color $T_i$ is obtained by considering the maximum-weighted edge $m_G(e)$ between the two parts of $e$.}
   \end{subfigure}
   \caption{Illustration of how to determine the induced matching $N_v$, $N_r$, and $N_b$ (in that order, from left to right).}
   \label{fig:matchings}
  \end{figure}

  \medskip

  \textbf{Computing $N_v$.}
  For every integer $1 \leqslant i \leqslant \lceil \sqrt n \rceil$, we compute a heaviest induced matching in $G[P_i]$ contained in $Y$, say $S_i$, in time $2^{O_d(\sqrt n)}$.
  For each $j \in [d+1]$, let $H_j$ be the \emph{graph} $(G/\P)[C_j]$ with every vertex $P_i \in C_j$ weighted by $w(S_i)$.
  We compute a heaviest independent set $I_j$ in $H_j$, also in time $2^{O_d(\sqrt n)}$.

  Let $R_j$ be the induced matching $\{e \in S_i : P_i \in I_j\}$.
  It is indeed an induced matching in~$G$ contained in $Y$, since each $S_i$ is so, there is no red edge in $(G/\P)[C_j]$, and~$I_j$ is an independent set of~$H_j$.  
  The solution $N_v$ is then a heaviest among the~$R_j$s.

  \medskip
  
  \textbf{Computing $N_r$.}
  For each $e=P_iP_j \in R(G/\P)$, we compute a heaviest induced matching~$S'_e$ in $G[p(e)]=G[P_i \cup P_j]$ among those that are included in $Y$ and have only edges with one endpoint in $P_i$ and the other endpoint in $P_j$.
  This takes times at most $\frac{\sqrt n d}{2} \cdot 2^{O_d(\sqrt n)}=2^{O_d(\sqrt n)}$ by trying out all vertex subsets, since $|P_i \cup P_j| \leqslant 2s \sqrt n$.
  For each $h \in [z]$, let $H'_h$ be the \emph{graph} $(G/\P)[\{P_i~:~P_i~\text{is incident to an edge}~e \in E_h\}]$ and the red edges $e \in E_h$ are turned black and get weight $w(S'_e)$.
  We compute a heaviest induced matching $I'_h$ in $H'_h$ among those \emph{included in~$E_h$}, in time $2^{O_d(\sqrt n)}$.
  Note here that we changed the prescribed set of edges $Y$ to $E_h$.

  Let $R'_h$ be the induced matching $\{f \in S'_e : e \in I'_h\} \subseteq Y$ of $G$.
  Indeed, each $S'_e \subseteq Y$ is an induced matching, and there is no red edge between an endpoint of $e \in I'_h$ and an endpoint of $e' \neq e \in I'_h$ (since $E_h$ is a color class in a distance-2-edge-coloring of $\mathcal R(G/\P)$), nor a black edge (by virtue of $I'_h$ being an induced matching of $H'_h$).
  The solution $N_r$ is then a heaviest among the $R'_h$s.

  \medskip
  
  \textbf{Computing $N_b$.}
  Observe first that an induced matching of $G$ can only contain at most one edge between $P_i$ and $P_j$ when $P_iP_j \in E(G/\P)$.
  Thus in the graph $(V(G/\P),E(G/\P))$, we give weight $\max\{w(f) : f=uv \in Y, u \in P_i, v \in P_j\}$, with the convention that $\max \emptyset = -1$, to each edge $e=P_iP_j \in E(G/\P)$, call $G'$ the resulting edge-weighted graph, and denote by $m_G(e)$ an edge $f \in Y$ realizing this maximum.
  We compute a heaviest induced matching $S$ of $G'$ \emph{included in $E(G')$}, in time $2^{O_d(\sqrt n)}$.
  Let $H_S$ be the graph with vertex set $S$, and an edge between $e$ and $e'$ whenever there is a red edge in $G/\P$ between an endpoint of $e$ and an endpoint of $e'$.
  As $H_S$ has degree at most $2d$, it can be $2d+1$-colored; let $T_1, \ldots, T_{2d+1}$ the corresponding color classes.

  For each $i \in [2d+1]$, let $R''_i$ be the induced matching $\{m_G(e) : e \in T_i\} \subseteq Y$ of $G$.
  Indeed, $S$~is an induced matching in the black graph of $G/\P$, and the underlying vertices of $T_i$ do not induce any red edge in $G/\P$, by design.
  The solution $N_r$ is then a heaviest among the $R''_i$s.

  \medskip

  We finally output a heaviest set among $N_v, N_r, N_b$.
  The overall running time is $2^{O_d(\sqrt n)}$ as we make a polynomial number of calls to (exhaustive) subroutines on graphs with $O_d(\sqrt n)$ vertices, and color in linear time $O(n)$-vertex graphs of maximum degree $\Delta$ with $\Delta+1$ colors.
  We already argued that $N_v, N_r, N_b \subseteq Y$ are all induced matchings in $G$, thus so is our output.

  We shall just show that we meet the claimed approximation factor.
  First, one can observe $w(N_v) \geqslant \frac{w(M_v)}{d+1}$.
  Second, at least a $\frac{1}{z}$ fraction of the weight of $M_r$ intersects some fixed $E_i$ (with $i \in [z]$).
  Let $\mathcal J$ be the parts of $\P$ intersected by $M_r \cap X_i$.
  As there cannot be a black edge between two parts of $\mathcal J$ (otherwise $M_r$ is not an induced matching as defined), our algorithm indeed computes an induced matching of $G[X_i]$ included in $Y$ of weight at least $w(M_r \cap X_i)$.
  Hence $w(N_r) \geqslant \frac{w(M_r)}{z}$.

  Third, we already argued that an induced matching in $G'$ corresponds to an induced matching in the black graph of $G/\P$.
  Thus at least one of the $R''_i$ (with $i \in [2d+1]$) contains at least a $\frac{1}{2d+1}$ fraction of the weight of $M_b$.
  Therefore $w(N_b) \geqslant \frac{w(M_b)}{2d+1}$.

  Finally the output induced matching has at least weight \[\frac{w(M)}{3 \cdot \max(d+1,z,2d+1)}=\frac{w(M)}{3z}=\frac{w(M)}{3(2(d-1)d+1)}.\hfill \qedhere\]
\end{proof}

\begin{theorem}\label{lem:msim-iteration}
  \msim on an $n$-vertex graph $G$, edge-weighted by $w$, with prescribed set $Y \subseteq E(G)$, and given with a~$d'$-sequence, satisfies the assumptions of~\cref{lem:induction-approx-fast}.
  In particular, this problem admits
  \begin{compactitem}
  \item a $(d+1)^{2^q-1}$-approximation in time $2^{O_{d,q}(n^{2^{-q}})}$, for every integer $q \geqslant 0$,
  \item an $n^\varepsilon$-approximation in polynomial-time $O_{d,\varepsilon}(1) \log^{O_d(1)}n \cdot n^{O(1)}$, for any $\varepsilon > 0$, and
  \item a $\log n$-approximation in time $2^{O_d(n^{\frac{1}{\log \log n}})}$, 
  \end{compactitem}
  with $d := c_{2d'+2} \cdot 2^{4c_{2d'+2}+4}$.
\end{theorem}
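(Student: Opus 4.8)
The plan is to follow the template of \cref{lem:mis-iteration} almost verbatim, with \cref{lem:induced-matching-subexp} playing the role that \cref{lem:mis-subexp-approx} played there, and to check that \msim meets the two hypotheses of \cref{lem:induction-approx-fast}. The first hypothesis is immediate: a brute-force search over the $2^n$ vertex subsets (keeping those inducing a disjoint union of edges all lying in $Y$) solves \msim exactly in time $2^{O(n)}$; this already yields the case $q=0$ and, crucially, places us in the stronger \cref{lem:induction-approx-fast} rather than merely \cref{lem:induction-approx}. It then remains to establish the second hypothesis, which is exactly the content of \cref{lem:induced-matching-subexp} once every exhaustive subroutine in its proof is replaced by a recursive call.

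Concretely, I would reopen the proof of \cref{lem:induced-matching-subexp} and substitute: (i) the heaviest $S_i$ inside $G[P_i]$ and the heaviest $S'_e$ inside $G[p(e)]=G[P_i\cup P_j]$ by recursive \msim calls on these induced subgraphs of $G$, which have at most $2s\sqrt n = O_d(\sqrt n)$ vertices, with prescribed edge sets $Y\cap E(G[P_i])$ and the corresponding restriction of $Y$; (ii) the heaviest independent set $I_j$ of the weighted graph $(G/\P)[C_j]$ by a call to the $r$-approximation for \wis of \cref{lem:mis-iteration} — this is permitted because \cref{lem:induction-approx-fast} explicitly allows recursion into another problem already satisfying its conclusion, and $(G/\P)[C_j]$ is a \fullregu of an induced subtrigraph of $G/\P$ on at most $\sqrt n$ vertices; (iii) the heaviest induced matching $I'_h$ of the weighted graph $H'_h$ (the parts incident to $E_h$, with the red edges of $E_h$ turned black and every other red edge deleted) by a recursive \msim call with prescribed set $E_h$, noting that $H'_h$ is again a \fullregu of an induced subtrigraph of $G/\P$; (iv) the heaviest induced matching $S$ of the black graph of $G/\P$ by a recursive \msim call on that \fullregu. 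In each case \cref{lem:seq-to-partial-seq-induction} hands us, in polynomial time, a neatly divided matrix of $\mathcal M_{|V(H)|,2d'+2}$ conform to the recursive instance $H$, so the recursion is well-founded; the number of recursive calls is $O(n)$ and all the remaining work (colorings of bounded-degree graphs, picking the $m_G(e)$, assembling $N_v,N_r,N_b$) is polynomial, exactly as in \cref{lem:induced-matching-subexp}.

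For the approximation ratio, suppose each recursive call returns an $r$-approximation. Tracing the three subroutines: $N_v$ incurs the factor $d+1$ from the proper coloring of $\mathcal R(G/\P)$ and a factor $r^2$ from the two nested calls ($S_i$ then $I_j$); $N_r$ incurs $z:=2(d-1)d+1$ from the distance-$2$ edge-coloring together with $r^2$ from $S'_e$ then $I'_h$; $N_b$ incurs $2d+1$ from the $(2d+1)$-coloring of $H_S$ and only $r$ from the single call $S$ (the values $m_G(e)$ being computed exactly), hence at most $r^2$. Taking the heaviest of the three thus gives a $3\max(d+1,z,2d+1)\cdot r^2 = 3z\,r^2$-approximation, which is a $d^{c_3}r^2$-approximation for a suitable absolute constant $c_3$ (here $d$ is large, so $3z\le d^{3}$ comfortably), with $f(d)=2s\le 2d$. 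This is precisely the second item of \cref{lem:induction-approx-fast}, so the three claimed regimes follow from \cref{lem:induction-approx-fast} together with \cref{lem:polytime-approx} and \cref{lem:log-approx}.

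\textbf{Where the real work sits.} Nothing here is deep once \cref{lem:induced-matching-subexp} is available; the only points that need care are (a) confirming that $H'_h$ — obtained by recoloring red edges — is genuinely a \fullregu of an induced subtrigraph of $G/\P$, so that \cref{lem:seq-to-partial-seq-induction} still yields a conforming matrix in $\mathcal M_{\cdot,2d'+2}$ and the versatile twin-width of the subinstances does not drift; (b) checking that changing the prescribed set $Y$ along the recursion (to $Y\cap E(G[P_i])$, to $E_h$, to $E(G/\P)$, and so on) is harmless, since $Y$ is mere side data and the graph handed down keeps its bounded twin-width, and in particular all recursive instances are honest graphs; and (c) that the distance-$2$ edge-coloring argument certifying mutual-inducedness of $I'_h$ in $G$ is reused unchanged — but this is already proved inside \cref{lem:induced-matching-subexp}, so at this stage it is purely a matter of transcription.
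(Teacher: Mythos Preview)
Your proposal is correct and follows essentially the same approach as the paper: verify the single-exponential exact algorithm, then reopen \cref{lem:induced-matching-subexp}, replace each exhaustive subroutine by a recursive call (to \msim itself or to \swis, which already satisfies \cref{lem:induction-approx-fast}), and track the approximation factor to $3(2(d-1)d+1)r^2$ via the same three-way split $N_v,N_r,N_b$. The paper sets the constants as $\tfrac{1}{2}<c_1<1$, $c_2\ge 1$, $2<c_3<3$, $f(d)=2d$ and notes in passing that the $H'_h$ calls are in fact on induced subgraphs of $G$, but otherwise your write-up and the paper's proof are the same argument.
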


\begin{proof}
  The exhaustive algorithm (trying out all vertex subsets and checking whether they induce a matching included in $Y$) solves \msim in time $2^{O(n)}$.
  Thus we show \msim satisfies the second item of~\cref{lem:induction-approx-fast}, as witnessed by~\cref{lem:induced-matching-subexp} where subcalls are dealt with recursively.
  We set $c_2 \geqslant 1$ as the required exponent to turn a $d'$-sequence into a~neatly divided matrix of $\mathcal M_{n,2d'+2}$, and compute the various needed colorings, the appropriate $\frac{1}{2} < c_1 < 1$, and $2 < c_3 < 3$, and $f(d)=2d \geqslant 1$ with $s := 2^{4c_{d'}+4}$.

  In computing $N_v$, the algorithm makes $\lfloor \sqrt n \rfloor$ recursive calls and $d + 1$ calls to \wis on induced subgraphs of $G$.
  All of these induced subgraphs are on less than $f(d) \sqrt n$ vertices.
  Computing $N_r$ makes at most $\frac{\sqrt n d}{2}$ recursive calls on induced subgraphs of $G$ with at most $f(d) \sqrt n$ vertices, followed by at most $2(d-1)d+1$ recursive calls on~\fullregus of induced subtrigraphs of $G/\P$ with at most $\sqrt n$ vertices (in fact, one can observe that the latter recursive calls happen to also be on induced subgraphs of $G$).
  Finally, computing $N_b$ makes one recursive call to a~\fullregu of $G/\P$ on $\lfloor \sqrt n \rfloor$ vertices.

  In summary, we make $O_d(\sqrt n)$ recursive calls or calls to another problem \swis (which already satisfies~\cref{lem:induction-approx-fast} with better constants) on induced subgraphs of $G$ or~\fullregus of (the whole) $G/\P$, each on $O_d(\sqrt n)$ vertices. 
  Hence, by \cref{lem:seq-to-partial-seq-induction}, the induction applies.

  We check that getting $r$-approximations on every subcall allows to output a global $3(2(d-1)d+1)r^2$-approximation.
  For that we argue that $N_v$  (resp., $N_r$, $N_b$) is a \emph{$(2(d-1)d+1)r^2$-approximation} of $M_v$ (resp., $M_r$, $M_b$).
  The fact that $N_v$ is a $(d+1)r^2$-approximation (hence a $(2(d-1)d+1)r^2$-approximation, since we assume that $d \geqslant 1$) of $M_v$ directly follows~\cref{lem:mis-iteration}.

  We now show that $N_r$ is a $(2(d-1)d+1)r^2$-approximation of $M_r$.
  Let $h \in [z] = [2(d-1)d+1]$ be an index maximizing $w(M_r \cap E(G[X_h]))$.
  Thus $w(M_r \cap E(G[X_h])) \geqslant \frac{w(M_r)}{2(d-1)d+1}$.
  Let $F_h \subseteq E_h$ be the edges $e=P_iP_j$ of $\mathcal R(G/\P)$ that are inhabited by $M_r$ (i.e., $M_r$ contains at least one edge between $P_i$ and $P_j$).
  Note that our algorithm makes an $r$-approximation of the optimum such solutions on $p(e)$ (selecting only edges between $P_i$ and $P_j$).
  Thus the $r$-approximation on $H'_h$ yields the desired $(2(d-1)d+1)r^2$-approximation $N_r$.

  Finally, one can easily see that $N_b$ is a $(2d+1)r$-approximation of $M_b$ (note, here, the absence of a 2 in the exponent of $r$).
\end{proof}

\section{Technical generalizations}\label{sec:technical}

\subsection{\mihph}

In this section we present a far-reaching generalization of the approximation algorithms for \mis and \mim.
For any fixed graph $H$, let \mihp{H} be the problem where one seeks a largest collection of mutually induced copies of $H$ in the input graph $G$, that is, a largest set $S$ such that $G[S]$ is a disjoint union of (copies of) graphs $H$.
We get similar approximation guarantees for \mihp{H}, for any connected graph~$H$.
Observe that \mis and \mim are the special cases when $H$ is a single vertex and a single edge, respectively.



We in fact approximate a technical generalization that we call \aihp{H}.
The input is a tuple $(G,w,z,\gamma,\gamma')$ where $G$ is a graph, $w: V(G)^{|V(H)|} \to \mathbb Q$ is a weight function over the tuples \emph{without repetition} of $V(G)$ of size $|V(H)|$ (that we will use to keep track of the number of mutually induced copies \emph{within} a given tuple of vertices of $G$), $z$ is an integer between 1 and $|V(H)|$, $\gamma: V(G) \to [z]$ is a labeled partition of $V(G)$ into $z$ classes, and $\gamma': V(H) \to [z]$ is a labeled partition of $V(H)$ into $z$ classes.
Note that the \mihp{H} is obtained when $w(Z)=[G[Z]~\text{is isomorphic to}~H]$ (where $[.]$ is the Iverson bracket, i.e., taking value 1 if the property it surrounds is true, and 0 otherwise) and $z=1$ (which forces the value of $\gamma$ and $\gamma'$).
The goal is to find a subset $S$ such that
\begin{compactitem}
\item $G[S]$ is a disjoint union of copies of $H$,
\item there is an isomorphism between each copy $C$ of $H$ (in $S$) and $H$ which preserves $\gamma, \gamma'$, i.e., every vertex $v$ of $C$ is mapped to a vertex $v' \in V(H)$ with $\gamma(v)=\gamma'(v')$, and 
\item $\sum\limits_{C~\text{copy of}~H~\text{in}~S} w(V(C))$ is maximized.
\end{compactitem}
  
We will need the notion of \emph{compatible trigraphs} of a (labeled) graph.
Given a graph $H$, we call \emph{compatible trigraph of $H$} any trigraph on at most $|V(H)|$ vertices obtained by turning some (possibly none) black edges or non-edges of trigraph $H/\mathcal Q$ (for any fixed choice of a partition $\mathcal Q$ of $V(H)$) into red edges.
In other words, a compatible trigraph $H'$ of $H$ is such that there is a \regu $H''$ of $H'$ that is also a quotient trigraph of~$H$.
Note that the number of compatible trigraphs of an $h$-vertex graph $H$ is upperbounded by $B_h \cdot 2^{{h \choose 2}}=2^{O(h^2)}$, where $B_h$ is the $h$-th Bell number, which counts the number of partitions of a set of size $h$.

Given a~graph $G$ vertex-partitioned by~$\P$ and a trigraph $H$, a~subset $S \subseteq V(G)$ is said \emph{cut by $\P$ along $H$} if $G[S]/\P$ is isomorphic to $H$.
By extension, the copy of $G[S]$ in $G$ (induced by $S$) is also said cut by $\P$ along $H$.

\begin{lemma}\label{lem:induced-matching-subexp}
  For any connected graph $H$, \aihp{H}, when every input $(G,w,z,\gamma,\gamma')$ is given with a~\mbox{$d'$-sequence} of the $n$-vertex graph $G$,
  satisfies the assumptions of~\cref{lem:induction-approx-fast}.
  In particular, this problem admits
  \begin{compactitem}
  \item a $d^{O_h(2^q)}$-approximation in time $2^{O_{d,h,q}(n^{2^{-q}})}$, for every integer $q \geqslant 0$,
  \item an $n^\varepsilon$-approximation in polynomial-time $O_{\varepsilon}(1) \cdot n^{O_{d,h}(1)}$, for any $\varepsilon > 0$,
  \end{compactitem}
   with $h = |V(H)|$, and $d := c_{2d'+2} \cdot 2^{4c_{2d'+2}+4}$.
\end{lemma}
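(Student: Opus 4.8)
The plan is to verify the two hypotheses of~\cref{lem:induction-approx-fast} for $\Pi = \aihp{H}$, merging the within-part argument of~\cref{lem:mis-subexp-approx} with the three-way split used for \mim, and using the annotation $(\gamma,\gamma',w)$ purely as bookkeeping that keeps the recursive subcalls landing back on \aihp{H}. For the first hypothesis, $h := |V(H)|$ being a constant, one brute-forces over all $2^n$ subsets $S$ of $V(G)$, tests in $n^{O_h(1)}$ time whether $G[S]$ is a disjoint union of $\gamma,\gamma'$-respecting copies of $H$ (take the components of $G[S]$ and check labelled isomorphism to $H$), and if so sums $w(V(C))$ over the components; this is a $2^{O(n)}$-time exact algorithm.

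For the second hypothesis, first compute the balanced partition $\P=\{P_1,\dots,P_{\lfloor\sqrt n\rfloor}\}$ of~\cref{lem:seq-to-partial-seq-induction}, with parts of size at most $s\sqrt n$ and $\mathcal R(G/\P)$ of maximum degree at most $d$. Fix an optimum solution $M$; each copy of $H$ occurring in $M$ is cut by $\P$ along exactly one compatible trigraph $H'$ of $H$ (namely $C/\P$), and there are only $2^{O(h^2)}=O_h(1)$ such trigraphs, so one slice $M_{H'^\star}$ carries an $\Omega_h(1)$ fraction of $w(M)$. It therefore suffices, for every compatible trigraph $H'$ (together with a witnessing partition $\mathcal Q$ of $V(H)$), to produce an $O_{d,h}(1)\cdot r^2$-approximation of the heaviest packing of mutually induced copies of $H$ all cut by $\P$ along $H'$, and to return the best of these; since all per-level losses will be polynomial in $d$ with $h$-bounded degree and $d\geqslant 2$, they are absorbed into the target factor $d^{c_3}$ for a suitable constant $c_3 = c_3(h)$.

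Handling a fixed $H'$ is the heart of the proof and combines all earlier difficulties, interpolating between two extremes as in the \mim algorithm. If $H'=K_1$ the relevant copies live inside single parts: recurse on each $G[P_i]$ (an \aihp{H} instance with the weight and label data restricted to $P_i$) to get a value $t_i$, then $(d+1)$-colour $\mathcal R(G/\P)$ and on each colour class run the \wis approximation on the (red-edge-free, hence genuine) induced subgraph of $G/\P$ weighted by the $t_i$; the best colour class gives a $(d+1)r^2$-approximation exactly as in~\cref{lem:mis-subexp-approx,lem:mis-iteration}. If $H'$ has $k\geqslant 2$ vertices, proceed in an inner and an outer stage. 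Inner: for every $k$-tuple of parts of $\P$ whose quotient trigraph in $G/\P$ is a~\regu of $H'$, recurse on the induced subgraph of $G$ on the union of these parts — on $O_{d,h}(\sqrt n)$ vertices — as an \aihp{H} instance whose $\gamma,\gamma'$ are refined so that the copies follow the split prescribed by $\mathcal Q$ and $H'$, and whose weight function records how many such copies fit; this is exactly what the annotated formulation is designed to permit. Outer: pick a heavy family of such part-slots that are pairwise non-conflicting (disjoint, and with no black or red edge of $G/\P$ between them); after deleting red edges this is a packing of mutually induced copies of the \emph{graph} $H'$ in a~\fullregu of an induced subtrigraph of $G/\P$ — at most $\sqrt n$ vertices, still in the $\mathcal M_{\cdot,d}$ framework by~\cref{lem:seq-to-partial-seq-induction}, though not an induced subgraph of $G$ — so the recursion again applies. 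Red-edge conflicts and shared-part conflicts are then removed by properly colouring two auxiliary conflict graphs of maximum degree $O_{d,h}(1)$ (bounded because $\Delta(\mathcal R(G/\P))\leqslant d$ and each slot uses at most $h$ parts), at an $O_{d,h}(1)$ loss, mirroring the $H_S$-colouring step and the distance-2 edge-colouring step in the \mim proof.

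Altogether the algorithm makes $n^{O_{d,h}(1)}$ recursive calls to \aihp{H} (or to \wis, which already satisfies~\cref{lem:induction-approx-fast}) on induced subgraphs of $G$ or \fullregus of induced subtrigraphs of $G/\P$, each on $O_{d,h}(\sqrt n)=:f(d)\sqrt n$ vertices, runs in polynomial time outside the recursion, and composes two $r$-layers with $O_{d,h}(1)$ colouring losses, yielding the required $d^{c_3}r^2$-approximation; \cref{lem:induction-approx-fast,lem:polytime-approx} then give the stated regimes. The main obstacle is none of the individual steps but the design of \aihp{H} itself: finding a single annotated problem into which \emph{both} the within-part copies \emph{and} the across-part copies — with their committed, uneven distribution over the slots — reduce, while the across-part recursive instances live on \fullregus of the total graph of $G/\P$ (which are not subgraphs of $G$) yet stay inside the bounded-(versatile-)twin-width framework of~\cref{sec:prelim}.
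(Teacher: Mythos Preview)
Your outline is close to the paper's argument, but there is a genuine gap in the outer stage. You assert that all recursive calls are to \aihp{H} (or to \wis). The inner calls are indeed to \aihp{H} on induced subgraphs of $G$, but the outer call is not: once you have weighted each admissible $k$-slot by the value $t(\text{slot})$ returned from the inner recursion, what you must solve on $\mathcal T(G/\P)$ is a \emph{packing of copies of $\mathcal T(H')$}, i.e.\ an instance of \aihp{\mathcal T(H')}. When $|V(H')|<h$ this is a different problem from \aihp{H}, and nothing in your write-up justifies that it is already covered by the framework. The paper handles this by an explicit induction on $h=|V(H)|$: it first proves the statement for all connected $J$ with $|V(J)|<h$, and then uses that \aihp{\mathcal T(H')} already satisfies the conclusion of \cref{lem:induction-approx-fast} whenever $|V(H')|<h$ (and that $\mathcal T(H')=H$ when $|V(H')|=h$). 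Without this induction your ``so the recursion again applies'' is unsupported.

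A second, smaller divergence: you treat all $H'$ with $|V(H')|\geqslant 2$ uniformly (inner recursion, then outer recursion plus two colourings). The paper instead splits on whether $H'$ contains a \emph{black} edge. In case (A) (black edge present) at most one $\gamma,\gamma'$-respecting copy of $H$ fits in any slot, so the inner ``recursion'' is replaced by a direct $n^{O(h)}$-time enumeration and $w'$ records the exact best single-copy weight; only one genuine recursive layer remains. In case (B) (no black edge, so the total graph of $H'$ is a single red component) the inner recursion is needed, and a $p(h,d)$-colouring of connected red tuples handles red/overlap conflicts before the outer call. The paper flags that merging the two cases ``runs into the technical issue that the weight function may destroy our combined solutions''; your uniform scheme may well be repairable, but you would need to argue carefully that the $h'$-tuple weight you feed into \aihp{\mathcal T(H')} is consistent with the $h$-tuple weight $w$ you started from in both regimes, rather than assert it.
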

\begin{proof}
  As the first item of~\cref{lem:induction-approx-fast} is satisfied, we describe an algorithm that fulfills the requirement of its second item.
  We proceed by induction on the number of vertices of $H$.
  Thus we can assume that \aihp{J}, with $J$ a~connected graph on less vertices than $H$, satisfies~\cref{lem:induction-approx-fast}.
  We already did the base case of the induction, which was \wis.

\medskip
  
  \textbf{Algorithm.}
  Again, by~\cref{lem:seq-to-partial-seq-induction}, we start by computing in polynomial time a partition of $V(G)$, $\P=\{P_1, \ldots, P_{\lfloor \sqrt n \rfloor}\}$, of parts with size at most $d \sqrt n$ and such that $\mathcal R(G/\P)$ has maximum degree at most~$d$.
  Let $S$ be a fixed (unknown) heaviest (with respect to $w$) mutually induced $H$-Packing of $G$ preserving $\gamma, \gamma'$.

  For every compatible trigraph $H'$ of~$H$, we look for mutually induced copies of~$H$ in~$G$ cut by~$\P$ along $H'$, and preserving $\gamma, \gamma'$.
  As the number of compatible trigraphs of $H$ is $2^{O(h^2)}$, a~$1/2^{O(h^2)}$ fraction of the~weight of~$S$ is made of mutually induced copies of $H$ which are cut by $\P$ along a fixed compatible trigraph $H'$.
  We now focus on this particular ``run.''

  We distinguish two cases:
  \begin{compactitem}
  \item (A) $H'$ has at least one black edge, or
  \item (B) $H'$ has no black edge.
  \end{compactitem}
  As $H$ is connected, the total graph of $H'$ is also connected.
  Indeed, switching some edges or non-edge to red edges in the quotient trigraph of $H$ cannot disconnect the total graph, which can only gain edges.
  Thus in case (A), every red component of $H'$ has at least one incident black edge, and in case (B), $H'$ has a single red component (and no black edge).

  In general, we want to individually pack red components of $H'$ (first type of recursive calls in smaller induced subgraphs of $G$), then combine those red components by connecting them with the right pattern of black edges (second type of recursive calls in the total graph of $G/\P$).  
  Handling both cases (A) and (B) in an unified way runs into the~technical issue that the weight function may destroy our combined solutions in an uncontrollable manner.
  The case distinction works as a win-win argument.
  In case~(A), due to the presence of a black edge in $H'$, we can pack at most one mutually induced copy of $H$ within any fixed subtrigraph of $G/\P$ matching $H'$.
  We thus exempt ourselves from the first type of recursive calls.
  In case~(B), we do need the two types of recursive calls (as in \swis), but the first type is done on the whole $H$.
  Thus the current weight function (on $h$-tuples) is informative enough.
  
  \textbf{Case (A).}
  The essential element here is to build a new weight function $w'$ on the $h'$-tuples of the total graph $\mathcal T(G/\P)$, with $h' := |V(H')|$.
  For every injective map $\iota: V(H') \to \P$ inducing a trigraph isomorphism and preserving $\gamma, \gamma'$, for every ordering of $\iota(V(H'))$ into an $h'$-tuple $(P_1,\ldots,P_{h'})$, we set $$w'(P_1,\ldots,P_{h'}) := \max\{w(v_1^1,v_1^2,\ldots,v_1^{a_1}, \ldots, v_{h'}^1,v_{h'}^2,\ldots,v_{h'}^{a_{h'}})~:~v_1^1,v_1^2,\ldots,v_1^{a_1} \in P_1, \ldots$$
  $$v_{h'}^1,v_{h'}^2,\ldots,v_{h'}^{a_{h'}} \in P_{h'},~\text{and}~G[\{v_1^1,v_1^2,\ldots,v_1^{a_1}, \ldots, v_{h'}^1,v_{h'}^2,\ldots,v_{h'}^{a_{h'}}\}]~\text{is isomorphic to}~H\}.$$ 
  Indeed as we previously observed, in case (A), at most one mutually induced copy of $H$ respecting the cut along $H'$ can be packed in the subgraph of $G$ induced by the vertices of $\iota(V(H'))$.
  (In the definition of $w'$, we can further impose that $a_i$ matches the number of vertices of $H$ in the corresponding part of $H'$ but this is not necessary.)

  All the $h'$-tuples not getting an image by $w'$ in the previous loop (realized in time $n^{O(h)}$) are assigned the value 0.
  We then make a recursive call to \aihp{\mathcal T(H')} on input $(\mathcal T(G/\P),w',1,\gamma_0,\gamma'_0)$ where we recall that $\mathcal T(.)$ is the total graph, and $\gamma_0,\gamma'_0$ are the constant 1 functions.

    \textbf{Case (B).}
  For every injective map $\iota: V(H') \to \P$ inducing a trigraph isomorphism and preserving $\gamma, \gamma'$, we make a~recursive call to \aihp{H} with input $(G_\iota = G[\bigcup_{P \in \iota(V(H'))} P],w,h,\gamma_\iota,\gamma'_\iota)$ where two vertices get the same label by $\gamma_\iota$ if and only if they have the same label by $\gamma$ and lie in the same $P \in \iota(V(H'))$, and $\gamma'_\iota$ gives to a vertex $v' \in X \in V(H')$ of $H$ the same label given to the vertices $v \in \iota(X)$ such that $\gamma'(v')=\gamma(v)$.
  Informally $\gamma_\iota, \gamma'_\iota$ forces the recursive call to commit to the map $\iota$ and the former functions $\gamma, \gamma'$.

  Each such recursive call yields a~mutually induced packing of $H$.
  Since the red graph of $G/\P$ has degree at most~$d$, we can color the (ordered) tuples of $\P$ of length up to $h$ and inducing a connected subgraph of $\mathcal R(G/\P)$ with at most $p(h,d) = h d^{2h} \cdot d^{2h} \cdot h! + 1$ colors such that every color class consists of disjoint tuples pairwise not linked by a red edge in $G/\P$.
   Indeed the claimed number of colors minus 1 upperbounds, in $\mathcal R(G/\P)$, the number of connected tuples of length up to $h$ that can touch (i.e., intersect or be adjacent to) a fixed connected tuple of length up to $h$.
   One color class contains a fraction $1/p(h,d)$ of the weight of the optimal solution $S$ (subject to the same constraints).
   Running through all color classes $j$ (and focusing on one containing a largest fraction of the optimum), we define a weight function $w'$ on the $h'$-tuples of $\mathcal T(G\P)$, with $h' = |V(H')|$, by giving to a tuple the weight returned by the corresponding recursive call whenever it is part of color class $j$, and weight 0 otherwise.
    We then make a recursive call to \aihp{\mathcal T(H')} on input $(\mathcal T(G/\P),w',1,\gamma_0,\gamma'_0)$ where we recall that $\mathcal T(.)$ is the total graph, and $\gamma_0,\gamma'_0$ are the constant 1 functions.

   We output a heaviest solution among all runs.
   We now check that the algorithm is as prescribed by~\cref{lem:induction-approx-fast}.

   \medskip
   
   \textbf{Number of recursive calls.}
   We make at most $2^{O(h^2)} \cdot h \cdot |V(G/\P)|^h=n^{O_h(1)}$ recursive calls \emph{to \aihp{H}}, and at most $p(h,d)+1=O_{d,h}(1)$ recursive calls \emph{to \aihp{\mathcal T(H')}}.
   Hence there is a constant $c_1$ (function of $d$ and $h$) such that the number of calls is bounded by $n^{c_1}$.

   \medskip

   \textbf{Nature and size of the inputs of the recursive calls.}
   Both $H$ and $\mathcal T(H')$ have strictly less vertices than $H$ or are equal to $H$.
   Thus the induction on $h$ applies.
   Besides, $G[\bigcup_{P \in \iota(V'(H))} P]$ is an induced subgraph of $G$ of size at most $h \cdot d \sqrt n = O_{d,h}(1) \cdot \sqrt n$, and $\mathcal T(G/\P)$ is a~\fullregu of $G/\P$ of size at most $\lfloor \sqrt n \rfloor$.

   \medskip

   \textbf{Running time.}
   Outside of the recursive calls, one can observe that our algorithm takes times $O_{d,h}(1) \cdot n^{O_h(1)}$.
   Hence there is a constant $c_2$ (function of $d$ and $h$) such that the running time of that part is bounded by $n^{c_2}$.

   \medskip

   \textbf{Correctness and approximation guarantee.}
   As all the recursive calls are on induced subgraphs of $G$ or of the total graph $\mathcal T(G/\P)$, we return a mutually induced collection of graphs of the size of $H$.
   All these graphs are indeed induced copies of $H$ since the weight function prevents the false positives of copies of $H$ in the total graph $\mathcal T(G/\P)$ but not in $G$ (these tuples are given weight 0).
   Finally it can be checked that the returned solution has weight a fraction $(2^{O(h^2)} \cdot \max(r,p(h,d)r^2))^{-1}$ of the optimum, which can also be seen as a~$d^{c_3}r^2$-approximation for some constant $c_3$ depending on $d$ and $h$.
\end{proof}

\subsection{Independent induced packing of stars and forests}

The techniques employed to design approximations algorithms for \msim can be extended in order to tackle more general problems.
In particular, we show in this section a generalization of \cref{lem:msim-iteration} for \misf and \mif.
These two problems stand as the version of \mihp{\mathcal{H}} where $\mathcal{H}$ is respectively either the infinite family of stars or trees.

On the one hand, \misf asks, given a graph $G$ and a subset $Y \subseteq E(G)$, for a collection of induced stars on $G$, made up of edges of $Y$ only, maximizing the number of edges (or leaves). 

\defoptproblem{\misf}{Graph $G$, subset $Y \subseteq E(G)$}{
Collection $(A_i)_{i \in \left[ k \right]}$ of induced stars on $G$, made up of edges in $Y$ only, such that there is no edge between $A_i$ and $A_j$, for any $i \neq j \in [k]$, which maximizes the number of edges.
}

On the other hand, given the same input, \mif asks for an induced forest $F$ on $G$ with the largest set of edges.

We would like to emphasize the fact that the objective function of both problems counts the number of \textit{edges} in the solution, instead of \textit{vertices}, as it is often the case in the literature when looking for a collection of stars or trees in a graph. The reason for this is because an approximated solution for these vertex versions can be obtained from an approximated solution of \wis (since any independent set is a star forest, and any forest is a bipartite graph).

 Observe moreover that a solution of \mif can be 3-approximated with a solution of \misf. Indeed, the edge set of any tree can be partitioned into three distance-2-edge colors, which consist of a collection of stars. Therefore, the induced forest $F$ can be partitioned into three collections of induced stars. In the remainder, we design approximation algorithms for \misf, and directly deduce results for \mif.



In the remainder, we propose approximation algorithms for \misf. 
We provide in particular a $n^{\varepsilon}$-approximation algorithm for \misf, running in polynomial time.

We need to find the suitable generalization of \misf, as it was done for \textsc{Coloring} in~\cref{subsec:coloring}. We call this problem \mlisf. Now, a weight function on vertices is added to the input, and we seek a collection of mutually induced stars with maximum weight, the weight of a star being the sum of the weights of its leaves (that is, the weight of the root is omitted).

\defoptproblem{\mlisf}{Graph $G$, weights $w_V: V \rightarrow \mathbb{N}$, subset $Y \subseteq E(G)$}{
Collection $(A_i)_{i \in \left[ k \right]}$ of induced stars on $G$ with root $r_i$, $A_i = \{r_i, s_i^{1},\ldots,s_i^{L_i} \}$, made up of edges in $Y$ only, with no edge between $A_i$ and $A_j$, for any $i \neq j \in [k]$, maximizing
$$\sum_{i=1}^k w_V(A_i) = \sum_{i=1}^k \sum_{\ell=1}^{L_i} w(s_i^{\ell})$$
}


%

We prove that \mlisf follows the framework proposed in~\cref{lem:induction-approx-fast}. We begin with the design of a subexponential-time algorithm approximating a solution of \mlisf with a ratio function of twin-width.

\begin{lemma}\label{lem:star-forest-subexp}
  Assume every input of \mlisf is given with a~\mbox{$d'$-sequence} of the $n$-vertex $G$, and $d := c_{2d'+2} \cdot 2^{4c_{2d'+2}+4}$.
  \mlisf can be $O(d^2)$-approximated in time $2^{O_d(\sqrt n)}$ on these inputs.
\end{lemma}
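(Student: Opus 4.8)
The plan is to reproduce, for stars, the architecture of the subexponential-time constant-approximation for \msim: compute a balanced partition of $V(G)$, split a fixed optimal solution according to where its leaves lie with respect to that partition, handle each piece with a dedicated subroutine (each a brute-force search on an $O_d(\sqrt n)$-vertex graph together with proper colorings of bounded-degree graphs), and return the best. First, by~\cref{lem:seq-to-partial-seq}, compute in polynomial time a partition $\P = \{P_1, \dots, P_{\lfloor \sqrt n \rfloor}\}$ of $V(G)$ with $|P_i| \le s\sqrt n$ and $\mathcal R(G/\P)$ of maximum degree at most~$d$. Fix a heaviest feasible collection $\mathcal A^{\star}$ of mutually induced stars contained in~$Y$, and classify every leaf $s$ of a star with root~$r$ as \emph{local} (when $s$ and $r$ share a part), \emph{red-bridging} (when the part of $s$ is red-adjacent in $G/\P$ to the part of $r$), or \emph{black-bridging} (when it is black-adjacent). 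Writing $W_v, W_r, W_b$ for the total $w_V$-weight of the leaves of each kind, $w_V(\mathcal A^{\star}) = W_v + W_r + W_b$, so one of the three is at least $w_V(\mathcal A^{\star})/3$; the algorithm will produce solutions $N_v, N_r, N_b$ with $w_V(N_v) = \Omega(W_v/d)$, $w_V(N_r) = \Omega(W_r/d^2)$, and $w_V(N_b) = \Omega(W_b/d^2)$, giving an $O(d^2)$-approximation overall.

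The solution $N_v$ is built exactly as in~\cref{lem:mis-subexp-approx}: solve \mlisf exhaustively on each $G[P_i]$ (restricted to $Y$), obtaining an optimum $S_i$; weight the vertex $P_i$ of $\mathcal R(G/\P)$ by $w_V(S_i)$; properly $(d+1)$-color $\mathcal R(G/\P)$ and, in each colour class (which has no red edge and is thus a graph), compute a heaviest independent set $I_j$; keep the heaviest $\bigcup_{P_i \in I_j} S_i$. This is feasible since inside a colour class only black edges or non-edges occur and $I_j$ destroys the black ones, and it is a $(d+1)$-approximation of $W_v$ because the parts hosting a local leaf of $\mathcal A^{\star}$ form an independent set of $G/\P$: two black-adjacent parts cannot each host a local leaf, as those would be leaves of distinct stars and hence adjacent. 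The solution $N_r$ follows the ``$N_r$'' step of the \msim algorithm. Grouping the red-bridging leaves of a star by the red edge of $G/\P$ they cross splits them into substars, each confined to some $G[p(e)] = G[P_i \cup P_j]$ with edges only between $P_i$ and $P_j$ and lying in $Y$; solve \mlisf exhaustively there to get $S'_e$; distance-$2$-edge-color $\mathcal R(G/\P)$ with $z = 2(d-1)d+1$ colours; for each colour class $E_h$ form the graph on the incident parts, with the edges of $E_h$ recoloured black and weighted by $w_V(S'_e)$ while keeping the black edges of $G/\P$, compute a heaviest induced matching contained in the recoloured $E_h$, and take the matching union of the corresponding $S'_e$. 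The correctness argument is verbatim that of \msim: the distance-$2$ coloring rules out shared endpoints and red links between the $p(e)$'s, the induced-matching condition rules out black links, and $w_V(N_r) \ge W_r/z$.

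The genuine new work is $N_b$, which I expect to be the main obstacle. The lever is the complete bipartite connection across a black edge: if a star of $\mathcal A^{\star}$ has a black-bridging leaf in a part $P_b$ whose root $r$ is in a black-adjacent part $P_a$, then $r$ is the only vertex of that star in $P_a$, and no two stars of $\mathcal A^{\star}$ can use the same black edge of $G/\P$ (a root of one would be adjacent, through that black edge, to all of the corresponding leaf-part, hence to any leaf of the other there). It follows that, apart from the center-to-leaf edges, the root-parts and leaf-parts involved are pairwise non-black-adjacent, so the black-bridging part of $\mathcal A^{\star}$ projects onto an induced star forest of the black graph of $G/\P$. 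One precomputes, by brute force on the $O_d(\sqrt n)$-vertex graphs $G[P_b \cap N_Y(r)]$ (where $N_Y(r) := \{u : ru \in Y\}$), the weight $\beta(r, P_b)$ of a heaviest independent set there, for each vertex $r$ and each black-neighbour part $P_b$ of $r$'s part; this takes $2^{O_d(\sqrt n)}$. Then one selects a heaviest \emph{realisable} induced star forest of the black graph of $G/\P$; since that graph has only $\lfloor \sqrt n \rfloor$ vertices this can be done by exhaustive search over its vertex subsets, the weight of a candidate star with center $P_a$ and leaf-part set $L$ being $\max_{r \in P_a} \sum_{P_b \in L} \beta(r, P_b)$. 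The delicate points, where I expect most of the care to go, are that a single root must serve a whole (possibly large) star cluster and that red edges may run between distinct clusters; handling these cleanly—so that the per-part contributions $\beta(r,P_b)$ really add up and distinct clusters do not interfere—requires layering one or two further $(d+1)$-colorings of $\mathcal R(G/\P)$ on top of the exhaustive search over subsets of $V(G/\P)$, each such coloring costing a factor $d+1$, which keeps the loss at $O(d^2)$. Finally, every subroutine is a brute-force search on an $O_d(\sqrt n)$-vertex graph plus linear-time colorings of bounded-degree graphs, so the total running time is $2^{O_d(\sqrt n)}$, and the heaviest of $N_v, N_r, N_b$ is the claimed $O(d^2)$-approximation.
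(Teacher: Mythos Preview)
Your $N_v$ and $N_r$ routines are essentially identical to the paper's and are correct.

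Your $N_b$ routine is where you diverge from the paper, and where your argument is genuinely incomplete. You condition everything on the choice of an actual root vertex $r \in P_a$ via the quantities $\beta(r,P_b)$, and then admit that getting the per-part contributions $\beta(r,P_b)$ to add up coherently---both within one star cluster (red edges among leaf-parts) and across clusters---will require ``layering one or two further $(d{+}1)$-colorings'' that you do not specify. As stated this is a plan, not a proof: you have not shown how a vertex-coloring of $\mathcal R(G/\P)$ simultaneously keeps leaf-parts of a single star pairwise non-red-adjacent \emph{and} allows a single root $r$ to serve all of them \emph{and} decouples distinct clusters, all while only losing $O(d^2)$.

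The paper sidesteps this entirely with one observation you are missing: because $P_aP_b$ is a \emph{black} edge, every vertex of $P_a$ is adjacent to every vertex of $P_b$, so the root can be \emph{any} vertex of $P_a$ and there is no need to condition on $r$. Concretely, the paper precomputes only the root-independent quantity $w(P_i) := w_V(I(P_i))$ where $I(P_i)$ is a heaviest independent set of $G[P_i]$; it then solves \mlisf on the \emph{black} quotient graph $G' = (V(G/\P),E(G/\P))$ with vertex weights $w(P_i)$ (this is precisely why \mlisf is the right generalization: the recursive subcall is again \mlisf); finally it builds one graph $H^*$ whose vertices are the edges of the returned star forest, with two such edges adjacent when some red edge of $G/\P$ joins their endpoints, and $(2d{+}1)$-colors $H^*$. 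Each color class is then realized in $G$ by picking an arbitrary root vertex in each root-part and taking $I(P_b)$ in each leaf-part. This single edge-conflict coloring handles intra-cluster \emph{and} inter-cluster red conflicts in one shot, and the root-independence is exactly what makes the reconstruction trivial. Your root-dependent $\beta(r,P_b)$ is more careful about $Y$, but it is the source of the complication you could not close.
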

\begin{proof}
We compute in polynomial time a partition of $V(G)$, $\P=\{P_1, \ldots, P_{\lfloor \sqrt n \rfloor}\}$, of parts with size at most $d \sqrt n$ and such that $\mathcal R(G/\P)$ has maximum degree at most~$d$, by \cref{lem:seq-to-partial-seq}.

  As in~\cref{lem:mis-subexp-approx}, we $(d+1)$-color $\mathcal R(G/\P)$, which defines a coarsening $\{C_1, \ldots, C_{d+1}\}$ of~$\P$.
  Moreover, we distance-2-edge-color $\mathcal R(G/\P)$ with $z = 2(d-1)d+1$ colors.
  This partitions the edges of $\mathcal R(G/\P)$ into $\{E_1, \ldots, E_z\}$.
  For each red edge $e=P_iP_j \in R(G/\P)$, we denote by $p(e)$ the set $P_i \cup P_j$.

  Let $A = \bigcup_{i=1}^k A_i$ be the union of all stars present in an optimum solution of \mlisf in $G$. We have $A \subseteq Y$.
  Let $A_v, A_r, A_b$ partition~$A$, where $A_v$ contains the edges of~$A$ with both endpoints in a same~$P_i$, $A_r$ corresponds to edges of~$A$ between some $P_i$ and $P_j$ with $P_iP_j \in R(G/\P)$, and $A_b$, the edges of $A$ between some $P_i$ and $P_j$ with $P_iP_j \in E(G/\P)$. The set of edges $A_v$ (resp. $A_r$, $A_b$) still form a collection of mutually induced stars. At least one over the three solutions produced by the partition $A_v, A_r, A_b$ gives us a 3-approximation for this problem. 
  Our algorithm consists of computing three solutions for \mlisf of $G$, capturing a~positive fraction of $A_v, A_r, A_b$, respectively.
  
  \textbf{Computing a} $d+1$-\textbf{approx for} $A_v$. \textit{Construction}. For every integer $1 \leqslant i \leqslant \lceil \sqrt n \rceil$, we compute an optimum solution for \mlisf in $G[P_i]$ contained in $Y$, say $S_i$, in time $2^{O_d(\sqrt n)}$. This can be achieved with guesses of the vertices in $P_i$, as $\vert P_i \vert \le d\sqrt{n}$.
 

Then, we focus on each color $C_j$ of $\mathcal R(G/\P)$, for $j \in [d+1]$. There is no red edge in $H_j = (G/\P)\left[C_j\right]$. We compute a heaviest independent set $I_j$ in $H_j$ where the parts $P_i$ are weighted by the edge weight of $S_i$. Let $R_j$ be the union of all optimum solutions for \mlisf on all $P_i$ belonging to $I_j$. The solution returned is the maximum over all $R_j$s.

\textit{Approximation ratio}. Let $A_v^j$ be the subset of $A_v$ made up of edges belonging to parts of $C_j$. There is no red edge between two parts of $C_j$, therefore their neighborhood consists of either full adjacency or full non-adjacency. As a consequence, a maximum-weighted collection of stars in $C_j$ with edges inside parts intersects parts which are pairwise non-adjacent in $(G/\P)[C_j]$, otherwise the stars are not mutually induced. Consequently, this justifies that the set $R_j$ returned for each $C_j$ is a maximum-weighted collection of stars in $C_j$ made up of edges inside parts. In summary, the weight of each collection $R_j$ is greater than the weight of $A_v^j$. As $j \in [d+1]$, a heaviest collection among all $R_j$s is a $d+1$-approximation of $A_v$.
  
\textbf{Computing a} $O(d^2)$-\textbf{approx for} $A_r$. \textit{Construction}. For each $e=P_iP_j \in R(G/\P)$, we compute an optimal solution for \mlisf in $G[p(e)]=G[P_i \cup P_j]$ among those that are included in $Y$ and have only edges with one endpoint in $P_i$ and the other endpoint in $P_j$. Said differently, we determine a maximum-weighted collection of induced stars in $G[p(e)]$ over $Y$ with a root on one side (for example, $P_i$) and all leaves on the other side ($P_j$). This costs at most $2^{O_d(\sqrt n)}$ by trying out all vertex subsets, since $\vert P_i \cup P_i \vert \le 2d\sqrt n$. The set of vertices of the solution returned on $G[p(e)]$ is denoted by $B_e \subseteq p(e)$.

  For each $h \in [z]$, let $H'_h$ be the trigraph $(G/\P)[\{P_i~:~P_i~\text{is incident to an edge}~e \in E_h\}]$. The red edges of $H'_h$ form an induced matching on the red graph of $H'_h$ as they are at distance 2 in $G/\P$.
  We associate with any edge $e \in E_h$ the edge weight of $B_e$. Then, we turn the red edges of $H_h'$ in black: let $H''_h$ be the graph obtained. We solve \msim on $H''_h$ by restricting it to edges of $E_h$ (which plays the role of $Y$): this is achieved in $2^{O(\sqrt{n})}$ as $\vert V(H''_h) \vert \le \sqrt{n}$. Let $I''_h$ be a maximum-weighted induced matching obtained. For each $h \in [z]$, we obtain the union $R_h$ of all $B_e$, $e \in I''_h$: $R_h = \bigcup_{e \in I''_h} B_e$. We return an $R_h$ which maximizes the total edge weight, among all $h \in [z]$.
  
  \textit{Approximation ratio}. Let $A_r^h$ be the subset of $A_r$ made up of edges being part of red edges $E_h$ in $G/\P$, for $h \in [z]$. As the edges of $E_h$ form an induced matching in $\mathcal{R}(G/\P)$, the union of solutions of \mlisf over graphs $G[p(e)]$ with $e \in E_h$ can only be connected through black edges of $G/\P$. Furthermore, two collections of stars over $G[p(e)]$ and $G[p(f)]$ are necessarily not mutually induced if there is a black edge between an endpoint of $e$ and an endpoint of $f$. Consequently, $R_h$ gives a maximum-weighted collection of mutually induced stars over $E_h$ and its weight is at least the weight of $A_r^h$. The maximum-weighted collection over all $R_h$ gives a $z$-approximation, as $h \in [z]$.
  
  \textbf{Computing a} $2d+1$-\textbf{approx for} $A_b$. \textit{Construction}. For each part $P_i$, we solve \wis on $G[P_i]$ with weight function $w_V$. Let $I(P_i)$ be the independent set returned and $w(P_i)$ its weight. We focus now on graph $G' = (V(G/\P),E(G/\P))$, made up of the black edges of $G/\P$, and solve \mlisf on it with weights $w(P_i)$. As $\vert V(G') \vert \le \sqrt{n}$, this is achieved in $2^{O(\sqrt{n})}$. 
  
  Let $(B_h)_{h \in [k]}$ be the collection of stars returned, $B_h = \{R_h,S_h^{1},\ldots,S_h^{L_h}\}$ and $B \in E(G')$ be the set of edges belonging to this collection. 
  Based on the bounded maximum red degree of $G/\P$, we determine a $O(d)$-partition of the edges of $B$, in order to produce collections of mutually induced stars. Let $H^*$ be the graph where each edge $e$ in the collection $(B_h)_{h \in [k]}$ is represented with a vertex and two of them $e,f$ are adjacent if and only if there is a red edge in $G/\P$ connecting an endpoint of $e$ with an endpoint of $f$. This graph has degree at most $2d$, so it can be $2d+1$-colored: let $T_1,\ldots,T_{2d+1}$ be the corresponding color classes. Any set of edges $T_j$ gives us a collection of mutually induced stars on trigraph $G/\P$, in the sense that there is neither a black nor a red edge between two stars.
  
  We fix some color class: say $T_1$ w.l.o.g. Let $(B_h^*)$ be the collection of stars produced by $T_1$, where $B_h^* = \{R_h^*,S_h^{1,*},\ldots, S_h^{L_h^*,*}\}$. For the root $R_h^* = P_i$ of each star $B_h^*$, we select an arbitrary vertex $r_h \in P_i$. Let $(B_h^**)_{h \in [k]}$ be the following collection of stars (which are mutually induced) on $G$: $B_h^{**} = \{r_h\} \cup \bigcup_{\ell = 1}^{L_h^*} I(S_h^{\ell,*})$. In brief, the collection $(B_h^{**})_{h \in [k]}$ is made up of an arbitrary vertex of each root of stars $B_h^*$ and a maximum-weighted independent set of each leaf of $B_h^*$. Remember that we computed this collection of stars for $T_1$: we return a maximum-weighted collection $(B_h^{**})_{h \in [k]}$ among all the ones determined for $T_j$, $j \in [2d+1]$.

  \textit{Approximation ratio}. Any collection $B_b$ with stars belonging only to black edges of $G/\P$ reveals a collection of stars on the quotient graph. Concretely, two black edges of $G/\P$ containing each a branch of $B_b$ must be either non-adjacent or form an induced 3-vertex path on $G' = (V(G/\P),E(G/\P))$. Conversely, considering a collection $B^*$ of mutually induced stars of $G'$ and, for each $e \in B^*$, a collection $B_e^*$ of  mutually induced stars on $G[p(e)]$ produces a global collection of stars of $G$: then, we can partition its edges into $2d+1$ parts (as with $T_1,\ldots,T_{2d+1}$) such that each part contains mutually induced stars. As the collection $B$ computed above provides us with a heaviest collection of $G'$, a maximum-weighted $B_h^{**}$ over all $T_j$ is a $2d+1$-approximation for $B$, whose weight is at least the weight of $A_b$.
  
  \textbf{Conclusion of the proof}. We finally output a heaviest collection of mutually induced stars among the three approximating respectively $A_v$, $A_r$, and $A_b$. The overall running time is in $2^{O_d(\sqrt{n})}$. An upper bound for the approximation ratio of this algorithm is $3z = O(d^2)$.
\end{proof}

As for the other problems treated in this article, we apply to \mlisf the time-approximation trade-off proposed in~\cref{lem:induction-approx-fast}.

\begin{theorem}\label{lem:forest-iteration}
  \mlisf on an $n$-vertex graph $G$, weight function $w_V$, with prescribed set $Y \subseteq E(G)$, and given with a~$d'$-sequence, satisfies the assumptions of~\cref{lem:induction-approx-fast}.
  In particular, this problem admits
  \begin{compactitem}
  \item a $(d+1)^{2^q-1}$-approximation in time $2^{O_{d,q}(n^{2^{-q}})}$, for every integer $q \geqslant 0$,
  \item an $n^\varepsilon$-approximation in polynomial-time $O_{d,\varepsilon}(1) \log^{O_d(1)}n \cdot n^{O(1)}$, for any $\varepsilon > 0$, and
  \item a $\log n$-approximation in time $2^{O_d(n^{\frac{1}{\log \log n}})}$, 
  \end{compactitem}
  with $d := c_{2d'+2} \cdot 2^{4c_{2d'+2}+4}$.
\end{theorem}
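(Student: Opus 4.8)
The plan is to follow the proof of~\cref{lem:msim-iteration} almost verbatim, using~\cref{lem:star-forest-subexp} as the algorithm witnessing the second item of~\cref{lem:induction-approx-fast} once its exhaustive subroutines are replaced by recursive calls. First I would dispatch the first item: \mlisf is solved exactly in time $2^{O(n)}$ by enumerating all vertex subsets $S$, testing in polynomial time whether $G[S]$ is a disjoint union of stars all of whose edges lie in $Y$, and, if so, returning the best total leaf weight (choosing, for each $K_2$-component, the heavier endpoint as a leaf). I would then fix the constants as in~\cref{lem:msim-iteration}: $f(d)=2d$ (the largest recursive instance, a~graph $G[p(e)]$, has at most $2d\sqrt n$ vertices), $c_2$ large enough to absorb turning the $d'$-sequence into a~conform neatly divided matrix of $\mathcal M_{n,2d'+2}$ together with computing the proper $(d+1)$-colouring of $\mathcal R(G/\P)$, its distance-$2$-edge-colouring, and the colouring of the auxiliary bounded-degree graph $H^*$; a~value $\frac12<c_1<1$ bounding the (polynomially many) recursive calls; and $2<c_3<3$.

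Next I would go through the three subroutines of~\cref{lem:star-forest-subexp} and record, in each, which recursive calls occur, on which graphs, and of which size. Computing the part approximating $A_v$ makes $\lfloor\sqrt n\rfloor$ recursive calls to \mlisf on the induced subgraphs $G[P_i]$ (each on at most $d\sqrt n$ vertices) and $d+1$ calls to \wis on the graphs $(G/\P)[C_j]$ (which carry no red edge, hence are induced subgraphs of $G$ on at most $\sqrt n$ vertices); \wis satisfies the conclusion of the lemma by~\cref{lem:mis-iteration}. Computing the part approximating $A_r$ makes at most $\tfrac{d\sqrt n}{2}$ recursive calls to \mlisf on the induced subgraphs $G[p(e)]$ (on at most $2d\sqrt n$ vertices, with the prescribed set $Y$ restricted to the edges crossing the two parts of $e$), followed by $z=2(d-1)d+1$ calls to \msim on the graphs $H''_h$, which are \fullregus of the induced subtrigraphs $(G/\P)[\{P_i : P_i \text{ is incident to some } e\in E_h\}]$ on at most $\sqrt n$ vertices; \msim satisfies the conclusion by~\cref{lem:msim-iteration}. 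Computing the part approximating $A_b$ makes $\lfloor\sqrt n\rfloor$ calls to \wis on the $G[P_i]$ and one call to \mlisf on $G'=(V(G/\P),E(G/\P))$, a~\fullregu of $G/\P$ on $\lfloor\sqrt n\rfloor$ vertices. In total there are $O_d(\sqrt n)\le n^{c_1}$ recursive calls to \mlisf, \wis, or \msim, each on an induced subgraph of $G$ or a~\fullregu of an induced subtrigraph of $G/\P$, and each on $O_d(\sqrt n)$ vertices; by~\cref{lem:seq-to-partial-seq-induction} a~conform neatly divided matrix in $\mathcal M_{|V(\cdot)|,2d'+2}$ for each such instance is computed in polynomial time, so the hypotheses of~\cref{lem:induction-approx-fast} are met and the induction goes through.

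Then I would redo the approximation bookkeeping of~\cref{lem:star-forest-subexp} under the assumption that every recursive call returns an $r$-approximation. As in~\cref{lem:mis-iteration}, the $A_v$-subroutine loses a factor $d+1$ (restricting to one colour class $C_j$), an $r$ from the \mlisf calls on the $G[P_i]$, and an $r$ from the \wis call on $(G/\P)[C_j]$, for a~$(d+1)r^2$-approximation of $A_v$; the $A_r$-subroutine loses a factor $z$ (the distance-$2$-edge-colouring), an $r$ from the \mlisf calls on $G[p(e)]$, and an $r$ from the \msim call on $H''_h$, for a~$zr^2$-approximation of $A_r$; the $A_b$-subroutine loses a factor $2d+1$ (the colouring of $H^*$), an $r$ from the \wis calls on the $G[P_i]$, and an $r$ from the \mlisf call on $G'$, for a~$(2d+1)r^2$-approximation of $A_b$. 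Since $A=A_v\cup A_r\cup A_b$, taking a heaviest of the three outputs gives a $3\max(d+1,z,2d+1)r^2=3(2(d-1)d+1)r^2$-approximation, which is a~$d^{c_3}r^2$-approximation for the chosen $c_3$; the three announced regimes then follow from~\cref{lem:induction-approx-fast} exactly as in~\cref{lem:mis-iteration,lem:msim-iteration}.

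I do not expect a new conceptual difficulty here beyond~\cref{lem:msim-iteration}; the delicate points are purely bookkeeping. The main thing to get right is that each auxiliary structure ($H_j$, $H''_h$, $G'$, and the colouring graph $H^*$) is handled correctly: the first three must be recognised as induced subgraphs of $G$ or as \fullregus of induced subtrigraphs of $G/\P$ so that~\cref{lem:seq-to-partial-seq-induction} and~\cref{lem:induction-approx-fast} apply, while $H^*$ is only used for a~proper colouring of a bounded-degree graph and needs no such structure. One must also check that the recursive instances fed to \mlisf, \wis, and \msim are honest inputs of those problems, in particular that the prescribed edge sets ($Y$ restricted to a~bipartition between two parts, or $E_h$ in the case of \msim) are passed correctly, and that the approximation losses compose exactly as claimed, with no stray factor of $r$ slipping in (in particular, note that the $A_b$-subroutine genuinely incurs $r^2$, because both the inner \wis call on each leaf-part and the outer \mlisf call on $G'$ are approximate, in contrast with the $(2d+1)r$-bound for $M_b$ in~\cref{lem:msim-iteration}).
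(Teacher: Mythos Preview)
Your proposal is correct and matches the paper's proof essentially step for step: the same enumeration of recursive calls for each of $A_v$, $A_r$, $A_b$ (to \mlisf, \swis, and \msim respectively), the same constants $f(d)=2d$ and $2<c_3<3$, and the same approximation bookkeeping yielding $(d+1)r^2$, $(2(d-1)d+1)r^2$, $(2d+1)r^2$ and hence $3(2(d-1)d+1)r^2$ overall. Your remark that the $A_b$-subroutine genuinely incurs $r^2$ (unlike the $(2d+1)r$ bound for $M_b$ in~\cref{lem:msim-iteration}) is exactly the one point where the analysis differs from the \msim case, and you have it right.
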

\begin{proof}
The exhaustive algorithm (trying out all vertex subsets and checking whether they induce a collection of mutually induced stars in $Y$) solves \mlisf in time $2^{O(n)}$.
  Thus we show \mlisf satisfies the second item of~\cref{lem:induction-approx-fast}.
  We set $c_2 \geqslant 1$ as the required exponent to turn a $d'$-sequence into a~neatly divided matrix of $\mathcal M_{n,2d'+2}$ conform to~$G$, and compute the various needed colorings, the appropriate $\frac{1}{2} < c_1 < 1$, and $2 < c_3 < 3$, and $f(d)=2d \geqslant 1$.

  \textbf{Approximating $A_v$}. The algorithm makes $\lfloor \sqrt n \rfloor$ recursive calls to solve \mlisf on parts $P_i$. Furthermore, $d + 1$ calls to \swis are needed on induced subgraphs of $G/\P$.
  All of these induced subgraphs are on at most $d \sqrt n$ vertices.
  
  \textbf{Approximating $A_r$}. The algorithm makes at most $\frac{\sqrt n d}{2}$ recursive calls (one call per red edge of $G/\P$) on induced subgraphs of $G$ with at most $2d \sqrt n$ vertices, followed by at most $2(d-1)d+1$ calls of \msim on~\fullregus of induced subtrigraphs of $G/\P$ with at most $\sqrt n$ vertices.
  
  \textbf{Approximating $A_b$}. The algorithm makes $\lfloor \sqrt n \rfloor$ calls to solve \swis on parts $P_i$ and one recursive call on a~\fullregu of $G/\P$ on $\lfloor \sqrt n \rfloor$ vertices.

  In summary, we make $O_d(\sqrt n)$ recursive calls or calls to problems \swis and \msim (which already satisfy~\cref{lem:induction-approx-fast} with better constants) on induced subgraphs of $G$ or~\fullregus of (the whole) $G/\P$, each on $O_d(\sqrt n)$ vertices. 
  Hence, by \cref{lem:seq-to-partial-seq-induction}, the induction applies.

  Getting $r$-approximations on every subcall allows us to output a global $3(2(d-1)d+1)r^2$-approximation for \mlisf:
  \begin{compactitem}
  \item collection $A_v$ is approximated with ratio $(d+1)r^2$
  \item collection $A_r$ is approximated with ratio $(2(d-1)d+1)r^2$
  \item collection $A_b$ is approximated with ratio $(2d+1)r^2$.
  \end{compactitem}
The extra factor 3 comes from the fact that we output the heaviest of these three solutions.
\end{proof}

\misf is a particular case of \mlisf with $w_V(u) = 1$ for every vertex $u \in V(G)$. 
 Furthermore, a solution of \misf is a 3-approximation of a solution of \mif. These observations together with~\cref{lem:forest-iteration} allow us to state the following result.
 
\begin{corollary}
\misf and \mif on an $n$-vertex graph $G$, with prescribed set $Y \subseteq E(G)$, and given with a~$d'$-sequence, admit
\begin{compactitem}
  \item an $n^\varepsilon$-approximation in polynomial-time $O_{d,\varepsilon}(1) \log^{O_d(1)}n \cdot n^{O(1)}$, for any $\varepsilon > 0$, and
  \item a $\log n$-approximation in time $2^{O_d(n^{\frac{1}{\log \log n}})}$, 
\end{compactitem}
with $d := c_{2d'+2} \cdot 2^{4c_{2d'+2}+4}$.
\label{co:mif-approx}
\end{corollary}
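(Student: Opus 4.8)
The plan is to get \cref{co:mif-approx} essentially for free from \cref{lem:forest-iteration} together with the two reductions already observed at the start of this subsection. The first step is to note that \misf is exactly \mlisf with the constant weight function $w_V \equiv 1$, so \cref{lem:forest-iteration} transfers verbatim to \misf; in particular it gives, for every $\varepsilon' > 0$, a polynomial-time $n^{\varepsilon'}$-approximation running in time $O_{d,\varepsilon'}(1)\log^{O_d(1)}n\cdot n^{O(1)}$, and, for any target ratio, a $\log n$-type approximation running in time $2^{O_d(n^{1/\log\log n})}$. I would invoke \cref{lem:polytime-approx,lem:log-approx} (through \cref{lem:forest-iteration}) and simply observe that their proofs let one aim for $n^{\varepsilon/2}$ instead of $n^\varepsilon$, and for $(\log n)/3$ instead of $\log n$, by enlarging the constant hidden in the choice of the recursion depth $q$ — this costs nothing in the stated running times.

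The second step handles \mif by a black-box call to the \misf algorithm on the same instance $(G,Y)$, outputting the returned collection of mutually induced stars as an induced forest; this is feasible because a disjoint union of stars is a forest and all edges used lie in $Y$. For the ratio I would prove the two inequalities $\beta_{\mathrm{SF}} \le \beta_{\mathrm{F}} \le 3\beta_{\mathrm{SF}}$, where $\beta_{\mathrm{SF}},\beta_{\mathrm{F}}$ denote the respective optima on $(G,Y)$. The left inequality is immediate since every feasible \misf solution is a feasible \mif solution. For the right one, I would take an optimal induced forest $F^\star$ with $E(F^\star) \subseteq Y$ and use the observation recalled earlier that $E(F^\star)$ splits into three collections of induced stars; the key point to check is that, because $F^\star$ is an \emph{induced} subgraph of $G$, the only $G$-edges among the vertices of $F^\star$ are edges of $F^\star$, so each of these three collections is genuinely a feasible \misf solution on $(G,Y)$ (its stars are vertex-disjoint, pairwise non-adjacent in $G$, and use only edges of $Y$). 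Averaging then yields $\beta_{\mathrm{F}} \le 3\beta_{\mathrm{SF}}$, so an $r$-approximation for \misf becomes a $3r$-approximation for \mif.

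Finally I would absorb the factor $3$: with $r = n^{\varepsilon/2}$ one gets $3n^{\varepsilon/2} \le n^\varepsilon$ for all $n$ past a threshold depending only on $\varepsilon$ (smaller instances being solved exactly in constant time), and with $r = (\log n)/3$ one gets exactly a $\log n$-approximation; the time bounds are inherited unchanged. I do not expect a genuine obstacle here — the corollary really is a corollary. The only place that needs a line of care is the preservation of both the $Y$-constraint and the induced-ness of $G$ when decomposing $F^\star$ into three star collections, which is exactly why the argument leans on $F^\star$ being induced; the rest is bookkeeping of constants.
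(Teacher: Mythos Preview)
Your proposal is correct and follows essentially the same approach as the paper: the paper's own justification for the corollary is just the two observations that \misf is the special case $w_V\equiv 1$ of \mlisf (so \cref{lem:forest-iteration} applies directly), and that any \misf solution is a $3$-approximation of \mif via the three-star-forest decomposition of an induced forest. Your version is more carefully fleshed out---in particular the point that $F^\star$ being \emph{induced} in $G$ is what makes each star collection a genuine \misf solution, and the bookkeeping to absorb the constant~$3$---but the underlying argument is identical.
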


\section{Limits}\label{sec:limits}

We now discuss the limits of our framework.
We give some examples of problems that are unlikely to have an $n^\varepsilon$-approximation algorithm on graphs of bounded twin-width.
The first such problem is \mids, where one seeks a~minimum-cardinality set which is both an independent set and a~dominating set.
In general $n$-vertex graphs, this problem cannot be $n^{1-\varepsilon}$-approximated in polynomial time unless P$=$NP~\cite{Halldorsson93}, and cannot be $r$-approximated in time $2^{o(n/r)}$ for any $r=r(n)$, unless the ETH fails~\cite{BonnetLP18}.

We show that \mids has the same polytime inapproximability in graphs of bounded twin-width.

\begin{theorem}\label{thm:inapprox-mids}
For every $\varepsilon > 0$, \mids cannot be $n^{1-\varepsilon}$-approximated in polynomial time on $n$-vertex graphs of twin-width at most~9 given with a 9-sequence, unless P$=$NP.   
\end{theorem}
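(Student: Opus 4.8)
The plan is to follow \cite{Halldorsson93} essentially verbatim, feed it a \emph{planar} instance, and carry a $9$-sequence along the way so that the reduction outputs the graph \emph{together with} its sequence. First I would recall the shape of the reduction of \cite{Halldorsson93}: from a \textsc{3-SAT} formula $\varphi$ it builds, in polynomial time, a graph $G_\varphi$ such that the minimum size of a maximal independent set of $G_\varphi$ is at most some threshold $t$ when $\varphi$ is satisfiable, and is more than $|V(G_\varphi)|^{1-\varepsilon}\cdot t$ when $\varphi$ is unsatisfiable; this is exactly what gives the $n^{1-\varepsilon}$-inapproximability of \mids in general graphs unless $\mathrm{P}=\mathrm{NP}$. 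I would start this construction from a planar variant of \textsc{3-SAT} instead (the variable--clause incidence graph is planar, say with the classical cycle through the variable vertices), which is still NP-hard. Crucially, the gap analysis of \cite{Halldorsson93} never uses planarity, so it transfers unchanged; only the combinatorial structure of $G_\varphi$ is affected.

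Second, I would show that this $G_\varphi$ has twin-width at most $9$ and produce a $9$-sequence in polynomial time, assembled from two kinds of contractions. (a) The amplification part of the reduction --- the one turning a modest gap into an $n^{1-\varepsilon}$ one --- is a substitution/lexicographic blow-up: each vertex of a base graph is replaced by a copy of a smaller instance, or by a large independent set, or by a large clique. Contracting each such copy back to a single vertex (recursively using the base instance's own $9$-sequence for a copy) keeps the maximum red degree at most $9$: while a copy $C$ is being contracted, every vertex outside $C$ is adjacent to all of $C$ or to none of it, so no red edge ever leaves $C$, and inside $C$ the red degree is controlled by the recursively chosen sequence --- and is $0$ when $C$ is an independent set or a clique. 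Once all copies are contracted one is left with the base graph, on which one iterates. (b) After all blow-ups are undone, what remains is the ``skeleton'': variable and clause gadgets wired along the incidence graph of $\varphi$. Choosing the planar variant of \textsc{3-SAT} and inserting constant-size planar crossing gadgets where the incidence graph forces crossings, this skeleton is planar, or at worst contracts --- by collapsing each constant-size gadget --- to a planar trigraph of maximum red degree at most $9$. By \cite{planar-tww} a planar graph has a $9$-sequence computable in linear time, which we append.

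Third, I would combine: $G_\varphi$ admits a polynomial-time-computable $9$-sequence and its minimum maximal independent set has the claimed gap, so an $n^{1-\varepsilon}$-approximation for \mids on $n$-vertex graphs given with a $9$-sequence would decide the planar variant of \textsc{3-SAT} in polynomial time, whence $\mathrm{P}=\mathrm{NP}$. (A minor point to record: the blow-up used in \cite{Halldorsson93} is indeed of lexicographic type, which is what makes the red-degree bookkeeping in (a) legitimate; and the $|V(G_\varphi)|$ appearing in the gap is the final, amplified vertex count.)

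The hard part will be step (b): one must open up the gadgets of \cite{Halldorsson93} and verify that, after collapsing each constant-size gadget to a single vertex, the resulting quotient trigraph is planar with maximum red degree at most $9$ --- equivalently, that every gadget either has a bounded interface with the rest of the graph or is itself an independent set or a clique (hence a module, and ``free'' to contract). If some gadget turns out to have an unbounded interface to the rest of the graph, one falls back to the weaker conclusion ``bounded twin-width given with an $O(1)$-sequence'', which already suffices to rule out an $n^{1-\varepsilon}$-approximation in that setting, at the cost of the explicit constant $9$ in the statement. Everything else --- the gap computation, the planar crossing gadgets, the recursion on the blow-up --- is routine bookkeeping once the skeleton is understood.
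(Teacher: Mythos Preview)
Your high-level plan --- run the Halld\'{o}rsson reduction from a planar \textsc{3-SAT} variant, then exhibit a $9$-sequence --- is exactly the paper's. But you have misremembered the shape of the reduction in \cite{Halldorsson93}, and this sends your step~(b) chasing complications that do not exist.

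The Halld\'{o}rsson reduction is \emph{not} a recursive lexicographic blow-up. It is a one-shot construction: a triangle $d_i,t_i,f_i$ per variable $x_i$, an independent set $I_j$ of size $r$ per clause $C_j$, with $t_i$ (resp.\ $f_i$) made complete to $I_j$ whenever $x_i$ occurs positively (resp.\ negatively) in $C_j$. The $n^{1-\varepsilon}$ gap comes from choosing $r$ large (satisfiable $\Rightarrow$ independent dominating set of size $N$; unsatisfiable $\Rightarrow$ size $\geq r$; set $r=N^{(2-\varepsilon)/\varepsilon}$). There is no ``copy of a smaller instance'' being substituted, so the recursive contraction bookkeeping in your step~(a) is aimed at a structure that is not there.

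Once you have the right picture, the twin-width argument is short and none of your anticipated difficulties arise. Each $I_j$ is a module (its vertices are pairwise twins), so contracting it to a point creates no red edge at all. Contracting each triangle $d_i,t_i,f_i$ to a point gives a vertex whose red degree is the number of clauses containing the variable $x_i$; to keep this bounded one must start from \textsc{Planar 3-SAT} \emph{with each literal occurring at most twice} (still NP-hard), a constraint you did not record. After these contractions the remaining trigraph is the planar variable--clause incidence graph with maximum total degree at most~$4$, and one finishes by the known fact that planar trigraphs of maximum degree at most~$9$ admit $9$-sequences. No crossing gadgets are needed (the incidence graph is already planar), there is no ``unbounded interface'' to worry about, and the constant~$9$ falls out directly rather than as a fallback.
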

\begin{proof}
  We perform the classic reduction of Halldórsson from~\textsc{SAT}~\cite{Halldorsson93}, but from \textsc{Planar 3-SAT} where each literal has at most two occurrences, which remains NP-complete~\cite{Kozma12}.
  More precisely we add a triangle $d_i, t_i, f_i$ for each variable $x_i$ (with $i \in [N]$), and an independent set $I_j$ of size $r$ for each 3-clause $C_j$ (with $j \in [M]$).
  We link $t_i$ to all the vertices of $I_j$ whenever $x_i$ appears positively in $C_j$, and we link $f_i$ to all the vertices of $I_j$ whenever $x_i$ appears negatively in $C_j$.
  This defines a graph $G$ with $n = 3N+rM$ vertices.
  
  It can be observed that if the \textsc{Planar 3-SAT} instance is satisfiable, then there is an independent dominating set of size $N$, whereas if the formula is unsatisfiable then any independent dominating set has size at least $r$.  
  Setting $r := N^{\frac{2-\varepsilon}{\varepsilon}}$, the gap between positive and negative instances is $\Theta_\varepsilon(1) n^{1-\varepsilon}$, while preserving the fact that the reduction is polynomial.

  Let us now argue that $G$ has twin-width at most~9, and that a 9-sequence of it can be computed in polynomial time.
  We can first contract each $I_j$ into a single vertex without creating a red edge.
  Next we can contract every triangle $d_i, t_i, f_i$ into a single vertex of red degree at most~4.
  At this point, the current trigraph is a planar graph of maximum degree at most~4.
  It was observed in~\cite{twin-width6} that planar trigraphs with maximum (total) degree at most~9 have twin-width at most~9.
  This is because any planar graph has a pair of vertices on the same face with at most~9 neighbors (outside of themselves) combined~\cite{Kotzig55}.
  Hence we get a 9-sequence for $G$ that can be computed in polynomial time.
  Incidentally the twin-width of planar graphs (that is, planar trigraphs without red edge) but no restriction on the maximum degree is also at most~9~\cite{planar-tww}.
\end{proof}

Another very inapproximable is \ipath, which also does not admit a~polytime $n^{1-\varepsilon}$-approximation algorithm unless P$=$NP~\cite{Lund93}, and cannot be $r$-approximated in time $2^{o(n/r)}$ for any $r = o(n)$, unless the ETH fails~\cite{BonnetLP18}.
The non-induced version, the \lpath problem, has a notoriously big gap between the best known approximation algorithm whose factor is $n/\exp(\Omega(\sqrt{\log n}))$~\cite{Gabow08}, and the sharpest conditional lower bound which states that, for any $\varepsilon>0$, a~$2^{\log^{1-\varepsilon}n}$-approximation would imply that NP $\subseteq$ QP~\cite{Karger97}.

Despite being an open question for decades the existence or conditional impossibility of an approximation algorithm for \lpath with approximation factor, say, $\sqrt n$ has not been settled.
Nor do we know whether an $n^\varepsilon$-approximation for any $\varepsilon > 0$ is possible.
We now show that using our framework to obtain an $n^\varepsilon$-approximation for \ipath of \lpath in graphs of bounded twin-width is unlikely to work, in the sense that it would immediately yield such an approximation factor for \lpath in general graphs.

\begin{theorem}\label{thm:inapprox-path}
   For any $r=\omega(1)$, an $r$-approximation for \ipath or \lpath on graphs of twin-width at most~4 given with a 4-sequence would imply a~$(1+o(1))r$-approximation for \lpath in general graphs.
\end{theorem}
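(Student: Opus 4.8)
The plan is to reduce, in polynomial time, the approximation of \lpath on an arbitrary $N$-vertex graph $G$ (we may assume $G$ has an edge, otherwise $\mathrm{opt}(G)=0$ and the empty path is optimal) to one call of the assumed oracle on the $\ell$-subdivision $G'$ of $G$, where $\ell:=\lceil 2\log N\rceil$ and $G'$ is obtained by replacing each edge $uv$ of $G$ by a path on $\ell$ fresh internal vertices. By~\cite{Berge21}, $G'$ has twin-width at most~$4$ and a $4$-sequence of it can be computed in polynomial time, and $n:=|V(G')|=N+\ell\cdot|E(G)|$ is polynomial in $N$. Call the vertices of $G$ the \emph{branch vertices} of $G'$, and write $\mathrm{opt}(G)$ for the number of edges of a longest path of $G$.

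Two elementary facts about $G'$ will do the work. First, subdividing a longest path of $G$ produces an \emph{induced} path of $G'$ with $(\ell+1)\,\mathrm{opt}(G)$ edges, so the longest path and the longest induced path of $G'$ both have at least $(\ell+1)\,\mathrm{opt}(G)$ edges; hence whichever of \lpath or \ipath the oracle solves, it returns a path $P'$ of $G'$ with $|E(P')|\ge(\ell+1)\,\mathrm{opt}(G)/r(n)$. Second, if $P'$ is \emph{any} path of $G'$ containing $q$ branch vertices, then those branch vertices, in the order they occur on $P'$, form a path of $G$ with exactly $q-1$ edges (consecutive branch vertices of $P'$ are joined along a whole subdivided edge, hence are adjacent in $G$), while $P'$ has at most $(q-1)(\ell+1)+2\ell$ edges (one full subdivided edge between consecutive branch vertices, plus an end-segment of at most $\ell$ edges on each side). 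Consequently the extracted $G$-path has at least $\frac{|E(P')|-2\ell}{\ell+1}\ge\frac{|E(P')|}{\ell+1}-2$ edges.

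The algorithm I have in mind then runs, all in time polynomial in $N$: (i) iterated color coding over path lengths up to $\lfloor\log N\rfloor$, returning a longest path of $G$ among those with at most $\lfloor\log N\rfloor$ edges; (ii) the construction of $G'$, a $4$-sequence of $G'$, the oracle call on $G'$ yielding $P'$, and the extraction of the branch-vertex $G$-path $P$ from $P'$; and it outputs whichever of the two is longer. Writing $k:=\mathrm{opt}(G)$: if $k\le r(n)\lfloor\log N\rfloor$, step~(i) returns a path with $\min(k,\lfloor\log N\rfloor)\ge k/r(n)$ edges, for ratio at most $r(n)$; if $k>r(n)\lfloor\log N\rfloor$, step~(ii) gives $|E(P)|\ge k/r(n)-2$ with $k/r(n)>\lfloor\log N\rfloor$, for ratio at most $\frac{r(n)}{1-2/\lfloor\log N\rfloor}$. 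Since $n=N^{O(1)}$ we have $\lfloor\log N\rfloor=\Theta(\log n)=\omega(1)$, so in both cases the ratio is $(1+o(1))\,r(n)$, as claimed.

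The delicate point---and the only place the hypothesis $r=\omega(1)$ and the slack $(1+o(1))$ are really used---is the regime where $\mathrm{opt}(G)$ is of the same order as $r(n)$. The subdivision loses an \emph{additive} $2\ell$ at the two free ends of $P'$, i.e.\ a $\Theta(r(n)/\mathrm{opt}(G))$ fraction of the signal, which is negligible only once $\mathrm{opt}(G)=\omega(r(n))$; the color-coding step is precisely what covers the remaining window $\lfloor\log N\rfloor<\mathrm{opt}(G)\le r(n)\lfloor\log N\rfloor$ (where it already yields ratio $\le r(n)$ outright, in fact better). The remaining checks will be routine: the degenerate cases of the branch-vertex extraction (a $P'$ with $0$ or $1$ branch vertex gives the empty path, consistent with $|E(P')|\le 2\ell$ there), that subdividing does not shorten things below the stated bounds, and that color coding up to logarithmic length, building $G'$, and computing its $4$-sequence are all polynomial in $N$.
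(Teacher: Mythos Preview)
Your proof is correct and rests on the same core reduction as the paper: subdivide every edge $\Theta(\log N)$ times so that the resulting graph has twin-width at most~4 with a polynomial-time computable 4-sequence (via~\cite{Berge21}), then translate path lengths back and forth between $G$ and $G'$. The paper's write-up differs in two respects. First, it assumes minimum degree at least~2 up front (arguing this is without loss of generality for approximating \lpath); this buys the sharper two-way equivalence ``$G$ has a path of length $\ell$ iff $G'$ has a path of length $(\ell+2)s-2$'' because a longest $G$-path can always be extended in $G'$ by almost a full subdivided edge at each end. Second, and more importantly, the paper simply asserts the $(1+o(1))$ conclusion from this equivalence, leaving implicit how the additive loss of roughly $2$ in the back-translation is absorbed when $\mathrm{opt}(G)$ is only of order $r$. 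Your color-coding step up to length $\lfloor\log N\rfloor$, together with the explicit case split at $\mathrm{opt}(G)\gtrless r\lfloor\log N\rfloor$, is precisely what makes that absorption rigorous, and it is also where you actually spend the hypothesis $r=\omega(1)$. So the two proofs share the reduction but diverge in how the slack is accounted for: the paper leans on the minimum-degree normalization and a terse ``hence'', while your version is self-contained and spells out the delicate regime.
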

\begin{proof}
  It was shown in~\cite{Berge21} that any graph obtained by subdividing every edge of an $n$-vertex graph at least $2 \log n$ has twin-width at most~4.
  Besides, a 4-sequence can then be computed in polynomial time.

  Let $G$ be any graph with minimum degree at least~2 (note that this restriction does not make \lpath easier to approximate), and $G'$ be obtained from $G$ by subdividing each of its edges $2 \lceil \log n \rceil$ times, and let $s := 2 \lceil \log n \rceil + 1$.
  Let us observe that $G$ has a path of length $\ell$ if and only if $G'$ has a path of length $(\ell+2)s-2$ if and only if $G'$ has an induced path of length $(\ell+2)s-4$.
  Hence a polytime $r$-approximation for \ipath or \lpath in graphs of bounded twin-width given a 4-sequence would translate into a~$(1+o(1))r$-approximation for \lpath in general graphs.
\end{proof}

We can use~\cref{thm:inapprox-path} to get a similar weak obstruction to an $n^\varepsilon$-approximation for \mihp{\mathcal H} in graphs of bounded twin-width, for some infinite family of connected graphs $\mathcal H$.
Recall that by~\cref{lem:induced-matching-subexp} such an approximation algorithm does exist when $\mathcal H$ is a \emph{finite} collection of connected graphs.

Setting $\mathcal H$ to be the set of all paths does not serve that purpose, since one can then use the approximation algorithm for \mim.
Nevertheless this almost works.
We just need to \emph{decorate} the endpoints of the paths.
For every positive integer $n$, let $D_n$ be the \emph{decorated path} of length $n$, obtained from the $n$-vertex path $P_n$ by adding for each endpoint $u$ two adjacent vertices $u', u''$ both adjacent to $u$.
Informally, $D_n$ is a path terminated by a triangle at each end.

\begin{theorem}\label{thm:inapprox-mihp-dec-paths}
 Let $\mathcal H := \{D_n~:~n \in \mathbb N^+\}$ be the family of all decorated paths.
 If for every $\varepsilon >0$, \mihp{\mathcal H} admits an $n^\varepsilon$ on $n$-vertex graphs of bounded twin-width given with a 4-sequence, then so does \lpath on general graphs. 
\end{theorem}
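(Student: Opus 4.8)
The plan is a polynomial Turing reduction from \lpath on a general $N$-vertex graph $G$ to $\binom{N}{2}$ instances of \mihp{\mathcal H} on graphs of twin-width at most~$4$, one for each pair of distinct vertices. Fix $s\ne t$ in $G$. Form $G_{s,t}$ from $G$ by adding a pendant vertex $\hat s$ adjacent to $s$, a pendant vertex $\hat t$ adjacent to $t$, and a triangle decorating each of $\hat s$ and $\hat t$ (two new mutually adjacent vertices, each adjacent to $\hat s$, resp.\ $\hat t$). Let $G'_{s,t}$ be obtained by subdividing $\sigma:=N$ times every edge of $G_{s,t}$ \emph{except} the six triangle edges. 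I would feed $G'_{s,t}$, equipped with a computed $4$-sequence, to the assumed $n^\varepsilon$-approximation for \mihp{\mathcal H} (with $\varepsilon$ chosen small in terms of the target), extract a path of $G$ from the returned packing, and output the longest path found over all pairs $(s,t)$, or a single edge if that is longer.

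First I would check that $G'_{s,t}$ has twin-width at most~$4$ with a polynomial-time computable $4$-sequence. Contracting, inside each decorating triangle, the two vertices distinct from $\hat s$ (resp.\ $\hat t$) creates no red edge, and turns $G'_{s,t}$ into the graph $G''_{s,t}$ obtained from $G$ by adding a pendant path of length $\sigma+2$ at $s$ and at $t$ and subdividing every original edge $\sigma$ times. This $G''_{s,t}$ is a subdivision of an $(N{+}2)$-vertex graph in which every edge is subdivided at least $\sigma=N\ge 2\log(N{+}2)$ times, so by the result of~\cite{Berge21} invoked in the proof of \cref{thm:inapprox-path} it admits a polynomial-time computable $4$-sequence; prepending the two contractions gives one for $G'_{s,t}$, and $|V(G'_{s,t})|=O(N^3)$.

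Second I would pin down the feasible solutions of \mihp{\mathcal H} on $G'_{s,t}$. Subdividing destroys every triangle of $G$, so the only triangles of $G'_{s,t}$ are the two decorating ones; since every $D_n$ has a triangle at each of its two distinct endpoints, a nonempty $\mathcal H$-packing of $G'_{s,t}$ is a single induced copy of some $D_n$ using both decorating triangles, hence consists of those triangles plus an induced $\hat s$--$\hat t$ path avoiding the triangle tips. As subdivision vertices have degree~$2$, such an induced path traverses each entered subdivided edge completely and visits each vertex once, so it projects to a simple $\hat s$--$\hat t$ path of $G_{s,t}$, necessarily of the form $\hat s,s,\dots,t,\hat t$ for a simple $(s,t)$-path of $G$; conversely, because subdivision kills every potential chord, a simple $(s,t)$-path of $G$ of length $\ell$ yields an induced $D_n$ with $n=(\ell{+}2)(\sigma{+}1)+1$. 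Therefore the optimum of \mihp{\mathcal H} on $G'_{s,t}$ is $(\ell^*_{s,t}{+}2)(\sigma{+}1)+5$, where $\ell^*_{s,t}$ is the length of a longest $(s,t)$-path of $G$, and from any feasible solution of value $V$ one recovers a simple $(s,t)$-path of $G$ of length $\frac{V-5}{\sigma+1}-2$.

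Finally I would close the reduction with a short calculation. With $r:=|V(G'_{s,t})|^{\varepsilon}=O(N^{3\varepsilon})$, the returned packing has value $V\ge \bigl((\ell^*_{s,t}{+}2)(\sigma{+}1)+5\bigr)/r$, so the extracted path has length at least $\frac{\ell^*_{s,t}+2}{r}-\frac{5}{\sigma+1}-2\ge \frac{\ell^*_{s,t}}{r}-3$ for $N$ large. Taking the best over all pairs yields $\ge \frac{\ell^*}{r}-3$ for the longest path $\ell^*$ of $G$; if $\ell^*\ge N^{\varepsilon'}$ then choosing $\varepsilon$ small enough (e.g.\ $\varepsilon\le\varepsilon'/6$) makes this at least $\ell^*/N^{\varepsilon'}$, and if $\ell^*<N^{\varepsilon'}$ a single edge is already an $N^{\varepsilon'}$-approximation; small $N$ is handled by brute force. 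Hence \lpath on general graphs has an $n^{\varepsilon'}$-approximation for every $\varepsilon'>0$. The step I expect to be delicate is the twin-width control: one must route the decorating triangles through \emph{long} pendant paths so that the whole instance becomes a long subdivision and~\cite{Berge21} applies with the clean constant~$4$; and it is the restriction to a fixed endpoint pair (rather than decorating all vertices) that forces an $\mathcal H$-packing to be a single path, which is what preserves the approximation ratio up to a $1+o(1)$ factor.
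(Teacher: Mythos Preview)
Your proof is correct and follows essentially the same route as the paper: for each pair of endpoints, attach decorating triangles, subdivide to kill all other triangles and force bounded twin-width, contract the twin triangle-tips to recover (an induced subgraph of) a long subdivision with a polytime $4$-sequence, and observe that any nontrivial $\mathcal H$-packing is a single decorated path encoding an $(s,t)$-path of $G$. Your version differs only cosmetically—inserting the extra pendants $\hat s,\hat t$ so that the post-contraction graph is an \emph{exact} long subdivision rather than an induced subgraph of one, subdividing $N$ rather than $2\lceil\log(N{+}2)\rceil$ times, and spelling out the approximation arithmetic that the paper leaves implicit by appealing to \cref{thm:inapprox-path}.
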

\begin{proof}
  Let $G$ be any graph.
  For every pair $u \neq v \in V(G)$, define $G_{uv}$ as the graph obtained from $G$ by subdividing all its edges $2 \lceil \log (n+2) \rceil$ times, and adding two adjacent vertices $u', u''$ both adjacent to $u$, and two adjacent vertices $v',v''$ both adjacent to $v$.
  Since there are only two triangles in $G_{uv}$, only one graph of $\mathcal H$ can be present in a (mutually induced) packing.
  Thus \mihp{\mathcal H} is now equivalent to finding a longest path between $u$ and $v$.
  An $n^\varepsilon$-approximation algorithm for this problem would, by~\cref{thm:inapprox-path}, give a similar approximation algorithm for \lpath in general graphs.

  Despite $u',u'',v',v''$, $G_{uv}$ still admits a 4-sequence.
  For instance, first contract $u'$ and $u''$, and contract $v'$ and $v''$; this does not create red edges, and has the same effect as deleting $u''$ and $v''$.
  The obtained graph is an induced subgraph of a $2 \lceil \log (n+2) \rceil$-subdivision (of a graph on at most $n+2$ vertices).
  Hence it admits a polytime computable 4-sequence~\cite{Berge21}.
\end{proof}


\begin{thebibliography}{10}

\bibitem{Alon03}
Noga Alon, Guoli Ding, Bogdan Oporowski, and Dirk Vertigan.
\newblock Partitioning into graphs with only small components.
\newblock {\em J. Comb. Theory, Ser. {B}}, 87(2):231--243, 2003.
\newblock \href {https://doi.org/10.1016/S0095-8956(02)00006-0}
  {\path{doi:10.1016/S0095-8956(02)00006-0}}.

\bibitem{Bansal19}
Nikhil Bansal, Parinya Chalermsook, Bundit Laekhanukit, Danupon Nanongkai, and
  Jesper Nederlof.
\newblock New tools and connections for exponential-time approximation.
\newblock {\em Algorithmica}, 81(10):3993--4009, 2019.
\newblock \href {https://doi.org/10.1007/s00453-018-0512-8}
  {\path{doi:10.1007/s00453-018-0512-8}}.

\bibitem{Berge21}
Pierre Berg{\'{e}}, {\'{E}}douard Bonnet, and Hugues D{\'{e}}pr{\'{e}}s.
\newblock Deciding twin-width at most 4 is {NP}-complete.
\newblock In Mikolaj Bojanczyk, Emanuela Merelli, and David~P. Woodruff,
  editors, {\em 49th International Colloquium on Automata, Languages, and
  Programming, {ICALP} 2022, July 4-8, 2022, Paris, France}, volume 229 of {\em
  LIPIcs}, pages 18:1--18:20. Schloss Dagstuhl - Leibniz-Zentrum f{\"{u}}r
  Informatik, 2022.
\newblock \href {https://doi.org/10.4230/LIPIcs.ICALP.2022.18}
  {\path{doi:10.4230/LIPIcs.ICALP.2022.18}}.

\bibitem{twin-width8}
{\'{E}}douard Bonnet, Dibyayan Chakraborty, Eun~Jung Kim, Noleen K{\"{o}}hler,
  Raul Lopes, and St{\'{e}}phan Thomass{\'{e}}.
\newblock Twin-width {VIII:} delineation and win-wins.
\newblock {\em CoRR}, abs/2204.00722, 2022.
\newblock \href {http://arxiv.org/abs/2204.00722} {\path{arXiv:2204.00722}},
  \href {https://doi.org/10.48550/arXiv.2204.00722}
  {\path{doi:10.48550/arXiv.2204.00722}}.

\bibitem{twin-width2}
{\'{E}}douard Bonnet, Colin Geniet, Eun~Jung Kim, St{\'{e}}phan Thomass{\'{e}},
  and R{\'{e}}mi Watrigant.
\newblock Twin-width {II:} small classes.
\newblock In {\em Proceedings of the 2021 ACM-SIAM Symposium on Discrete
  Algorithms (SODA)}, pages 1977--1996, 2021.
\newblock \href {https://doi.org/10.1137/1.9781611976465.118}
          {\path{doi:10.1137/1.9781611976465.118}}.

\bibitem{twin-width2-arxiv}
{\'{E}}douard Bonnet, Colin Geniet, Eun~Jung Kim, St{\'{e}}phan Thomass{\'{e}},
  and R{\'{e}}mi Watrigant.
  \newblock Twin-width {II:} small classes.
  \newblock {\em CoRR}, abs/2006.09877, 2020.
\newblock URL: \url{http://arxiv.org/abs/2006.09877}, \href
  {http://arxiv.org/abs/2006.09877} {\path{arXiv:2006.09877}}.


\newblock In {\em Proceedings of the 2021 ACM-SIAM Symposium on Discrete
  Algorithms (SODA)}, pages 1977--1996, 2021.
\newblock \href {https://doi.org/10.1137/1.9781611976465.118}
  {\path{doi:10.1137/1.9781611976465.118}}.
         

\bibitem{twin-width3}
{\'{E}}douard Bonnet, Colin Geniet, Eun~Jung Kim, St{\'{e}}phan Thomass{\'{e}},
  and R{\'{e}}mi Watrigant.
\newblock Twin-width {III:} max independent set, min dominating set, and
  coloring.
\newblock In Nikhil Bansal, Emanuela Merelli, and James Worrell, editors, {\em
  48th International Colloquium on Automata, Languages, and Programming,
  {ICALP} 2021, July 12-16, 2021, Glasgow, Scotland (Virtual Conference)},
  volume 198 of {\em LIPIcs}, pages 35:1--35:20. Schloss Dagstuhl -
  Leibniz-Zentrum f{\"{u}}r Informatik, 2021.
\newblock \href {https://doi.org/10.4230/LIPIcs.ICALP.2021.35}
  {\path{doi:10.4230/LIPIcs.ICALP.2021.35}}.

\bibitem{twin-width4}
{\'{E}}douard Bonnet, Ugo Giocanti, Patrice~Ossona de~Mendez, Pierre Simon,
  St{\'{e}}phan Thomass{\'{e}}, and Szymon Torunczyk.
\newblock Twin-width {IV:} ordered graphs and matrices.
\newblock In Stefano Leonardi and Anupam Gupta, editors, {\em {STOC} '22: 54th
  Annual {ACM} {SIGACT} Symposium on Theory of Computing, Rome, Italy, June 20
  - 24, 2022}, pages 924--937. {ACM}, 2022.
\newblock \href {https://doi.org/10.1145/3519935.3520037}
  {\path{doi:10.1145/3519935.3520037}}.

\bibitem{twin-width6}
{\'E}douard Bonnet, Eun~Jung Kim, Amadeus Reinald, and St{\'e}phan
  Thomass{\'e}.
\newblock Twin-width {VI:} the lens of contraction sequences.
\newblock In {\em Proceedings of the 2022 Annual ACM-SIAM Symposium on Discrete
  Algorithms (SODA)}, pages 1036--1056. SIAM, 2022.

\bibitem{twin-width1}
{\'{E}}douard Bonnet, Eun~Jung Kim, St{\'{e}}phan Thomass{\'{e}}, and
  R{\'{e}}mi Watrigant.
\newblock Twin-width {I:} tractable {FO} model checking.
\newblock {\em J. {ACM}}, 69(1):3:1--3:46, 2022.
\newblock \href {https://doi.org/10.1145/3486655} {\path{doi:10.1145/3486655}}.

\bibitem{BonnetLP18}
{\'{E}}douard Bonnet, Michael Lampis, and Vangelis~Th. Paschos.
\newblock Time-approximation trade-offs for inapproximable problems.
\newblock {\em J. Comput. Syst. Sci.}, 92:171--180, 2018.
\newblock \href {https://doi.org/10.1016/j.jcss.2017.09.009}
  {\path{doi:10.1016/j.jcss.2017.09.009}}.

\bibitem{Bourgeois11}
Nicolas Bourgeois, Bruno Escoffier, and Vangelis~Th. Paschos.
\newblock Approximation of max independent set, min vertex cover and related
  problems by moderately exponential algorithms.
\newblock {\em Discret. Appl. Math.}, 159(17):1954--1970, 2011.
\newblock \href {https://doi.org/10.1016/j.dam.2011.07.009}
  {\path{doi:10.1016/j.dam.2011.07.009}}.

\bibitem{Chalermsook13b}
Parinya Chalermsook, Bundit Laekhanukit, and Danupon Nanongkai.
\newblock Graph products revisited: Tight approximation hardness of induced
  matching, poset dimension and more.
\newblock In Sanjeev Khanna, editor, {\em Proceedings of the Twenty-Fourth
  Annual {ACM-SIAM} Symposium on Discrete Algorithms, {SODA} 2013, New Orleans,
  Louisiana, USA, January 6-8, 2013}, pages 1557--1576. {SIAM}, 2013.
\newblock \href {https://doi.org/10.1137/1.9781611973105.112}
  {\path{doi:10.1137/1.9781611973105.112}}.

\bibitem{Chalermsook13}
Parinya Chalermsook, Bundit Laekhanukit, and Danupon Nanongkai.
\newblock Independent set, induced matching, and pricing: Connections and tight
  (subexponential time) approximation hardnesses.
\newblock In {\em 54th Annual {IEEE} Symposium on Foundations of Computer
  Science, {FOCS} 2013, 26-29 October, 2013, Berkeley, CA, {USA}}, pages
  370--379, 2013.
\newblock \href {https://doi.org/10.1109/FOCS.2013.47}
  {\path{doi:10.1109/FOCS.2013.47}}.

\bibitem{Cygan08}
Marek Cygan, Lukasz Kowalik, Marcin Pilipczuk, and Mateusz Wykurz.
\newblock Exponential-time approximation of hard problems.
\newblock {\em CoRR}, abs/0810.4934, 2008.
\newblock URL: \url{http://arxiv.org/abs/0810.4934}, \href
  {http://arxiv.org/abs/0810.4934} {\path{arXiv:0810.4934}}.

\bibitem{Dinur14}
Irit Dinur and David Steurer.
\newblock Analytical approach to parallel repetition.
\newblock In David~B. Shmoys, editor, {\em Symposium on Theory of Computing,
  {STOC} 2014, New York, NY, USA, May 31 - June 03, 2014}, pages 624--633.
  {ACM}, 2014.
\newblock \href {https://doi.org/10.1145/2591796.2591884}
  {\path{doi:10.1145/2591796.2591884}}.

\bibitem{Feige91}
Uriel Feige, Shafi Goldwasser, L{\'{a}}szl{\'{o}} Lov{\'{a}}sz, Shmuel Safra,
  and Mario Szegedy.
\newblock Approximating clique is almost {NP}-complete (preliminary version).
\newblock In {\em 32nd Annual Symposium on Foundations of Computer Science, San
  Juan, Puerto Rico, 1-4 October 1991}, pages 2--12. {IEEE} Computer Society,
  1991.
\newblock \href {https://doi.org/10.1109/SFCS.1991.185341}
  {\path{doi:10.1109/SFCS.1991.185341}}.

\bibitem{Gabow08}
Harold~N. Gabow and Shuxin Nie.
\newblock Finding a long directed cycle.
\newblock {\em {ACM} Trans. Algorithms}, 4(1):7:1--7:21, 2008.
\newblock \href {https://doi.org/10.1145/1328911.1328918}
  {\path{doi:10.1145/1328911.1328918}}.

\bibitem{Gajarsky22}
Jakub Gajarsk{\'{y}}, Michal Pilipczuk, Wojciech Przybyszewski, and Szymon
  Torunczyk.
\newblock Twin-width and types.
\newblock In Mikolaj Bojanczyk, Emanuela Merelli, and David~P. Woodruff,
  editors, {\em 49th International Colloquium on Automata, Languages, and
  Programming, {ICALP} 2022, July 4-8, 2022, Paris, France}, volume 229 of {\em
  LIPIcs}, pages 123:1--123:21. Schloss Dagstuhl - Leibniz-Zentrum f{\"{u}}r
  Informatik, 2022.
\newblock \href {https://doi.org/10.4230/LIPIcs.ICALP.2022.123}
  {\path{doi:10.4230/LIPIcs.ICALP.2022.123}}.

\bibitem{GJ79}
Michael~R. Garey and David~S. Johnson.
\newblock {\em Computers and Intractability: {A} Guide to the Theory of
  NP-Completeness}.
\newblock W. H. Freeman, 1979.

\bibitem{Guillemot14}
Sylvain Guillemot and D{\'{a}}niel Marx.
\newblock Finding small patterns in permutations in linear time.
\newblock In {\em Proceedings of the Twenty-Fifth Annual {ACM-SIAM} Symposium
  on Discrete Algorithms, {SODA} 2014, Portland, Oregon, USA, January 5-7,
  2014}, pages 82--101, 2014.
\newblock \href {https://doi.org/10.1137/1.9781611973402.7}
  {\path{doi:10.1137/1.9781611973402.7}}.

\bibitem{Halldorsson93}
Magn{\'{u}}s~M. Halld{\'{o}}rsson.
\newblock Approximating the minimum maximal independence number.
\newblock {\em Inf. Process. Lett.}, 46(4):169--172, 1993.
\newblock \href {https://doi.org/10.1016/0020-0190(93)90022-2}
  {\path{doi:10.1016/0020-0190(93)90022-2}}.

\bibitem{Hastad96}
Johan H{\aa}stad.
\newblock Clique is hard to approximate within $n^{1-\epsilon}$.
\newblock In {\em 37th Annual Symposium on Foundations of Computer Science,
  {FOCS} '96, Burlington, Vermont, USA, 14-16 October, 1996}, pages 627--636,
  1996.
\newblock \href {https://doi.org/10.1109/SFCS.1996.548522}
  {\path{doi:10.1109/SFCS.1996.548522}}.

\bibitem{planar-tww}
Petr Hliněný.
\newblock Twin-width of planar graphs is at most 9, 2022.
\newblock URL: \url{https://arxiv.org/abs/2205.05378}, \href
  {https://doi.org/10.48550/ARXIV.2205.05378}
  {\path{doi:10.48550/ARXIV.2205.05378}}.

\bibitem{Impagliazzo01}
Russell Impagliazzo and Ramamohan Paturi.
\newblock {On the Complexity of k-SAT}.
\newblock {\em J. Comput. Syst. Sci.}, 62(2):367--375, 2001.
\newblock \href {https://doi.org/10.1006/jcss.2000.1727}
  {\path{doi:10.1006/jcss.2000.1727}}.

\bibitem{sparsification}
Russell Impagliazzo, Ramamohan Paturi, and Francis Zane.
\newblock Which problems have strongly exponential complexity?
\newblock {\em J. Comput. Syst. Sci.}, 63(4):512--530, 2001.
\newblock \href {https://doi.org/10.1006/jcss.2001.1774}
  {\path{doi:10.1006/jcss.2001.1774}}.

\bibitem{Karger97}
David~R. Karger, Rajeev Motwani, and G.~D.~S. Ramkumar.
\newblock On approximating the longest path in a graph.
\newblock {\em Algorithmica}, 18(1):82--98, 1997.
\newblock \href {https://doi.org/10.1007/BF02523689}
  {\path{doi:10.1007/BF02523689}}.

\bibitem{Kotzig55}
Anton Kotzig.
\newblock Contribution to the theory of eulerian polyhedra.
\newblock {\em Mat. Cas. SAV (Math. Slovaca)}, 5:111--113, 1955.

\bibitem{Kozma12}
L{\'{a}}szl{\'{o}} Kozma.
\newblock Minimum average distance triangulations.
\newblock In Leah Epstein and Paolo Ferragina, editors, {\em Algorithms - {ESA}
  2012 - 20th Annual European Symposium, Ljubljana, Slovenia, September 10-12,
  2012. Proceedings}, volume 7501 of {\em Lecture Notes in Computer Science},
  pages 695--706. Springer, 2012.
\newblock \href {https://doi.org/10.1007/978-3-642-33090-2\_60}
  {\path{doi:10.1007/978-3-642-33090-2\_60}}.

\bibitem{Kratsch22}
Stefan Kratsch, Florian Nelles, and Alexandre Simon.
\newblock On triangle counting parameterized by twin-width.
\newblock {\em CoRR}, abs/2202.06708, 2022.
\newblock URL: \url{https://arxiv.org/abs/2202.06708}, \href
  {http://arxiv.org/abs/2202.06708} {\path{arXiv:2202.06708}}.

\bibitem{Lund93}
Carsten Lund and Mihalis Yannakakis.
\newblock The approximation of maximum subgraph problems.
\newblock In Andrzej Lingas, Rolf~G. Karlsson, and Svante Carlsson, editors,
  {\em Automata, Languages and Programming, 20nd International Colloquium,
  ICALP93, Lund, Sweden, July 5-9, 1993, Proceedings}, volume 700 of {\em
  Lecture Notes in Computer Science}, pages 40--51. Springer, 1993.
\newblock \href {https://doi.org/10.1007/3-540-56939-1\_60}
  {\path{doi:10.1007/3-540-56939-1\_60}}.

\bibitem{Ned08}
Jesper Nederlof.
\newblock Inclusion exclusion for hard problems, 2008.

\bibitem{PilipczukSZ22}
Michal Pilipczuk, Marek Sokolowski, and Anna Zych{-}Pawlewicz.
\newblock Compact representation for matrices of bounded twin-width.
\newblock In Petra Berenbrink and Benjamin Monmege, editors, {\em 39th
  International Symposium on Theoretical Aspects of Computer Science, {STACS}
  2022, March 15-18, 2022, Marseille, France (Virtual Conference)}, volume 219
  of {\em LIPIcs}, pages 52:1--52:14. Schloss Dagstuhl - Leibniz-Zentrum
  f{\"{u}}r Informatik, 2022.
\newblock \href {https://doi.org/10.4230/LIPIcs.STACS.2022.52}
  {\path{doi:10.4230/LIPIcs.STACS.2022.52}}.

\bibitem{Zuckerman07}
David Zuckerman.
\newblock Linear degree extractors and the inapproximability of max clique and
  chromatic number.
\newblock {\em Theory of Computing}, 3(1):103--128, 2007.

\end{thebibliography}
\end{document}